\newcommand{\shortversion}[1]{}
\newtheorem{thm}{Theorem}[section]
\newtheorem{lem}[thm]{Lemma}
\newtheorem{prop}[thm]{Proposition}
\newtheorem{clm}[thm]{Claim}
\newtheorem{defn}[thm]{Definition}
\newtheorem{Casex}[thm]{Case}
\crefname{thm}{Theorem}{Theorems}
\crefname{lem}{Lemma}{Lemmas}
\crefname{prop}{Proposition}{Propositions}
\crefname{clm}{Claim}{Claims}
\newcommand{\NP}{\ensuremath{\mathsf{NP}}}
\newcommand{\tw}{\textup{tw}}
\newcommand{\ETH}{\textup{ETH}}
\newcommand{\mwc}{\textup{\textsc{Multiway Cut}}}
\newcommand{\mc}{\textup{\textsc{Multicut}}}
\newcommand{\mcut}[1]{\textup{\textsc{Multicut}(\ensuremath{#1})}}
\newcommand{\gcut}{\textup{\textsc{Group 3-Terminal Cut}}}
\newcommand{\calH}{\mathcal{H}}
\newcommand{\mcH}{\mcut{\calH}}
\newcommand{\go}{\vec{g}}
\newcommand{\pl}{\pi}
\newcommand{\planar}{\textup{\sf planar}}
\newcommand{\kplanararg}[1]{\planar+#1 e}
\newcommand{\kplanar}{\kplanararg{\pl}}
\title{
	Multicut Problems in Almost-Planar Graphs:\\
	The Dependency of Complexity on the Demand Pattern\footnote{Most results contained in this article have been announced at ESA 2025.}
        }
\titlerunning{Multicut Problems in Almost-Planar Graphs} 
\authorrunning{ F.H\"orsch and D.Marx} 
\author{Florian H\"orsch}{CISPA Helmholtz Center for Information Security, Germany}{florian.hoersch@cispa.de}{https://orcid.org/0000-0002-5410-613X}{}
\author{D\'aniel Marx}{CISPA Helmholtz Center for Information Security, Germany}{marx@cispa.de}{https://orcid.org/0000-0002-5686-8314}{}
\keywords{MultiCut, Multiway Cut, Parameterized Complexity, Tight Bounds, Embedded Graph, Planar Graph, Crossing Number} 
\newcommand{\omitnow}[1]{}
\begin{document}

\maketitle
\begin{abstract}
  Given a graph $G$, a set $T$ of terminal vertices, and a demand graph $H$ on $T$, the \textsc{Multicut} problem asks for a set of edges of minimum weight that separates the pairs of terminals specified by the edges of $H$. 
  The \textsc{Multicut} problem can be solved in polynomial time if the number of terminals and the genus of the graph is bounded (Colin de Verdi{\`{e}}re [\textit{Algorithmica, 2017}]). Restricting the possible demand graphs in the input leads to special cases of \textsc{Multicut} whose complexity might be different from the general problem.

  Focke et al.~[SoCG 2024] systematically characterized which special cases of \textsc{Multicut} are fixed-parameter tractable parameterized by the number of terminals on planar graphs. Moreover, extending these results beyond planar graphs, they precisely determined how the parameter genus influences the complexity and presented partial results of this form for graphs that can be made planar by the deletion of $\pi$ edges. Continuing this line of work, we complete the picture on how this parameter $\pi$ influences the complexity of different special cases and precisely determine the influence of the crossing number, another parameter measuring closeness to planarity.

  Formally, let $\mathcal{H}$ be any class of graphs (satisfying a mild closure property) and let $\textsc{Multicut}(\mathcal{H})$ be the special case when the demand graph $H$ is in $\mathcal{H}$. Our first main result is showing that if $\mathcal{H}$ has the combinatorial property of having \emph{bounded distance to extended bicliques}, then $\textsc{Multicut}(\mathcal{H})$ on \emph{unweighted} graphs is FPT parameterized by the number $t$ of terminals and $\pi$.
For the case when $\mathcal{H}$ does not have this combinatorial property,
    Focke et al.~[SoCG 2024] showed that  $O(\sqrt{t})$ is essentially the best possible exponent of the running time; together with our result, this gives a complete understanding of how the parameter $\pi$ influences complexity on unweighted graphs. 
    Our second main result is giving an algorithm whose existence shows that that the parameter crossing number behaves analogously if we consider $\textsc{Multicut}(\mathcal{H})$ on \emph{weighted} graphs.\end{abstract}

\section{Introduction}\label{sec:intro}

Most NP-hard problems remain NP-hard when restricted to planar graphs, unless they become trivial or irrelevant on planar graphs. There are very few NP-hard problems that are polynomial-time solvable on planar graphs for some nontrivial and interesting reason. One such case is the \mwc\ problem: given a graph $G$ and a set $T\subseteq V(G)$ of \emph{terminals}, the task is to find a \emph{multiway cut} $S$ of minimum weight, that is, a set $S\subseteq E(G)$ such that every component of $G\setminus S$ contains at most one terminal. Dahlhaus et al.~\cite{DBLP:journals/siamcomp/DahlhausJPSY94} showed that $\mwc$ in general graphs is \NP-hard\ even for three terminals, while it is polynomial-time solvable on planar graphs for every fixed number $|T|$ of terminals. The exponent of $n$ in the algorithm of Dahlhaus et al.~\cite{DBLP:journals/siamcomp/DahlhausJPSY94} is $O(|T|)$. Later, the dependence was improved to $O(\sqrt{|T|})$ \cite{DBLP:conf/icalp/KleinM12,ECDV}, which was shown to be optimal under the Exponential-Time Hypothesis (ETH) \cite{Marx12,Cohen-AddadVMM21}. 
The results were extended to bounded-genus graphs, again with an essentially best possible form of the exponent \cite{Cohen-AddadVMM21,ECDV}.

\mc\ is a generalization of \mwc\ where not all terminals need to be separated from each other, but the input specifies which pairs of terminals need to be separated (and we do not care whether other pairs are separated or not). \omitnow{We use the following formalization of the problem:}

\begin{center}

\begin{tabular}{|rl|}
  \hline
  \mc &\\ 
  {\bf Input:}& An edge-weighted graph $G$ 
                together with a set $T\subseteq V(G)$ of \emph{terminals};\\
  &and a \emph{demand graph} $H$ with $V(H)=T$.\\
		{\bf Output:} & 
		A minimum-weight set $S\subseteq E(G)$ such that $u$ and $v$ are in distinct components\\ & of $G\setminus S$ whenever $uv$ is an edge of $H$.\\
                \hline
              \end{tabular}
            \end{center}
Let us observe that \mwc\ is the special case of \mc\ when $H$ is a complete graph on $T$.
Colin de  Verdi{\`{e}}re \cite{ECDV} presented an algorithm for \mc\ where the exponent of the running time is $O(\sqrt{|T|})$, similarly to \mwc.
However, it is no longer clear anymore that this algorithm is a tight, optimal result and cannot be improved in some cases: it could be that there are special cases, special type of demand graphs where significantly better algorithms are possible. For example, if $H$ is a biclique (complete bipartite graph), then $\mc$ can be solved in polynomial time by a reduction to a minimum weight $s-t$ cut problem. A very natural question to consider is the case when $H$ is a complete 3-partite graph. This special case is the \gcut\ problem: given a graph $G$ with three sets of terminals $T_1$, $T_2$, $T_3$, the task is to find a set $S$ of edges of minimum total weight such that $G\setminus S$ has no path between any vertex of $T_i$ and any vertex of $T_j$ for all distinct $i,j\in [3]$. While a general \mc\ algorithm solves \gcut\ with the exponent of the running time being the square root of the number of terminals, the lower bounds for \mwc\ do not imply that this dependence is optimal.

            Focke et al.~\cite{focke2025journal} systematically analyzed which types of demand graphs make the problem more tractable. The question was explored in the following setting. Let $\calH$ be a class of graphs (i.e., a set of graphs closed under isomorphism). Then we denote by \mcH\ the special case of \mc\ that contains only instances $(G,H)$ where $H\in \calH$. For example, if $\calH$ is the class of cliques, then \mcH\ is exactly \mwc.

Ideally, for every class $\calH$ of graphs, we would like to understand the best possible running time of \mcH\ on planar graphs. In order to make the question more robust, Focke et al.~\cite{focke2025journal} considered only classes $\calH$ satisfying a mild closure property. We say that $H'$ is a \emph{projection} of $H$ if $H'$ can be obtained from $H$ by deleting vertices and identifying pairs of independent vertices. Moreover, a class of graphs $\calH$ is \emph{projection-closed} if, whenever $H\in \calH$ and $H'$ is a projection of $H$, then $H'\in\calH$ also holds. The intuition is that an instance having demand graph $H'$ can be easily simulated by an instance having demand graph $H$, thus it is reasonable to assume we only consider classes of demand graphs closed under this operation.

Let us consider some easy cases first. As we mentioned before, the \mc\ problem can be solved in polynomial time by a reduction to $s-t$ minimum cut if the demand graph is a biclique. Furthermore, if $H$ has only two edges, then a similar reduction is known~\cite{Hu63,DBLP:journals/jct/Seymour79a}. Isolated vertices of $H$ clearly do not play any role in the problem. We say that $H$ is a \emph{trivial pattern} if, after removing isolated vertices, it is either a biclique or has at most two edges.
An \emph{extended biclique} is a graph that consists of a complete bipartite graph together with a set of isolated vertices. 
Let $\mu$ be the minimum number of vertices that need to be deleted to make a graph $H$ an extended biclique. Then we say that $H$ has distance $\mu$ to extended bicliques. Moreover, we say that a graph class $\calH$ has \emph{bounded distance to extended bicliques} if there is a constant $\mu$ such that every $H\in\calH$ has distance at most $\mu$ to extended bicliques.


For planar graphs, Focke et al.~\cite{focke_et_al:LIPIcs.SoCG.2024.57} give the following characterization:\omitnow{ of the complexity of \mcH:}

\begin{thm}[\cite{focke_et_al:LIPIcs.SoCG.2024.57}]\label{thm:mainonlyplanar}
  Let $\calH$ be a computable projection-closed class of graphs. Then the following holds for edge-weighted \mcH\ on planar graphs.
  \begin{enumerate}
  \item If $\calH$ has bounded distance to extended bicliques, then there is an $f(t)n^{O(1)}$ time algorithm.
    \item Otherwise,
    \begin{enumerate}
    \item There is an $f(t)n^{O(\sqrt{t})}$ time algorithm.
    \item Assuming $\ETH$, there is a universal constant $\alpha>0$ such that for any fixed choice of $t\ge 3$, there is no $O(n^{\alpha \sqrt{t}})$ algorithm, even when restricted to unweighted instances with at most $t$ terminals.
    \end{enumerate}
  \end{enumerate}
\end{thm}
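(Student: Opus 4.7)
The plan is to treat the three parts of the statement separately. Claim~2(a) is immediate from the $f(t)n^{O(\sqrt{t})}$ algorithm for \mc\ on planar graphs of Colin de Verdi\`ere~\cite{ECDV}, which applies to arbitrary demand graphs; no assumption on $\calH$ is needed. So the real work lies in proving the improved algorithm of Claim~1 and the matching $\ETH$-based lower bound of Claim~2(b).

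For Claim~1, the starting observation is that when $H$ is an extended biclique with parts $A$ and $B$ (plus isolated vertices, which impose no demands), \mcH\ reduces to a single minimum $s$--$t$ cut: contract $A$ into a super-source and $B$ into a super-sink; since $A$ and $B$ are each independent in $H$, no intra-side separation is required. Now suppose $H$ has distance at most $\mu$ to extended bicliques via a vertex set $X \subseteq V(H)$ with $|X|\le \mu$. I would then branch on (i) the partition of $X$ into classes of vertices destined to lie in the same component of $G\setminus S$, and (ii) for each class, whether it is merged with the $A$-side, merged with the $B$-side, or placed in a fresh component. Both choices are bounded by a function of $\mu=f(t)$ alone. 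Each leaf of the branching produces a bounded-terminal multiterminal-cut problem on an extended biclique, which is polynomial-time solvable with exponent independent of $t$, giving an $f(t)n^{O(1)}$ algorithm overall.

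For Claim~2(b), the goal is to transfer the planar-\mwc\ lower bound of $n^{\Omega(\sqrt{t})}$ under $\ETH$~\cite{Marx12,Cohen-AddadVMM21} to every projection-closed $\calH$ without bounded distance to extended bicliques. The key intermediate step is a combinatorial dichotomy: if $\calH$ has unbounded distance to extended bicliques, then for every $\mu$ there is some $H\in\calH$ such that deleting any $\mu$ vertices still leaves a non-biclique residue, hence an induced matching of size two or a triangle or a similar obstruction on top of a large independent set. A Ramsey-type extraction then gives, for arbitrarily large parameters, canonical ``universal'' subpatterns; together with projection closure, one obtains inside $\calH$ demand graphs that project onto large complete multipartite graphs, or onto a large biclique together with an additional disjoint edge. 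These patterns are exactly what is needed to host the standard planar \gt\ reduction, which produces the $n^{\Omega(\sqrt{t})}$ lower bound with a universal constant $\alpha$ independent of $\calH$.

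The main obstacle is the dichotomy in Claim~2(b): one must extract from the purely combinatorial failure of ``bounded distance to extended bicliques'' a concrete family of obstructions that (a) is stable under projection and (b) is rich enough to embed $\sqrt{t}$-tight Grid-Tiling gadgets into planar host graphs with a uniform loss. The Claim~1 algorithm is essentially a bounded branching layered on top of min-cut, and Claim~2(a) is a direct citation; it is the structural classification driving the hardness that contains the real content of the theorem.
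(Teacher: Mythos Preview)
This theorem is cited from \cite{focke_et_al:LIPIcs.SoCG.2024.57} and not proved in the present paper; Case~1 is recovered here only as the $\pi=0$ (resp.\ $cr=0$) special case of Theorems~\ref{extraedges} and~\ref{maincr}. So the relevant comparison is to those proofs.

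Your Claim~1 argument has a genuine gap. After guessing the partition of $X$ and the side assignment of each class, the residual problem is \emph{not} bounded-terminal, nor is its demand graph an extended biclique: all $|B_1|+|B_2|+|I|$ terminals remain, and on top of the $B_1\times B_2$ biclique you still carry the arbitrary demand edges from $X$-vertices to \emph{individual} terminals in $B_1\cup B_2\cup I$. Contracting $B_1$ and $B_2$ to super-terminals is unsafe once $X$ is present: if $x\in X$ has a demand edge to $a_2\in B_1$ but in the optimum shares a component with some other $a_1\in B_1$, contracting $B_1$ forces $x$ to be simultaneously with and separated from the super-terminal. Without contraction you are back to unbounded-terminal multicut on a planar graph. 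Concretely, already the single branch ``$X=\{x\}$, fresh component'' with $x$ adjacent in $H$ to all of $B_1\cup B_2$ is exactly \gcut\ on planar graphs---precisely the motivating example the introduction gives for why Case~1 is nontrivial and does not reduce to $s$--$t$ cut.

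The actual mechanism, both in \cite{focke_et_al:LIPIcs.SoCG.2024.57} and in this paper's generalizations (see \cref{disttw} and \cref{gvugzgzu}), is topological rather than combinatorial branching: one shows that in any inclusion-minimal minimum multicut dual, the faces containing a terminal of $X$ form an $O(1)$-dominating set, hence the dual has treewidth $O(\sqrt{\mu})$ by \cref{domplanar}, after which the enumeration algorithm of \cref{algogenext} runs with exponent $O(\sqrt{\mu})$. No branching over $X$ is used for Case~1.

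Claim~2(a) is indeed a direct citation. Your Claim~2(b) sketch is in the right spirit---projection closure plus an extraction of hard subpatterns---but the concrete obstruction family and the embedding of the lower-bound gadgets into planar hosts need considerably more than ``Ramsey-type extraction''; this is where the technical weight of \cite{focke_et_al:LIPIcs.SoCG.2024.57} lies.
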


That is, depending on whether $\calH$ has bounded distance to extended bicliques, the number $t$ of terminals has to appear in the exponent. Focke et al.~\cite{focke2025journal} precisely characterized how the complexity changes if we move from planar graphs to bounded-genus graphs. As we can see, the parameter Euler genus $g$ can appear in the exponent in different ways, depending on $\calH$.

\begin{thm}[\cite{focke2025journal}]\label{thm:maingenus}
  Let $\calH$ be a computable projection-closed class of graphs. Then the following holds for edge-weighted \mcH.
  \begin{enumerate}
  \item If every graph in $\calH$ is a trivial pattern, then there is a  polynomial time algorithm.
  \item Otherwise, if $\calH$ has bounded distance to extended bicliques, then
    \begin{enumerate}
       \item There is an $f(g,t)n^{O(g)}$ time algorithm.
      \item Assuming $\ETH$, there is a universal constant $\alpha>0$ such that for any fixed choice of $\go\ge 0$, there is no $O(n^{\alpha (\go+1)/\log(\go+2)})$ algorithm, even when restricted to unweighted instances with orientable genus at most $\go$ and $t=3$ terminals.
      \end{enumerate} 
    \item Otherwise,
    \begin{enumerate}
    \item There is an $f(g,t)n^{O(\sqrt{g^2+gt+t})}$ time algorithm.
    \item Assuming $\ETH$, there is a universal constant $\alpha>0$ such that for any fixed choice of $\go\ge 0$ and $t\ge 3$, there is no $O(n^{\alpha \sqrt{\go^2+\go t+t}/\log(\go+t)})$ algorithm, even when restricted to unweighted instances with orientable genus at most $\go$ and at most $t$ terminals.
    \end{enumerate}
  \end{enumerate}
\end{thm}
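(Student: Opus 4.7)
\medskip
\noindent\textbf{Proof plan.} The proof naturally splits along the trichotomy. For part~1, I would observe that biclique demands reduce in polynomial time to a single $s$--$t$ minimum cut by contracting each side to a super-terminal, while demand graphs with at most two edges are handled by the classical algorithms of Hu~\cite{Hu63} and Seymour~\cite{DBLP:journals/jct/Seymour79a}; isolated vertices of $H$ play no role. These reductions are agnostic to the genus, so part~1 follows at once. The algorithmic sides of parts~2a and~3a both rest on Colin de Verdi\`ere's algorithm for \mc\ on bounded-genus graphs, whose running time is controlled by the maximum topological complexity an optimal cut can attain on the surface. In the general setting of part~3, the cut viewed as a family of dual curves on a genus-$\go$ surface whose $t$ terminals are represented as distinguished faces has Euler-characteristic-type complexity of order $\sqrt{\golong}$, and enumerating topological types of such families in $n^{O(\sqrt{\golong})}$ time yields~3a. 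For part~2a I would exploit the additional structural restriction: after branching over how the at most $\mu$ non-biclique-like terminals behave, at a cost of $f(\mu,t)$ branches, the residual problem is essentially to separate two blocks of terminals on a bounded-genus surface; such a two-sided cut admits only $n^{O(\go)}$ homotopically inequivalent realisations, which can be enumerated directly, shaving the $\sqrt{t}$ overhead present in~3a.

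The hardness statements~2b and~3b I would obtain by reductions from \gt\ under ETH. For part~3b, I would reduce from a $k\times k$ \gt\ instance with $k=\Theta(\sqrt{\golong})$, realising the $k^2$ cells by a mixture of $t$ ``terminal columns'' and $\go$ ``handle columns'', with each handle of the surface contributing an additional dimension of coupling between the cells. For part~2b with $t=3$ the same template specialises to a purely topological encoding using only the handles to simulate grid rows, producing a grid of side length $\Theta((\go+1)/\log(\go+2))$; the logarithmic factor is the standard per-gadget overhead when simulating \gt\ inside \gcut\ with only three terminals available.

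The main obstacle, I would expect, is the hardness construction in part~3b: designing a surface embedding in which a single \mcH\ instance with few terminals encodes a high-dimensional grid tiling requires carefully engineering handles to simulate additional ``virtual terminals'' without breaking projection-closedness of $\calH$, and the tight $\sqrt{\golong}$ scaling demands that the number of handles and the number of terminals be balanced against each other precisely. Matching this construction with the algorithmic upper bound in part~3a is also delicate, since one must prove that \emph{every} optimal cut has topological complexity of exactly this order, no matter how rich $\calH$ is.
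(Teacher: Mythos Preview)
This theorem is not proved in the present paper at all: it is quoted verbatim from Focke et al.~\cite{focke_et_al:LIPIcs.SoCG.2024.57} as background, with the citation in the theorem header, and the paper contains no proof of it. So there is no ``paper's own proof'' to compare your proposal against.

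That said, as a sketch of the argument in the cited work your plan is broadly in the right spirit but too impressionistic in places. Part~1 is fine. For part~3a, the actual mechanism is not an enumeration of ``topological types of families of curves'' directly, but rather a bound on the treewidth of the multicut dual on a genus-$g$ surface with $O(t)$ faces; the $O(\sqrt{g^2+gt+t})$ bound is a (nontrivial) treewidth bound for such embedded graphs, and the algorithm is then a dynamic program over a tree decomposition of the dual. For part~2a, your ``$n^{O(g)}$ homotopy classes of two-sided cuts'' heuristic glosses over the real work: one needs to show that after guessing the behaviour of the $\mu$ exceptional terminals, the remaining dual has treewidth $O(g)$ rather than $O(\sqrt{g})$ or $O(g+\sqrt{t})$, and this requires a structural argument about minimal multicut duals, not just a count of homotopy classes. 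For the lower bounds, your description of~2b is off: the $\log$ loss is not a ``per-gadget overhead'' in simulating \gt\ inside \gcut, but arises in the standard way when composing ETH-based lower bounds through intermediate problems; and in~3b the phrase ``handles contributing an additional dimension of coupling'' is a placeholder rather than a construction. None of this is fatal for a high-level plan, but if you were actually writing the proof you would need to replace each of these phrases by a concrete lemma.
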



In an earlier conference version, Focke et al.~\cite{focke_et_al:LIPIcs.SoCG.2024.57} also considered a much more restricted extension of planar graphs than bounded-genus graphs: graphs that can be made planar by deleting a few edges. We use $\kplanar$ for the class of graphs that can be made planar by removing at most $\pl$ edges. Given an instance $(G,H)$, we use $\pl$ for the smallest integer such that $G$ is in $\kplanar$.

Observe that in Theorem~\ref{thm:maingenus}, the algorithmic results are stated for the edge-weighted version, while the lower bounds are stated for the unweighted version. In fact, edge-weights do not make much difference in Theorem~\ref{thm:maingenus}, as polynomial edge weights can be simulated by parallel edges without changing the genus of the graphs. However, this simulation can easily change the number $\pl$ of edges that needs to be deleted to make the graph planar. This means that if we consider the influence of the parameter $\pl$ on the running time, the edge-weighted and the unweighted problem may behave differently. Focke et al.~\cite{focke2025journal} \omitnow{completely }characterized the influence of this parameter in the edge-weighted case:

\begin{thm}[\cite{focke2025journal}]\label{thm:mainplanarweighted}
  Let $\calH$ be a computable projection-closed class of graphs. Then the following holds for edge-weighted \mcH.
  \begin{enumerate}
  \item If every graph in $\calH$ is a trivial pattern, then there is a polynomial-time algorithm.
  \item Otherwise, if $\calH$ has bounded distance to extended bicliques, then
    \begin{enumerate}
    \item There is an $f(\pl,t)n^{O(\sqrt{\pl})}$ time algorithm.
    \item Assuming $\ETH$, there is a universal constant $\alpha>0$ such that for any fixed choice of $\pl\ge 0$, there is no $O(n^{\alpha \sqrt{\pl}})$ algorithm, even when restricted to instances in $\kplanar$ with $t=3$ terminals.
    \end{enumerate}
  \item Otherwise,
    \begin{enumerate}
    \item There is an $f(\pl,t)n^{O(\sqrt{\pl+t})}$ algorithm.
    \item Assuming $\ETH$, there is a universal constant $\alpha>0$ such that for any fixed choice of $\pl\ge 0$ and $t\ge 3$, there is no $O(n^{\alpha \sqrt{\pl+t}})$ algorithm, even when restricted to $\kplanar$ instances with at most $t$ terminals. 
    \end{enumerate}
  \end{enumerate}
\end{thm}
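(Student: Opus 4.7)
The plan is to treat the three cases separately and, for the algorithmic upper bounds, to reduce $\mcH$ on a $\kplanar$ graph to a bounded number of planar multiterminal-cut instances obtained by deleting the $\pl$ non-planar edges and treating their endpoints as auxiliary terminals; for the conditional lower bounds, to adapt the genus-based reductions behind Theorem~\ref{thm:maingenus} by replacing each handle of the target surface with a single pair of crossing edges.

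Case~(1) is the classical easy case: after removing isolated vertices, $H$ is either a biclique, in which case $\mc$ reduces to a single minimum $s$--$t$ cut, or has at most two edges, in which case $\mc$ is polynomial by the $T$-cut results of Hu~\cite{Hu63} and Seymour~\cite{DBLP:journals/jct/Seymour79a}; neither reduction uses any property of the host graph. For the upper bounds of cases~(2) and~(3), I would first compute in $f(\pl)n^{\bigO(1)}$ time a set $F\subseteq E(G)$ with $|F|\le\pl$ such that $G-F$ is planar, and enlarge the terminal set to $T'=T\cup V(F)$, so $|T'|\le t+2\pl$. Every feasible multicut of $(G,H)$ is obtained by (a)~deciding, for each $e\in F$, whether $e$ is cut, and (b)~computing a minimum-weight edge set of $G-F$ realising a demand graph $H'$ on $T'$ that combines the original separation requirements of $H$ with the side-assignments forced by~(a). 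Enumerating the $2^\pl$ possibilities for~(a) reduces the problem to $f(\pl,t)$ planar $\mcut{\calH'}$-instances on at most $t+2\pl$ terminals, where $\calH'$ is a projection-closed class derived from $\calH$. In case~(3), Colin de Verdi\`ere's planar $\mc$ algorithm applied to each such subinstance yields the promised $f(\pl,t)\cdot n^{\bigO(\sqrt{\pl+t})}$ bound. In case~(2), the bounded distance to extended bicliques lets me afford a further $f(t)$ guesses on the role of the $\bigO(1)$ exceptional vertices, after which the demand on the original terminals becomes an extended biclique; this biclique collapses into a single source/sink pair by identifying each side, and the residual problem is a planar multiterminal cut on $2+2\pl=\bigO(\pl)$ terminals, which the Klein--Marx/Colin de Verdi\`ere algorithm solves in $n^{\bigO(\sqrt{\pl})}$ time.

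For the $\ETH$-based lower bounds I would adapt the $\gt$-type reductions behind Theorem~\ref{thm:maingenus}(2b),(3b) by replacing each handle of the target surface with a single pair of crossing edges. A crossing of two edges is, for cut problems, essentially as powerful as one handle provided the construction only routes a constant number of essential curves through each handle; implemented on the Focke et al.\ reductions this turns a genus-$g$ instance into a $\kplanararg{\bigO(g)}$ instance, converting the genus-exponent $\sqrt{g}$ of case~(2b) into the desired $\sqrt{\pl}$ bound under $t=3$, and converting the $\sqrt{\go^2+\go t+t}$ exponent of case~(3b) into $\sqrt{\pl+t}$ once the cross-term is absorbed into $\pl$.

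The main obstacle, I expect, is verifying that the ``one pair of crossings per handle'' simulation really preserves the combinatorial function of each handle: a handle can carry many essential curves, while a single edge-crossing can route only two, so the genus-reductions have to be re-examined to confirm that their gadgets only ever route $\bigO(1)$ essential curves per handle. On the algorithmic side, the delicate point is the collapse of the biclique demand in case~(2); if an exceptional vertex happens to be placed on the ``biclique side'' of the cut in a nontrivial way, the reduction to $\bigO(\pl)$ terminals breaks and the exponent degrades to $\sqrt{\pl+t}$, so the enumeration over these vertices must be carefully designed to keep the source/sink collapse intact.
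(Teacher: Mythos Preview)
First, a framing point: this theorem is not proved in the present paper at all. It is quoted verbatim from Focke et al.~\cite{focke_et_al:LIPIcs.SoCG.2024.57} as background, so there is no ``paper's own proof'' to compare your proposal against. What the current paper does prove are the new Theorems~\ref{extraedges} and~\ref{maincr}; Theorem~\ref{thm:mainplanarweighted} is prior work.

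That said, let me assess your sketch on its own terms. The algorithmic side is broadly on the right track for case~(3): delete $F$, enumerate $f(\pl,t)$ many guesses for how $T\cup V(F)$ is split among components, and invoke Colin de Verdi\`ere on the resulting planar instance with $O(t+\pl)$ terminals. For case~(2), however, your ``collapse the biclique to two terminals'' step does not work as stated: identifying all of $B_1$ (resp.\ $B_2$) into a single vertex destroys planarity, so you do not arrive at a planar multiterminal cut on $O(\pl)$ terminals. The argument that actually yields the $O(\sqrt{\pl})$ exponent goes through the multicut-dual framework (as in the present paper's Lemma~\ref{disttw} and Theorem~\ref{algogenext}): one shows that any minimal optimum dual has only $O(\mu+\pl)$ ``interesting'' faces and hence treewidth $O(\sqrt{\mu+\pl})=O(\sqrt{\pl})$ when $\mu$ is bounded. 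Your collapse idea is the right intuition but the wrong mechanism.

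The lower-bound plan has a genuine gap. You propose to transport the genus lower bounds of Theorem~\ref{thm:maingenus} by replacing each handle with a pair of crossing edges. But the exponents do not line up: the genus lower bound in case~(2b) is $n^{\Omega(g/\log g)}$, not $n^{\Omega(\sqrt{g})}$, so even a perfect $g\mapsto\pl$ conversion would give $n^{\Omega(\pl/\log\pl)}$, contradicting the $n^{O(\sqrt{\pl})}$ upper bound. A handle is topologically far richer than one extra edge, and no constant number of extra edges simulates it. The correct reduction (the paper spells out the intuition in the paragraph ``\textbf{Edge-weighted $\kplanar$}'' of the introduction) goes the other way: start from a \emph{planar} instance with many terminals and use $\pl$ infinite-weight extra edges to merge groups of terminals, turning a planar \gcut\ instance on $\Theta(\pl)$ terminals into a $\kplanar$ instance with $t=3$ terminals. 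The $n^{\Omega(\sqrt{\pl})}$ bound then comes from the planar $n^{\Omega(\sqrt{t})}$ lower bound of Theorem~\ref{thm:mainonlyplanar}, not from any genus argument.
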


Observe that $\pl$ has a different form of influence on the exponent compared to genus: it is $O(\sqrt{\pl})$ instead of $O(g)$ in the case of bounded distance to extended bicliques, and $O(\sqrt{\pl+t})$ instead of $O(\sqrt{g^2+gt+t})$ in the unbounded distance case. Intuitively, having $\pl$ extra edges is a much more restricted extension of planar graphs than having genus $g$\omitnow{(i.e., having $g$ extra handles)}, and consequently, the parameter $\pl$ has a much weaker influence on the exponent than genus has.

For unweighted graphs, Focke et al.~\cite{focke_et_al:LIPIcs.SoCG.2024.57} showed that the problem is easier than the edge-weighted version and the parameter $\pl$ can be removed from the exponent.

\begin{thm}[\cite{focke_et_al:LIPIcs.SoCG.2024.57}]
	\label{kplan}\label{thm:unweightedmain}
		Unweighted \mc\ can be solved in time $f(\pl,t)n^{O(\sqrt{t})}$.
\end{thm}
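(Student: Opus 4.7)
The plan is to reduce unweighted $\mc$ on a $\kplanar$ graph to the purely planar case handled by Theorem~\ref{thm:mainonlyplanar}, paying a multiplicative factor of $f(\pl, t)$ for the reduction. The key advantage of the unweighted setting is that the $\pl$ ``non-planar'' edges contribute at most $1$ each to the cost, which lets us enumerate their status exhaustively.

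First I would compute a set $P \subseteq E(G)$ with $|P| \le \pl$ such that $G - P$ is planar; this is a standard FPT-in-$\pl$ planarization. Then I would enumerate all $2^\pl$ subsets $S_P \subseteq P$, each representing a candidate for $S \cap P$ where $S$ denotes the sought multicut. Each guess contributes $|S_P|$ to the cost and leaves the task of finding a minimum $S' \subseteq E(G - P)$ such that $S_P \cup S'$ is a multicut for $(G, H)$. Writing $R := P \setminus S_P$ for the set of ``uncut'' non-planar edges, this is equivalent to finding a minimum multicut $S'$ in the planar graph $G - P$ subject to the side constraint that the endpoints of every edge in $R$ (and hence each $R$-connected equivalence class of vertices) remain in a single component of $G - P - S'$.

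The main obstacle is encoding this side constraint as a planar multicut on only $t$ terminals, so that Theorem~\ref{thm:mainonlyplanar}(2a) yields per-branch time $f(t) n^{O(\sqrt t)}$. Two naive attempts fail: identifying vertices of $G - P$ according to $R$-equivalence classes can destroy planarity (vertex identification is not a minor operation), and introducing the $\le 2\pl$ vertices touched by $R$ as fresh terminals would inflate the exponent to $\sqrt{t+\pl}$, reproducing only the weaker bound of Theorem~\ref{thm:mainplanarweighted}. The resolution exploits that we are in the unweighted setting and that the number of $R$-classes is at most $\pl$: I would further branch over (i) the assignment of each $R$-class to one of the at most $t$ terminal-labelled components of the target cut, giving $t^{O(\pl)}$ cases, and (ii) the local combinatorial patterns, in a fixed planar embedding of $G-P$, of how an optimal cut interacts with each $R$-class of size $\le 2\pl$, giving $f(\pl)$ cases. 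Under a fixed choice, the side constraints translate into modifications of the demand graph $H$ on the original $t$ terminals together with local routing constraints that can be integrated into the noose-based enumeration underlying Theorem~\ref{thm:mainonlyplanar} without changing its $n^{O(\sqrt t)}$ exponent. Aggregating the $2^\pl \cdot t^{O(\pl)} \cdot f(\pl)$ branches with the $f(t) n^{O(\sqrt t)}$ per-branch planar running time yields the overall bound $f(\pl, t) n^{O(\sqrt t)}$.
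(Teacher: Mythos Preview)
Your reduction has a genuine gap at the decisive step. Steps (i) and (ii) are supposed to encode the ``stay-together'' constraints coming from $R=P\setminus S_P$ while keeping the exponent at $O(\sqrt t)$, but neither does. Guessing in (i) which terminal-component each $R$-endpoint lies in tells you which \emph{merged} face of the multicut dual it belongs to, but it does not bound the number of faces of the dual itself: an inclusion-minimal minimum dual for the planar instance with $R$-shortcuts can still have $\Theta(t+\pl)$ faces, because two $R$-linked endpoints may sit in non-adjacent faces of $C$ that cannot be merged by deleting a single edge of $C$. Hence the treewidth bound one can extract is $O(\sqrt{t+\pl})$, which is exactly the weighted bound you were trying to beat. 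Step (ii) does not help: ``local combinatorial patterns \dots\ of how an optimal cut interacts with each $R$-class'' is not a finite enumeration independent of $n$ unless you already know which face of the (unknown) dual each $R$-vertex occupies, and ``integrating local routing constraints into the noose-based enumeration without changing the exponent'' is an assertion, not an argument. In short, you have rediscovered the obstacle and then assumed it away.

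The paper does not give a direct proof of this statement (it is quoted from \cite{focke_et_al:LIPIcs.SoCG.2024.57}); instead it derives it as a corollary of the stronger \cref{extraedges}, since $\mu\le t$. That proof is entirely different from your outline: it runs a branching algorithm on \emph{states} (subpartitions of $V(G)$ refining the component structure of $G_0\setminus S_0$), with a carefully designed potential $\kappa$ that strictly increases along each branch. The key unweighted-specific ingredient is \cref{hauptred}: because removing all $\pl$ extra edges costs only $\pl$, any component of $G_0\setminus S_0$ whose boundary exceeds the corresponding minimum cut by more than $\pl(t+1)$ can be replaced by that minimum cut together with $E_\pl$; this forces every ``fat'' class to be within $S$-distance $1$ of a terminal in $X$, which in turn bounds the treewidth of the relevant multicut duals by $O(\sqrt{\mu})$ rather than $O(\sqrt{t+\pl})$. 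Nothing analogous to this exchange argument appears in your sketch, and it is precisely what makes the unweighted case easier than the weighted one.
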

However, this result leaves open the question what happens in unweighted \mcH\ if $\calH$ has bounded distance to extended bicliques. Based on Theorems~\ref{thm:maingenus} and \ref{thm:mainplanarweighted}, one would expect that it is possible to remove the number $t$ of terminals from the exponent as well.

\paragraph*{Our contributions}
First, we resolve the remaining question about unweighted \mcH\ and show that if $\calH$ has bounded distance to extended bicliques, then both $t$ and $\pl$ can be removed from the exponent\omitnow{ of the running time}. Recall that $\mu$ is the number of vertices needed to be deleted to make the demand graph an extended biclique. Our first main result is the following algorithm.
\begin{restatable}{thm}{extraedges}
\label{extraedges}
Unweighted \mc\ can be solved in time $f(\pl,t)n^{O(\sqrt{\mu})}$.
\end{restatable}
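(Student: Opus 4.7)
The plan is to combine the structural idea behind Theorem~\ref{thm:mainonlyplanar} (bounded distance to extended bicliques allows a reduction to few components) with the unweighted $\kplanar$ machinery of Theorem~\ref{thm:unweightedmain}. Let $X \subseteq T$ with $|X| = \mu$ be such that $H-X$ is an extended biclique $K_{A,B} \cup I$ with bipartition $(A,B)$ and isolated set $I$; such $X$ can be found in $f(t)$ time. The key structural observation is that in any optimal solution $S$, the partition $\mathcal{P}$ of $T$ induced by the components of $G\setminus S$ has only $O(\mu)$ ``nontrivial'' classes: at most two bulk classes containing most of $A$ and most of $B$, plus at most one additional class per block of $\mathcal{P}_X$, where $\mathcal{P}_X$ denotes the restriction of $\mathcal{P}$ to $X$.

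\paragraph*{Branching.}
The algorithm enumerates templates describing how $X$ and its $H$-neighbours in $A\cup B$ are distributed. Specifically, we guess (i) a partition $\mathcal{P}_X$ of $X$ each of whose blocks is independent in $H[X]$; (ii) for each block of $\mathcal{P}_X$, a label in $\{A\text{-side},B\text{-side},\mathrm{fresh}\}$ describing whether this block merges with the bulk $A$-side, the bulk $B$-side, or forms a separate component; and (iii) for each of the at most $2^\mu$ ``types'' of vertices in $N_H(X)\cap(A\cup B)$ (distinguished by their $H$-neighbourhood inside $X$), which of the $O(\mu)$ currently admissible classes its vertices join. Vertices of $A$ (resp.\ $B$) with no $H$-edge to $X$ default to the $A$-side (resp.\ $B$-side) class, and vertices of $I$ are irrelevant. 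The number of templates is a function of $\mu$ alone, so only $f(\mu)$ branches are required. Each template that is consistent with $H$ fixes a target partition of $T$ into $k = O(\mu)$ classes $C_1,\ldots,C_k$, reducing the remaining task to a \mwc\ instance with those $k$ terminal classes on the same $\kplanar$ graph $G$.

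\paragraph*{Reduction and conclusion.}
To invoke Theorem~\ref{thm:unweightedmain} with $t' = O(\mu)$ effective terminals, we collapse each class $C_i$ into a single super-terminal by attaching a fresh hub vertex $h_i$ to every vertex of $C_i$ through sufficiently many unit-weight parallel edges, thereby forcing co-location in any optimal cut. The gadget adds $O(t)$ vertices and raises $\pi$ only by a quantity bounded by a function of $\mu$ and $t$, which we can absorb into $f(\pi,t)$. The resulting $\kplanar$ instance has $O(\mu)$ terminals and is solved by Theorem~\ref{thm:unweightedmain} in $f(\pi,t)\, n^{O(\sqrt{\mu})}$ time. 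Multiplying by the $f(\mu)$ templates preserves the claimed bound.

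\paragraph*{Main obstacle.}
The delicate step is implementing the collapsing gadget without blowing up $\pi$ beyond $f(\pi,t)$: a class $C_i$ may be large and its vertices spread throughout $G$, so drawing the hub's connections can in principle introduce many new non-planar edges. The cleanest remedy is likely to avoid collapsing altogether and instead develop a \emph{coloured} variant of Theorem~\ref{thm:unweightedmain} handling $O(\mu)$ colour classes directly, bypassing the gadget. Either route requires care to verify that the enumeration truly scales with $\mu$ rather than sneaking in a dependence on $t$ through the vertex-type splitting in (iii); this is the heart of achieving the $\sqrt{\mu}$ exponent in place of the $\sqrt{t}$ of Theorem~\ref{thm:unweightedmain}, and is where the combinatorial analysis from~\cite{focke_et_al:LIPIcs.SoCG.2024.57} underlying Theorem~\ref{thm:mainonlyplanar} has to be transferred to the almost-planar setting.
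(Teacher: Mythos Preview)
Your approach diverges from the paper's and has genuine gaps at each of the three stages you outline.

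\textbf{The enumeration in (iii) is incorrect.} Take $\mu=1$, $X=\{x\}$, $A=\{a_1,a_2,a_3\}$, $B=\{b_1,b_2,b_3\}$, with $H$ consisting of the biclique edges together with $xa_1,xa_2,xb_1$. Let $G$ be the path $x\text{--}a_3\text{--}b_1$ with all other terminals isolated. The unique minimum multicut is $\{a_3b_1\}$, leaving $x$ and $a_3$ in the same component. But $a_3\notin N_H(X)$, so your scheme forces $a_3$ into the $A$-side class; the only template you generate is the grouping $\{x\},A,B$, whose group-multiway-cut value is $2$. A correct $O(\mu)$-grouping ($\{x,a_3\},\{a_1,a_2\},B$) exists, yet your enumeration cannot produce it: the placement of terminals \emph{outside} $N_H(X)$ also depends on the (unknown) solution, and vertices of the same type need not land in the same group.

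\textbf{Collapsing blows up $\pi$.} Forcing $h_i$ to stay with $C_i$ in the unweighted setting needs $\Omega(\mathrm{opt})$ parallel copies of each hub edge, and $\mathrm{opt}$ can be $\Theta(n)$. A single hub--terminal connection that cannot be drawn planarly then contributes $\Omega(n)$ to $\pi$; the claim ``raises $\pi$ only by a quantity bounded by a function of $\mu$ and $t$'' is false.

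\textbf{The coloured variant is not easier.} Group $k$-terminal cut is \mc\ with a complete $k$-partite demand graph $H'$, and $\mu(H')$ is governed by the total size of the $k-2$ smallest parts, which is $\Theta(t)$ in general rather than $O(k)$. Hence neither Theorem~\ref{thm:unweightedmain} nor a recursive appeal to the present theorem gives exponent $O(\sqrt{k})$; establishing the coloured variant is essentially the same problem you started with.

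\textbf{What the paper actually does.} There is no reduction to fewer terminals. The paper runs a branching procedure over \emph{states} (subpartitions of $V(G)$ recording, for the vertices of $W\cup T$, their intended component in $G_0\setminus S_0$), guided by a potential $\kappa$ bounded by $f(\pi,t)$. The structural core is Lemma~\ref{hauptred}: in any optimum $S$, every ``fat'' component of $G_0\setminus S_0$ lies at $S$-distance at most $1$ from $X$. Once a complete state with no thin class is reached, this forces every vertex of $G$ to be at $S$-distance at most $2$ from $X$; consequently the multicut dual of the associated planar instance admits a $5$-dominating set of size $|X|\le\mu$, hence has treewidth $O(\sqrt{\mu})$, and is solved directly via Theorem~\ref{algogenext}. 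The exponent $O(\sqrt{\mu})$ enters through the treewidth of the dual, not through a terminal count, which is why no collapsing or coloured-variant detour is needed.
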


Observe that, in contrast to Theorem~\ref{thm:unweightedmain}, only $\mu$ is in the exponent of the running time of the algorithm of \cref{extraedges}. As $\mu\le t$, this algorithm gives an independent proof of Theorem~\ref{thm:unweightedmain} with a very different proof technique. With \cref{extraedges} at hand, we can complete the picture for the complexity of unweighted \mcH:

\begin{restatable}{thm}{mainunkplan}\label{thm:mainunkplan}
  Let $\calH$ be a computable projection-closed class of graphs. Then the following holds for unweighted \mcH.
  \begin{enumerate}
  \item If $\calH$ has bounded distance to extended bicliques, then there is an $f(\pl,t)n^{O(1)}$-time algorithm.
  \item Otherwise,
    \begin{enumerate}
    \item There is an $f(\pl,t)n^{O(\sqrt{t})}$ algorithm.
    \item Assuming $\ETH$, there is a universal constant $\alpha>0$ such that for any fixed choice of $t\ge 3$, there is no $O(n^{\alpha \sqrt{t}})$-time algorithm, even when restricted to planar instances with at most $t$ terminals. 
    \end{enumerate}
  \end{enumerate}
\end{restatable}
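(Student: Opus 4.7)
The plan is to assemble the statement from three ingredients already available: Theorem~\ref{extraedges} for Case~1, Theorem~\ref{thm:unweightedmain} for Case~2(a), and the lower bound from Theorem~\ref{thm:mainonlyplanar}(2)(b) for Case~2(b). The real content of the dichotomy is concentrated in Case~1, where both $\pl$ and $t$ must disappear from the exponent; this is exactly what Theorem~\ref{extraedges} delivers once the combinatorial assumption on $\calH$ is unpacked. The remaining two parts are essentially inherited, so the work is in organizing the case analysis correctly rather than in any new algorithmic idea.

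For Case~1 I would argue as follows. If $\calH$ has bounded distance to extended bicliques, then by definition there is a constant $\mu_0$ depending only on $\calH$ such that every $H\in\calH$ satisfies $\mu(H)\le \mu_0$. Since the input instance $(G,H)$ of $\mcH$ has $H\in \calH$, we may invoke Theorem~\ref{extraedges} and obtain a running time of $f(\pl,t)\,n^{O(\sqrt{\mu})}\le f(\pl,t)\,n^{O(\sqrt{\mu_0})}=f(\pl,t)\,n^{O(1)}$, where the implicit constants in the exponent now depend only on $\calH$. This is exactly the $f(\pl,t)n^{O(1)}$ bound stated in Case~1.

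For the upper bound in Case~2(a) I would directly apply Theorem~\ref{thm:unweightedmain}, which works for arbitrary unweighted instances of $\mc$ with no structural restriction on the demand graph; specializing it to instances with $H\in \calH$ gives the claimed $f(\pl,t)\,n^{O(\sqrt{t})}$ running time. For the lower bound in Case~2(b), the hypothesis is that $\calH$ is a computable projection-closed class without bounded distance to extended bicliques, which is exactly the hypothesis of Theorem~\ref{thm:mainonlyplanar}(2)(b); that theorem already furnishes a universal constant $\alpha>0$ such that, under ETH, no $O(n^{\alpha\sqrt{t}})$ algorithm exists even on \emph{planar} instances with at most $t$ terminals. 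Since any planar instance trivially belongs to $\kplanararg{0}$, the lower bound transfers verbatim to the setting of the present theorem.

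The main obstacle is not in this reduction but in Theorem~\ref{extraedges}, which is where both the parameter $\pl$ and the terminal-dependent factor of $\sqrt{t}$ are removed, leaving only $\sqrt{\mu}$ in the exponent; this theorem is new and must be proved separately. Once it is in hand, Theorem~\ref{thm:mainunkplan} follows as a clean corollary by the case split above, with no further algorithmic ingenuity required. I would therefore focus all technical effort on the proof of Theorem~\ref{extraedges} and treat the present statement as a bookkeeping assembly of known pieces together with this new algorithm.
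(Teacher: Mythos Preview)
Your proposal is correct and matches the paper's own approach: Theorem~\ref{thm:mainunkplan} is presented there precisely as the corollary obtained by combining Theorem~\ref{extraedges} for Case~1 with the previously known Theorem~\ref{thm:unweightedmain} and the planar lower bound of Theorem~\ref{thm:mainonlyplanar}(2)(b). The only minor remark is that the paper also observes that, since $\mu\le t$, Theorem~\ref{extraedges} already subsumes Theorem~\ref{thm:unweightedmain} and hence handles Case~2(a) on its own; either derivation is fine.
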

\medskip

We consider yet another, even more restricted parameter measuring the distance from planar graphs: the \emph{crossing number}, the minimum number of crossings needed when drawing the graph in the plane. Clearly, if a graph has a drawing with $cr$ crossings, then we can remove a set of at most $cr$ edges to make the graph planar. Thus the crossing number $cr$ is always at least $\pl$, that is, an algorithm parameterized by $\pl$ immediately gives an algorithm parameterized by $cr$.

Again, there is no straightforward reduction from the edge-weighted version to the unweighted version without increasing the crossing number. Thus both versions of the problem when parameterizing by crossing number are interesting. Clearly, the unweighted version is no harder than the edge-weighted one. Moreover, both versions parameterized by crossing number are no harder than their respective versions parameterized by $\pl$. But which of these versions are \emph{strictly} easier than the others?

It turns out that, when parameterizing by crossing number, the edge weights do not change the hardness of the problem and  both the edge-weighted and the unweighted versions parameterized by $cr$ have the same complexity as the \emph{unweighted} version parameterized by $\pl$. The following algorithm is our main result on the crossing number parameterization.

\begin{restatable}{thm}{maincr}\label{maincr}
Edge-weighted \mc\ can be solved in time $f(cr,t)n^{O(\sqrt{\mu})}$.
\end{restatable}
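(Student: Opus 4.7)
The overall plan is to reduce the edge-weighted problem with crossing number $cr$ to the unweighted planar case handled by \cref{extraedges}, by spending $2^{O(cr)}$ time to enumerate how the optimum multicut intersects the edges involved in crossings.

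I would start by fixing a drawing of $G$ with $cr$ crossings (either supplied with the input or computed by an FPT-time algorithm in $cr$). Let $F\subseteq E(G)$ be the set of edges participating in at least one crossing, so $|F|\le 2\cdot cr$. For any edge $e\notin F$, one can draw $w_e$ parallel copies of $e$ close to the original curve without creating new crossings, so weights of edges outside $F$ can be simulated by parallel edges for free. The obstacle is that the analogous trick applied to an edge of $F$ would multiply the crossings it takes part in; to deal with this, enumerate all $2^{|F|}\le 4^{cr}$ subsets $F_1\subseteq F$ and guess, for each, that $S\cap F=F_1$, where $S$ is the sought optimum multicut. Set $F_0:=F\setminus F_1$, pay the cost $\sum_{e\in F_1} w_e$ up front, and pass the residual problem on to the graph $G^*:=(G\setminus F_1)/F_0$ with demand graph $H^*$ obtained from $H$ by identifying the terminals merged by the contractions. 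If a demand pair of terminals is collapsed, discard the guess.

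The key structural claim is that $G^*$ is planar. In the drawing of $G$, every crossing lies between two edges of $F$; after deleting $F_1$, the remaining crossings are exactly those between two edges of $F_0$; and contracting an edge of a drawn graph does not increase the number of crossings, since one can continuously shrink the edge's curve to a point, letting the crossings on it vanish together with the edge. Contracting all of $F_0$ therefore eliminates every remaining crossing and produces a planar drawing of $G^*$. Since $H^*$ arises from $H$ by a sequence of identifications of independent vertices (any identification of an adjacent pair is detected and discarded), it is a projection of $H$, and a short case analysis shows that projections cannot increase the distance to extended bicliques, so $\mu(H^*)\le \mu$.

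To solve the residual weighted planar instance, replace each edge of $G^*$ of weight $w$ by $w$ parallel copies, keeping the graph planar and producing, under the standard assumption of polynomially bounded integer weights, an equivalent unweighted planar instance of polynomial size. Applying \cref{extraedges} with $\pi=0$ runs in $f(t)\cdot n^{O(\sqrt{\mu(H^*)})}= f(t)\cdot n^{O(\sqrt{\mu})}$ time, and multiplying by the $2^{|F|}=2^{O(cr)}$ guesses yields the claimed $f(cr,t)\cdot n^{O(\sqrt{\mu})}$ bound. The main obstacle is making the ``contraction does not increase the crossing number'' step rigorous when several $F_0$-edges cross one another simultaneously; a secondary concern is extending the reduction beyond polynomial weights, where one would need the internal algorithm of \cref{extraedges} to support weighted planar graphs natively without degrading the exponent.
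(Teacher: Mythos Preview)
Your reduction has a genuine gap at the step you yourself flagged as the main obstacle: the claim that contracting all of $F_0$ in the drawing yields a planar graph is false. The intuitive argument that ``shrinking the edge's curve to a point lets the crossings on it vanish'' forgets that every other edge incident to the two endpoints is dragged along during the shrinking; whenever the contracted edge $e$ crossed some edge $f$, the edges that used to leave the far endpoint of $e$ must now reach the merged vertex across $f$, so new crossings appear. Concretely, take $G=C_{10}$ on vertices $1,\dots,10$ together with the five chords $F_0=\{16,\,28,\,3\text{-}10,\,47,\,59\}$. Drawing $C_{10}$ as a convex polygon with all chords inside gives a drawing in which every crossing is between two $F_0$-edges, so $F_0$ is exactly the set of crossing edges. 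But $G/F_0$ has vertex set $\{A,B,C,D,E\}$ with $A=\{1,6\}$, $B=\{2,8\}$, $C=\{3,10\}$, $D=\{4,7\}$, $E=\{5,9\}$, and the ten cycle edges map to all ten pairs $AB,BC,CD,DE,EA,AD,DB,BE,EC,CA$; hence $G/F_0\cong K_5$, which is not planar. So the reduction to a planar instance breaks down even before the weight issue, and the secondary gap you mention (arbitrary versus polynomial weights) remains as well, since \cref{maincr} is stated for general integer weights.

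The paper proceeds quite differently and avoids both problems. It never contracts the crossing edges. After guessing which of the at most $2\,cr$ crossing edges lie in the optimum solution, the remaining crossing edges are declared uncuttable (infinite weight). These infinite-weight edges are then \emph{deleted} to obtain the planar graph $G'=G\setminus E_{cr}$, and their uncuttability is encoded by enlarging the demand graph to a carefully chosen $H'$ on $V(H)\cup V(E_{cr})$. The point is not to make the instance unweighted but to control the topology of optimum multicut duals: for the right $H'$, every inclusion-minimal minimum-weight multicut dual $C$ of $(G',H')$ satisfies $\tw(C-F^*)=O(\sqrt{\mu})$, where $F^*$ is the bounded set of faces that contained a crossing. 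The weighted planar instance is then solved directly by the extended algorithm of \cref{algogenext}, whose exponent depends only on this treewidth. Thus the paper's argument is a direct treewidth bound on duals rather than a reduction to \cref{extraedges}, and it handles arbitrary weights natively.
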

\medskip

Together with \cref{thm:mainonlyplanar}, it implies the following complete classification for the complexity of \mcH\ when parameterizing by $cr$:

\begin{restatable}{thm}{maincrossnumber}\label{thm:maincrossnumber}
  Let $\calH$ be a computable projection-closed class of graphs. Then the following holds for edge-weighted \mcH.
  \begin{enumerate}
  \item If $\calH$ has bounded distance to extended bicliques, then there is an $f(cr,t)n^{O(1)}$-time algorithm.
  \item Otherwise,
    \begin{enumerate}
    \item There is an $f(cr,t)n^{O(\sqrt{t})}$ algorithm.
    \item Assuming $\ETH$, there is a universal constant $\alpha>0$ such that for any fixed choice of $t\ge 3$, there is no $O(n^{\alpha \sqrt{t}})$-time algorithm, even when restricted to unweighted planar instances with at most $t$ terminals. 
    \end{enumerate}
  \end{enumerate}
\end{restatable}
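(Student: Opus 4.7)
The plan is to derive \cref{thm:maincrossnumber} as a direct corollary of \cref{maincr} combined with the planar lower bound of \cref{thm:mainonlyplanar}. The key observation is that the parameter $\mu$ (distance to extended bicliques) in the running time $f(cr,t)n^{O(\sqrt{\mu})}$ of \cref{maincr} collapses to either a constant or to $t$, depending on which case of the dichotomy we are in.

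For the upper bound in Case 1, suppose $\calH$ has bounded distance to extended bicliques, i.e., there is a constant $\mu_0$ such that every $H \in \calH$ has distance at most $\mu_0$ to extended bicliques. Given an instance $(G,H)$ with $H \in \calH$, we compute $\mu \le \mu_0$ (by brute force over $V(H)$ in time depending only on $t$) and invoke \cref{maincr}. The resulting running time is $f(cr,t)n^{O(\sqrt{\mu_0})} = f'(cr,t)n^{O(1)}$ as required. For Case 2(a), we are in the general regime where $\calH$ need not have bounded distance to extended bicliques; however, trivially $\mu \le |V(H)| \le t$ for every input, so \cref{maincr} directly yields the $f(cr,t)n^{O(\sqrt{t})}$ bound.

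For the lower bound in Case 2(b), we simply quote the corresponding lower bound from \cref{thm:mainonlyplanar}(2)(b), which already states that under $\ETH$ there is no $O(n^{\alpha \sqrt{t}})$ algorithm even for unweighted planar instances with at most $t$ terminals. Since every planar instance has $cr = 0$, this is a fortiori a lower bound for the parameterization by crossing number, and it matches the upper bound in Case 2(a) up to the precise value of the constant in the exponent.

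The main (and essentially only) work is in establishing \cref{maincr}; once that algorithm is in hand, the classification theorem reduces to the elementary case analysis above together with an appeal to the known planar lower bound. There is no genuine obstacle in assembling these pieces, since the projection-closedness of $\calH$ plays no additional role here beyond what is already used in \cref{thm:mainonlyplanar} (in particular to guarantee that the ``bounded distance to extended bicliques'' dichotomy is well-behaved under the natural reductions), and the crossing number parameter only enters through the black-box invocation of \cref{maincr}.
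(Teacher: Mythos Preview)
Your proposal is correct and matches the paper's own approach exactly: the paper states that \cref{thm:maincrossnumber} follows directly from \cref{maincr} together with \cref{thm:mainonlyplanar}, and your argument spells out precisely this derivation (bounding $\mu$ by a constant in Case~1, by $t$ in Case~2(a), and invoking the planar lower bound for Case~2(b)).
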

\medskip

Note that the algorithmic results of Theorems~\ref{thm:mainunkplan} and \ref{thm:maincrossnumber} are incomparable: Theorem~\ref{thm:mainunkplan} considers only unweighted graphs, while Theorem~\ref{thm:maincrossnumber} considers edge-weighted graphs, but under parameterization by a potentially much larger parameter.

Theorems~\ref{thm:maingenus}--\ref{thm:maincrossnumber} complete the picture of how different parameters measuring closeness to planarity influence the running time of \mcH\ for different classes $\calH$ (see Figure~\ref{fig:maintable}). Let us observe that the exact form of the influence is very different for the different parameters. As we explain below, the treewidth of the so-called multicut dual of the optimum solution is the main factor determining the exponent of the running time \cite{ECDV,focke_et_al:LIPIcs.SoCG.2024.57}. For a formal definition of multicut duals, see \cref{xstrzdtufzigulh}. The multicut dual is an embedded graph with at most $t$ faces, thus the exponent $O(\sqrt{t})$ for planar graphs comes from the fact that a planar graph with $O(t)$ vertices (or faces) has treewidth $O(\sqrt{t})$. Intuitively, for nonplanar graphs, the situation changes as follows:

\begin{itemize}
\item \textbf{Genus.} If the genus is large, then a graph with $t$ vertices can have treewidth larger than $O(\sqrt{t})$. The maximum possible treewidth is given by the (nonobvious) formula $O(\sqrt{g^2+gt+t})$. Therefore, the maximum treewidth of the multicut dual is $O(\sqrt{g^2+gt+t})$, giving a bound on the exponent of the running time.
\item \textbf{Edge-weighted $\kplanar$.} With extra edges having infinite (or very large) weight, we can simulate multiple terminals at different parts of a planar graph.
  More precisely, a planar instance of \gcut\ with $t$ terminals in each of the three groups can be reduced to an instance of \mwc\ with three terminals if we connect the terminals in each group with a total of $\pl=3(t-1)$ infinite-weight edges, creating a $\kplanar$ instance. 
  Thus these extra edges influence the running time exactly as extra terminals would do. 
\item \textbf{Unweighted $\kplanar$.}
  We can always try to solve the problem by putting the $\pl$ extra edges into the solution, removing them from $G$, and then solving the resulting planar instance. Of course, this might not always lead to an optimum solution, but can be up to $\pl$ edges larger. However, this argument allows us to say that if the solution contains a cut that is ``inefficient'' in the sense that it is larger then the minimum cut by more than $\pl$, then the cut can be replaced by a minimum cut and the $\pl$ extra edges, resulting in a srictly smaller solution. Using these type of arguments, the problem can be reduced to a planar setting where the number of faces of the multicut dual does not depend on $\pl$.

  \item \textbf{Crossing number.} Each crossing can increase the number of vertices of the multicut dual by one, thus the number of vertices and the treewidth of the multicut dual can be $O(t+cr)$ and $O(\sqrt{t+cr})$, respectively. However, all these extra vertices of the multicut dual are at fixed places determined by the crossing. As pointed out by Focke et al.~\cite{focke2025journal} it is sufficient to consider the treewidth of the multicut dual \emph{after removing all the vertices whose locations are fixed} (see \cref{algogenext}). Indeed, after removing these vertices, the treewidth of the multicut dual does not depend on the crossing number.
  
\end{itemize}

\begin{figure}

\newcommand{\exphigh}[1]{\textcolor{red}{#1}}

\renewcommand{\arraystretch}{1.4}
\begin{tabular}{c@{\hskip 8pt}c@{\hskip 8pt}c@{\hskip 8pt}c@{\hskip 8pt}c}
  &  &weighted&unweighted&crossing\\[-2mm]
  &genus
    & $\kplanar$
    & $\kplanar$
    & number \\
    \toprule
    \addlinespace
  $\calH$ with &&&&\\[-2mm]
  bounded distance
    & $f(g,t)n^{\exphigh{O(g)}}$ 
    & $f(\pl,t)n^{\exphigh{O(\sqrt{\pl})}}$ 
    & $f(\pl,t)n^{\exphigh{O(1)}}$ (*) 
    & $f(cr,t)n^{\exphigh{O(1)}}$ (*) \\[-2mm]
    to extended bicliques&&&&\\
    \addlinespace
    $\calH$ with &&&&\\[-2mm]unbounded distance
    & $f(g,t)n^{\exphigh{O(\sqrt{g^2+gt+t})}}$ 
    & $f(\pl,t)n^{\exphigh{O(\sqrt{\pl+t})}}$ 
    & $f(\pl,t)n^{\exphigh{O(\sqrt{t})}}$ 
    & $f(cr,t)n^{\exphigh{O(\sqrt{t})}}$ (*)\\[-2mm]
  to extended bicliques &&&&\\
  \bottomrule
\end{tabular}
\caption{The running time of the algorithms for \mcH\ with the different parameterizations. The new results of the paper are marked by (*). Note that, for every fixed projection-closed class $\calH$, the exponents of the running time are tight (up to logarithmic factors).}\label{fig:maintable}
  
\end{figure}

In the light of these results, it would be interesting to study further notions of almost-planar graphs. We here mention two further notions of almost-planar graphs that we are \emph{not} studying in this paper and argue why studying them does not promise to be very fruitful.

 First, it would be very natural to consider graphs that can be made planar by the deletion of $\pl$ \emph{vertices}. However, the problem can be very hard even for small values of this parameter. Specifically, the \mwc\ problem with three terminals is \NP-hard even for graphs that can be made planar by the deletion of three vertices, as the \gcut\ problem on planar graphs can be reduced to such an instance \cite{DBLP:journals/siamcomp/DahlhausJPSY94}. Thus we cannot expect results similar to Theorems~\ref{thm:maingenus}--\ref{thm:maincrossnumber} for this parameter.

Further, a graph is \emph{1-planar} if it can be drawn in the plance such that every edge participates in at most one crossing. Every graph can be made 1-planar by subdividing each edge a sufficient number of times. Subdividing an edge (and giving the same weight to both new edges) does not change the \mc\ problem, hence the problem is as hard on 1-planar graphs as on general graphs. Therefore, 1-planar graphs are irrelevant for the study of \mc\ and its special cases.

The remainder of this article is organized as follows. In \cref{prelsec}, we give some preliminaries we need for our algorithms. Afterwards, we prove \cref{extraedges} and \cref{maincr} in \cref{edgedel} and \cref{crosssec}, respectively.

\section{Preliminaries}\label{prelsec}
We here give some preliminaries for our article. We first give some notation and then review the concept of multicut duals.

\subsection{Notation}


For some positive integer $k$, we use $[k]$ for $\{1,\ldots,k\}$. A {\it subpartition} of a set $V$ is a collection $\mathcal{P}$ of disjoint subsets of $V$. We use $\bigcup \mathcal{P}$ for the union of all classes of $\mathcal{P}$. Moreover, $\mathcal{P}$ is a partition if $\bigcup \mathcal{P}=V$.

A {\it graph} $G$ consists of a finite set $V(G)$ whose elements are called {\it vertices} and a set $E(G)$ of subsets of $V(G)$ of size 2 called {\it edges}. For some $E' \subseteq E(G)$, we use $V(E')$ for the set of vertices contained in at least one edge in $E'$. Further, we use $G\setminus E'$ for the unique graph on $V(G)$ with $E(G\setminus E')=E(G)\setminus E'$. Next, for some $V'\subseteq V(G)$, we use $G[V']$ for the subgraph of $G$ induced by $V'$ that is, $V(G[V'])=V'$ and $E(G[V'])$ contains all edges of $E(G)$ fully contained in $V'$. We also use $G-V'$ for $G[V(G)\setminus V']$. Given a graph $G$ and two disjoint sets $X,Y\subseteq V(G)$, we use $\delta_{G}(X,Y)$ for the set of edges in $E(G)$ containing a vertex of $X$ and a vertex of $Y$. We further use $d_G(X,Y)$ for $|\delta_{G}(X,Y)|, \delta_G(X)$ for $\delta_G(X,V(G)\setminus X)$, and $d_G(X)$ for $|\delta_{G}(X)|$. We say that $G$ is {\it connected} if $d_G(X)\geq 1$ for every nonempty, proper $X \subseteq V(G)$. Next, we say that a graph $G'$ is a {\it subgraph} of $G$ if $V(G')\subseteq V(G)$ and $E(G')\subseteq E(G)$. A {\it component} of $G$ is a maximal connected subgraph of $G$. 

A graph $P$ is a {\it $u_1u_2$-path} for some $u_1,u_2\in V(P)$ if $P$ is connected, $d_P(u_i)=1$ holds for all $i \in [2]$  and $d_P(v)=2$ holds for all $v \in V(P)$. We say that the {\it length} of $P$ is $|E(P)|$. If $u_2 \in U$ for some $U \subseteq V(G)\setminus \{u_1\}$, we say that $P$ links $u_1$ and $U$.  For a graph $G$ and a positive integer $k$, we say that a set $X \subseteq V(G)$ is a {\it $k$-dominating} if for every $u \in V(G)$, there exists a path of length at most $k$ linking $u$ and $X$.

Throughout the article, we use the notion of treewidth. However, for the understanding of the current article, the exact definition of treewidth, which can for example be found in \cite{cfklmpps}, is not required. The properties described in \cref{domplanar} and \cref{algogenext} as well as the well-known fact that the treewidth of a graph is the maximum of the treewidth of one of its components will be sufficient.

  Recall that an instance of \mc{} consists of a graph $G$ and a graph $H$ with $V(H)\subseteq V(G)$ and a {\it multicut} for $(G,H)$ is a set $S \subseteq E(G)$ such that $t_1$ and $t_2$ are in distinct components of $G \setminus S$ for all $t_1,t_2 \in V(H)$ with $t_1t_2 \in E(H)$.
For the edge-weighted version of the problem, we assume that the input graph $G$ is equipped with a weight function $w:E(G)\to \mathbb{Z}^+$. A {\it weighting} of a graph is the graph together with a weight function.

An {\it extended biclique decomposition} of a graph $H$ is a partition $(B_1,B_2,I,X)$ of $V(H)$ such that all vertices of $I$ are isolated in $H - X$, $E(H)$ contains the edge $t_1t_2$ for all $t_1 \in B_1$ and $t_2 \in B_2$, and $H[B_i]$ is an edgeless graph for $i \in [2]$. We use $\mathcal{H}_\mu$ for the set of graphs $H$ that admit an extended biclique decomposition $(B_1,B_2,I,X)$ with $|X|\leq \mu$. 

\subsection{Multicut duals}\label{xstrzdtufzigulh}
We now review the concept of so-called multicut duals, which will play a crucial role throughout the article. Recall that an instance $(G,H)$ of $\mc$ is \emph{planar} if $G$ is given in the form of a graph cellularly embedded in the plane.
Given a graph $C$ that is embedded in the plane and in general position with $G$, we denote by $e_G(C)$ those edges of $G$ that are crossed by at least one edge of $C$. The \emph{weight $w(C)$} of $C$ is defined to be $w(e_G(C))$ where $w$ is the weight function associated with $G$. A \emph{multicut dual} for $(G,H)$ is a graph $C$ embedded in the plane that is in general position with $G$ and has the property that, for every $t_1,t_2 \in V(H)$ with $t_1t_2 \in E(H)$, the terminals $t_1$ and $t_2$ are contained in different faces of $C$.

The following result is an immediate consequence of Lemmas 2.3 and 2.4 in \cite{focke2025journal}.

\begin{lem}\label{dualcut}
Let $(G,H)$ be a planar instance of $\mc$ and $C$ a minimum weight multicut dual for $(G,H)$. Then $e_G(C)$ is a minimum weight multicut for $(G,H)$.
\end{lem}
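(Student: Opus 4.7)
The plan is to establish \cref{dualcut} by combining two complementary observations relating multicuts and multicut duals, which together correspond to Lemmas 2.3 and 2.4 of \cite{focke2025arxiv}.

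First I would show, independently of minimality, that $e_G(C)$ is a multicut whenever $C$ is a multicut dual. Fix any demand edge $t_1t_2 \in E(H)$. By definition of a multicut dual, $t_1$ and $t_2$ lie in distinct faces of $C$. Any $t_1$--$t_2$ path $P$ in $G$ is in particular a curve in the plane with endpoints in distinct open faces of $\mathbb{R}^2 \setminus C$, so $P$ must intersect $C$. Since $G$ and $C$ are in general position, every such intersection lies in the interior of an edge $e \in E(G)$ that crosses some edge of $C$, i.e., $e \in e_G(C)$. Hence removing $e_G(C)$ disconnects $t_1$ from $t_2$ in $G$, and $w(e_G(C)) = w(C)$ holds by definition of $w(C)$.

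Next I would show the converse bound: for any multicut $S$ of $(G,H)$ there exists a multicut dual $C_S$ with $w(C_S) \le w(S)$. The natural choice is to take $C_S$ to be the subgraph of the planar dual $G^*$ consisting of the edges dual to those in $S$, drawn canonically so that each dual edge crosses only its primal edge. Then $C_S$ is in general position with $G$ and satisfies $e_G(C_S) = S$, giving $w(C_S) = w(S)$. To verify the multicut-dual property, fix $t_1t_2 \in E(H)$ and note that $t_1$ and $t_2$ lie in distinct components of $G - S$; by the classical planar duality between edge cuts in $G$ and (disjoint unions of) cycles in $G^*$, the dual edges of the cut separating these two components form closed curves in the plane that separate $t_1$ from $t_2$, so $t_1$ and $t_2$ lie in different faces of $C_S$.

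Combining the two steps finishes the proof. Let $S^*$ be a minimum weight multicut. The second step yields a multicut dual of weight at most $w(S^*)$, so by minimality of $C$ we obtain $w(C) \le w(S^*)$; the first step then gives a multicut $e_G(C)$ of weight $w(C) \le w(S^*)$, which is therefore a minimum weight multicut. The main subtlety, and the reason this is presented as a direct consequence of the two lemmas of \cite{focke2025arxiv} rather than reproved, lies in the planar-topological bookkeeping of the second step: one must rigorously match the combinatorial statement ``$S$ separates $t_1$ from $t_2$ in $G$'' with the topological statement ``$t_1$ and $t_2$ lie in different faces of $C_S$'', using that the edge boundary of the component of $G - S$ containing $t_1$ corresponds to a union of cycles in $G^*$ enclosing that component.
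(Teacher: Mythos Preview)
Your proposal is correct and matches the paper's approach exactly: the paper does not give a proof but states that the lemma is an immediate consequence of Lemmas 2.3 and 2.4 in \cite{focke2025arxiv}, and your two steps (every multicut dual yields a multicut of the same weight; every multicut yields a multicut dual of no larger weight) are precisely those two lemmas, combined in the obvious way. You have also correctly identified where the nontrivial topological content lies, namely in the cut--cycle duality underlying the second step.
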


 In order to make use of the topology of certain multicut duals, we need two results related to treewidth. The first one of the following two results provides a sufficient criterion for the treewidth of certain graphs being bounded and the second one allows to exploit bounded treewidth algorithmically. 
 
 The following result can be obtained from Theorem 3.2 in \cite{10.1145/1077464.1077468} and the fact that the treewidth of any graph can only be larger than its branchwidth by a constant factor.

\begin{prop}\label{domplanar}
Let $C$ be a planar graph that admits a $5$-dominating set of size $k$ for some positive integer $k$. Then $\tw(C)= O(\sqrt{k})$.
\end{prop}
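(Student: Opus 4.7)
The proposed plan is essentially a two-step invocation of known results. First, I would apply Theorem~3.2 of~\cite{10.1145/1077464.1077468}, the Fomin--Thilikos bound: if a planar graph admits an $r$-dominating set of size $k$, then its branchwidth is at most $c_r\sqrt{k}$, where the constant $c_r$ depends only on $r$. Specialising to $r = 5$ immediately yields $\mathrm{bw}(C) = O(\sqrt{k})$ for the graph $C$ in the statement, with the constant absorbing $c_5$.

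Second, I would convert this branchwidth bound into a treewidth bound using the standard Robertson--Seymour inequality $\tw(G) \le \tfrac{3}{2}\,\mathrm{bw}(G) - 1$ (valid for every graph with at least one edge), which changes the bound only by a multiplicative constant. Combining the two steps gives $\tw(C) = O(\sqrt{k})$, as required. This is exactly the route suggested in the statement preceding the proposition, where branchwidth and treewidth are said to differ only by a constant factor.

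The only point that requires care is bookkeeping: one has to verify that the cited theorem really covers the $r$-dominating variant (rather than only the classical case $r = 1$), and that the constant $c_r$ for the specific value $r = 5$ is indeed absolute and can safely be absorbed into the $O(\cdot)$ notation. Since this is the standard form in which such bidimensionality-style bounds appear in the planar graphs literature, I do not expect any substantive obstacle beyond locating the exact statement in \cite{10.1145/1077464.1077468}; no independent combinatorial argument seems necessary.
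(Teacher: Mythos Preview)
Your proposal is correct and matches the paper's own justification exactly: the paper states just before the proposition that it follows from Theorem~3.2 of \cite{10.1145/1077464.1077468} together with the fact that treewidth exceeds branchwidth only by a constant factor, which is precisely your two-step argument.
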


An \emph{extended} planar instance $(G,H,F^*)$ of \mc{} is a planar instance $(G,H)$ of \mc{} together with a set $F^*$ of faces of $G$ which is given with the input. Given a set $F^*$ of faces of $G$ and a multicut dual $C$ for $(G,H)$, let $C-F^*$ be the graph obtained from $C$ by removing every vertex that is in a face of $F^*$ and removing every edge that intersects a face of $F^*$.  The following result is the restriction of Theorem 1.11 in \cite{focke2025journal} to the planar setting.

\begin{restatable}{thm}{algogenext}\label{algogenext}
Let $(G,H,F^*)$ be an extended planar instance of \mc{} such that every subcubic inclusionwise minimal minimum-weight multicut dual $C$ satisfies $\tw(C-F^*)\leq  \beta$. Then an optimum solution of $(G,H)$ can be found in time $f(|F^*|,t)n^{O(\beta)}$.
\end{restatable}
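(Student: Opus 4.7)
The plan is to follow the strategy of Theorem~1.11 of~\cite{focke2025arxiv}, specialized to the plane. By \cref{dualcut}, the task reduces to finding a minimum weight multicut dual $C^*$ for $(G,H)$, and standard reductions (removing redundant pieces and smoothing degree-two vertices away from crossings and terminals) let us restrict attention to subcubic, inclusionwise minimal $C^*$, so that the treewidth hypothesis $\tw(C^*-F^*)\le \beta$ applies to every candidate optimum.

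The key idea is to decouple what $C^*$ does \emph{inside} the faces of $F^*$ from what it does outside. First, I would enumerate a combinatorial \emph{sketch} of $C^*$ on $F^*$: for each face $\phi\in F^*$ the sketch records the cyclically ordered multiset of points on $\partial \phi$ where $C^*$ exits into the rest of the plane, together with the topological pairing describing how $C^*\cap \phi$ interconnects these exit points with each other and with any terminals lying in $\phi$. Using subcubicity of $C^*$ and the fact that only $|F^*|$ faces require encoding, the number of such sketches can be bounded by a function of $|F^*|$ and $t$ alone, independent of $n$.

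For a fixed sketch, I would then compute an optimum consistent completion of $C^*$ in the planar region outside the faces of $F^*$. This is again a multicut-dual problem, now on a planar ``completion instance'' in which the separation constraints on $\partial F^*$ and the contributions of the pieces of $C^*$ inside the $\phi\in F^*$ are encoded by the sketch; the hypothesis guarantees that its optimum $C^*-F^*$ has treewidth at most $\beta$. A standard sphere-cut or treewidth-based dynamic programming over the planar completion instance, in the spirit of the planar $n^{O(\sqrt{t})}$ algorithm behind \cref{thm:mainonlyplanar}, finds an optimum completion in time $n^{O(\beta)}$. Taking the minimum over all sketches and returning $e_G$ of the best completed dual yields the claimed running time $f(|F^*|,t)\,n^{O(\beta)}$.

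The principal obstacle is the bookkeeping for the sketch: we must encode enough information about $C^*$ on $F^*$ to make the completion problem self-contained, so that the DP outside $F^*$ only needs boundary data, while keeping the total number of sketches bounded purely by a function of $|F^*|$ and $t$. Defining such a sketch precisely, proving that some subcubic inclusionwise minimal minimum dual is consistent with at least one sketch, and reducing the completion problem to the bounded-treewidth DP with the correct cost accounting, form the technical core of the argument; these steps are exactly what \cite{focke2025arxiv} carries out in the more general bounded-genus setting, and the arguments specialize without substantive change to the planar case stated here.
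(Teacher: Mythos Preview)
The paper does not give an independent proof of this theorem: it simply records it as the planar specialization of Theorem~1.11 in~\cite{focke2025arxiv} and uses it as a black box. Your proposal takes the same route---you explicitly defer the technical core to~\cite{focke2025arxiv} and only sketch the high-level mechanism (enumerate a bounded number of ``sketches'' on $F^*$, then run a treewidth-$\beta$ DP on the remainder)---so the two are in agreement, with your write-up merely adding an informal outline of what the cited result does.
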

Notice that Theorem~\ref{algogenext} requires a bound on the treewidth of every subcubic inclusionwise minimal minimum-weight multicut dual, and it is not sufficient to prove the existence of just one that satisfies the bound. This is an inherent feature of the proof technique, which modifies the multicut dual to obtain further properties.


\section{Edge-deletion distance to planarity}\label{edgedel}

This section is dedicated to proving \cref{extraedges}, which we restate here for convenience. 
\extraedges*
  Throughout this section, we say that an instance $(G,H)$ of $\mc$ is \emph{$\kplanar$} if a set $E_\pl \subseteq E(G)$ is given with the instance such that $|E_\pl|=\pl$ and $G\setminus E_\pl$ is planar. We further use $T$ for $V(H)$, $W$ for $V(E_\pl)$, and $G_0$ for $G\setminus E_\pl$. Also, for some $S\subseteq E(G)$, we use $S_0$ for $S\setminus E_\pl$. We say that $(G,H)$ is {\it connected} if $G$ is connected. We further let $(B_1,B_2,I,X)$ be an extended biclique decomposition of $H$ with $|X|=\mu$. Throughout this section, unless specified otherwise, we suppose that the instance is unweighted.

Before giving the detailed proof of \cref{extraedges}, which is rather long and technical, we now give an overview of the main steps of the proof.

The overall algorithmic strategy for Theorem \ref{extraedges} is to run an algorithm on the given instance that either produces a minimum multicut for the instance or computes a collection of instances for which the parameters in consideration are smaller and such that from minimum multicuts of all those instances, a minimum multicut for the original instance can be computed. We need the following definitions.

Given instances $(G,H)$ and $(G',H')$ of $\mc$,  $(G',H')$ is a {\it subinstance} of $(G,H)$ if $G'$ is a labelled subgraph of $G$ and $H'$ is a labelled induced subgraph of $H$. Next, an {\it extended subinstance} of $(G,H)$ consists of a set $S'\subseteq E(G)$ and a subinstance $(G',H')$ of $(G,H)$ with $E(G')\cap S'=\emptyset$. When we say that an extended subinstance $(S',(G',H'))$ has certain properties which can only be associated to an instance of \mc, we refer to $(G',H')$. Next, we say that an extended subinstance $(S',(G',H'))$ of $(G,H)$ is {\it optimumbound} for $(G,H)$ if $S' \cup S''$ is a minimum multicut for $(G,H)$ for every minimum multicut $S''$ for $(G',H')$.

The main technical difficulty is to prove the following result, which shows that the above described reduction to smaller instances exists for connected instances. With this result at hand, \cref{extraedges} follows readily. In particular, if disconnected instances are created, they can easily be reduced to instances corresponding to the components of the graph. Further observe that $\mu$ cannot increase  when moving to a subinstance as the new demand graph is an induced subgraph of the previous one.

\begin{restatable}{lem}{creins}
\label{creins}
 Let $(G,H)$ be a connected $\kplanar$ instance of $\mc$. Then, in $f(\pi,t)n^{O(\sqrt{\mu})}$, we can run an algorithm that returns a set $S\subseteq E(G)$  and a collection $(S_i,(G_i,H_i))_{i \in [k]}$ of extended subinstances of $(G,H)$  for some $k=f(\pi,t)$ such that for all $i \in [k]$, we have that $\pi(G_i)+|V(H_i)|<\pi(G)+|V(H)|$ holds and either $S$ is a minimum multicut for $(G,H)$ or there exists some $i \in [k]$ such that $(S_i,(G_i,H_i))$ is optimumbound for $(G,H)$.
  \end{restatable}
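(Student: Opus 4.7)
The plan is a case split on whether $E_\pl$ is empty. In the base case $E_\pl=\emptyset$, the graph $G$ is planar, and I would directly compute a minimum multicut $S$ via \cref{algogenext}, outputting no sub-instances ($k=0$). First, by brute force over all partitions of $V(H)$, find an extended biclique decomposition $(B_1,B_2,I,X)$ of $H$ with $|X|\le\mu$ in $f(t)$ time. Take $F^*$ to consist of one small cocoon face around each terminal of $T$, so $|F^*|\le t$. The critical ingredient is a structural claim: for every inclusion-minimal subcubic minimum-weight multicut dual $C$ of $(G,H)$, the planar graph $C-F^*$ admits a $5$-dominating set of size $O(\mu)$. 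By \cref{domplanar} this yields $\tw(C-F^*)=O(\sqrt{\mu})$, and \cref{algogenext} then returns an optimum within the required budget $f(t)\,n^{O(\sqrt{\mu})}$.

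When $E_\pl\neq\emptyset$, I would pick any $e=uv\in E_\pl$ and produce $k=2$ sub-instances that branch on whether $e$ belongs to the optimum: $(S_1,(G_1,H_1))=(\{e\},(G\setminus e,H))$, where $G_1$ is $(\pl-1)$-edges away from planar as witnessed by $E_\pl\setminus\{e\}$; and $(S_2,(G_2,H_2))=(\emptyset,(G/e,H'))$, where $H'$ identifies $u$ and $v$ if both are terminals. Any optimum contains $e$ or does not, so at least one sub-instance is optimumbound, and we may output $S=\emptyset$. The measure $\pl+|V(H)|$ strictly decreases in Sub-instance~$1$, since $\pl$ drops by one; for Sub-instance~$2$ the same holds when both endpoints of $e$ lie in $T$, while the corner case where at most one endpoint is a terminal requires a short topological argument to exhibit a planarising set of $G/e$ of size at most $\pl-1$, using a local re-embedding of $G\setminus E_\pl$ around $u$ and $v$.

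The main obstacle is the structural claim in the base case, namely bounding the $5$-dominating set of $C-F^*$ by $O(\mu)$ rather than $O(t)$; this is precisely the improvement that distinguishes \cref{extraedges} from the weaker \cref{thm:unweightedmain}. I would prove it by an unweighted minimality exchange. Outside the $t$ terminal cocoons and a bounded number of additional cocoons around the $\mu$ vertices of $X$, the extended biclique demand requires only a single $B_1/B_2$-separation, whose dual can be taken to be a single topological curve of constant dominating complexity. Any detour of $C$ away from this curve and from the $X$-cocoons can be locally shortcut without violating any demand pair (since $I$-vertices only need to be separated from $X$); in the unweighted setting such a shortcut strictly decreases $|e_G(C)|$, contradicting the minimality of $C$. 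Placing one dominator per $X$-cocoon and $O(1)$ dominators along the $B_1/B_2$ curve then yields an $O(\mu)$-dominating set. Robustly formalising this exchange against all possible configurations of $H$-edges within $X$ or between $X$ and $B_1\cup B_2$ is the most delicate step and is expected to require a careful case analysis that reroutes shortcuts through the appropriate $X$-cocoons whenever an $X$-demand would otherwise be violated.
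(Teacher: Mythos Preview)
Your proposal has genuine gaps in both branches of the case split.

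\textbf{The recursive case fails outright.} First, $G/e$ is not a labelled subgraph of $G$, so $(\emptyset,(G/e,H'))$ is not an extended subinstance of $(G,H)$ in the sense the lemma demands. More damagingly, the claimed drop $\pi(G/e)\le\pi-1$ is false. Take $G_0$ on $\{u,v,1,2,3,4,5\}$ with edges $u1,\,v2,\,v3$ together with all six edges of $\{4,5\}\times\{1,2,3\}$; this is $K_{2,3}$ with two low-degree attachments, hence planar. Then $G=G_0+uv$ is non-planar (contracting $uv$ already yields a $K_{3,3}$ minor), so $\pi(G)=1$ with $E_\pi=\{uv\}$. But $G/uv$ is precisely $K_{3,3}$ on $\{w,4,5\}\times\{1,2,3\}$, so $\pi(G/uv)=1$ as well. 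Placing all terminals inside $\{1,\dots,5\}$, neither $\pi$ nor $|V(H)|$ decreases; no ``local re-embedding'' repairs this, because identifying two vertices of a planar graph need not preserve planarity.

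\textbf{The base case is where all the content lies, and your sketch does not supply it.} You correctly isolate the target $\tw(C-F^*)=O(\sqrt{\mu})$, but the proposed shortcut-exchange presupposes that a minimum dual is close to ``a single $B_1/B_2$ curve plus $X$-cocoons'', which is not a property of an arbitrary minimum multicut dual; you yourself flag that the formalisation ``is expected to require a careful case analysis'', i.e.\ it is not yet an argument. The paper obtains the relevant treewidth bound only under the hypothesis that every vertex satisfies $\textup{dist}_{G,S}(v,X)\le 2$ for the cut $S$ under consideration (\cref{disttw}), a hypothesis that is false for generic planar instances and whose enforcement is the entire purpose of the state machinery. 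Accordingly, the paper's proof of \cref{creins} never branches on edges of $E_\pi$: it enumerates partitions of $W\cup T$ (\cref{initialize}) and, for each, grows the classes via a potential-guided search tree (\cref{makerelevant}, \cref{dealincomplete}) until a \emph{complete} state is reached; then either a thin class produces a strictly smaller subinstance (\cref{phigross}), or \cref{hauptred} guarantees the distance-to-$X$ condition so that \cref{compphiklein} solves the instance directly in $f(\pi,t)n^{O(\sqrt{\mu})}$.
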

  
  We now give an overview of the proof of \cref{creins}. Let $(G,H)$ be a $\kplanar$ instance of \mc.
 The idea is to have a branching algorithm that gradually guesses more and more crucial information about a minimum multicut for $(G',H')$ while the number of directions it branches into remains bounded. The current status of this procedure is displayed by a subpartition of the vertices of $V(G)$ with some extra properties. We now give the definition of a state, which is the crucial tool to track the progress of our algorithm.
 \begin{defn}
 A {\it state} is a subpartition $\mathcal{P}$ of $V(G)$ with $W \cup T \subseteq \bigcup \mathcal{P}$ and $Y \cap (W \cup T)\neq \emptyset$ for all $Y \in \mathcal{P}$.
 \end{defn} Given a certain state, the objective is to understand whether there exists a minimum multicut $S$ for $(G,H)$ such that the structure of the components of $G_0\setminus S$ reflects the structure of $\mathcal{P}$.  In order to make this more formal, we need the following definitions: Given a set $V$ and two subpartitions $\mathcal{P}^1$ and $\mathcal{P}^2$ of $V$, we say that $\mathcal{P}^2$ {\it extends} $\mathcal{P}^1$ if $|\mathcal{P}^1|=|\mathcal{P}^2|$ and for every $Y \in \mathcal{P}^1$, there exists some $\overline{Y}\in \mathcal{P}^2$ with $Y \subseteq \overline{Y}$ such that 
  for any distinct $Y_1,Y_2 \in \mathcal{P}^1$, we have $\overline{Y}_1\neq \overline{Y}_2$.  Given a set $S\subseteq E(G)$, we use $S_0$ for $S \cap E(G_0)$ and $\mathcal{K}(G_0\setminus S_0)$ for the unique partition of $V(G)$ in which two vertices are in the same partition class if and only if they are in the same component of $G_0\setminus S_0$.  For a state $\mathcal{P}$, we say that a multicut $S$ for $(G,H)$ {\it respects} $\mathcal{P}$ if $\mathcal{K}(G_0\setminus S_0)$ extends $\mathcal{P}$ and we say that a state is {\it valid} if there exists a minimum multicut for $(G,H)$ respecting it. Further, $\mathcal{P}$ is {\it maximum valid} if $|\mathcal{P}|$ is maximum among all valid states.
 
 The idea of our algorithm now consists of considering a valid state, choosing a vertex that is not contained in any class of the state and making the classes larger by adding this vertex to one of them. As we do not know which class this vertex should be added to, we try all classes of the state and branch in these directions. The difficulty is now to carefully choose the vertex in consideration and make some additional modifications on the states so that this procedure terminates sufficiently fast.

In order to be able to explain how this works, we first need some more definitions.  Given disjoint sets $Y_1,Y_2\subseteq V(G)$, we use $\lambda_{G_0}(Y_1,Y_2)$ to denote the size of the smallest set of edges in $E(G_0)$ whose deletion from $G_0$ results in a graph in which there exists no component containing both a vertex of $Y_1$ and a vertex of $Y_2$. 
Next, for a state $\mathcal{P}$ and some $Y \in \mathcal{P}$, we define $\alpha_{G,\mathcal{P}}(Y)=\lambda_{G_0}(Y,\bigcup \mathcal{P}\setminus Y)-\lambda_{G_0}(Y\cap (W \cup T), (W \cup T) \setminus Y)$. Intuitively speaking, this measure describes how much more costly it is to separate the class from the other classes of the state in comparison to only separating the corresponding elements of $W \cup T$. Given a state $\mathcal{P}$, we now classify its classes according to their value of $\alpha$ and the value of $\alpha$ of their neighboring classes. More formally, given a state $\mathcal{P}$ and some $Y \in \mathcal{P}$, we say that $Y$ is {\it fat} in $\mathcal{P}$ if $\alpha_{G_0,\mathcal{P}}(Y)\geq \pi(t+1)+1$. For some $Y \in \mathcal{P}$ that is not fat, we say that $Y$ is {\it fat-neighboring} in $\mathcal{P}$ if there exists some fat $Y' \in \mathcal{P}$, some $u \in Y$, and some $v \in Y'$ such that $uv \in E(G_0).$ Finally, for some $Y \in \mathcal{P}$ which is neither fat nor fat-neighboring in $\mathcal{P}$, we say that $Y$ is {\it thin} in $\mathcal{P}$. We refer by $\mathcal{P}_{\textup{fat}}, \mathcal{P}_{\textup{f-n}}$, and $\mathcal{P}_{\textup{thin}}$ to the fat, fat-neighboring, and thin classes of $\mathcal{P}$, respectively. 

In order to make use of this distinction, we crucially need a structural property of minimum multicuts for $(G,H)$. Namely, the following result shows that, for any minimum multicut $S$ for $(G,H)$ and any vertex set of a component of $G_0\setminus S_0$, one of two things holds: either the value of $\alpha$ is not too large or a terminal of $X$ (of the extended biclique decomposition $(B_1,B_2,I,X)$ is close to this vertex set in a certain sense. In order to measure this proximity of certain vertex sets with respect to a multicut, we introduce the following notion:  Let $S\subseteq E(G)$, and $Y_1,Y_2 \subseteq V(G)$. Then we denote by $\textup{dist}_{G_0,S_0}(Y_1,Y_2)$ the smallest integer $q$ such that $G_0$ contains a $Y_1Y_2$-path $P$ with $|E(P)\cap S_0|=q$. We are now ready to state the following structural result for minimum multicuts.

\begin{restatable}{lem}{hauptred}\label{hauptred}
Let $(G,H)$ be a $\kplanar$ instance of $\mc$, let $S$ be a minimum multicut for $(G,H)$ and let $Y \in \mathcal{P}_{\textup{fat}}$ for $\mathcal{P}=\mathcal{K}(G_0\setminus S_0)$. Then  $\textup{dist}_{G_0,S_0}(Y,X)\leq 1$.
\end{restatable}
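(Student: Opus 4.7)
The plan is to argue by contradiction: assuming $S$ is a minimum multicut, $Y$ is fat, and $\textup{dist}_{G_0,S_0}(Y,X) \ge 2$, I would construct a strictly cheaper multicut. A useful first consequence of the distance hypothesis is that no $\mathcal{P}$-class $Y'\neq Y$ sharing an edge of $S_0$ with $Y$ can meet $X$: otherwise, concatenating such an edge with a $Y'$-internal path to an $X$-vertex would produce a $YX$-path in $G_0$ using only a single $S_0$-edge, contradicting $\textup{dist}_{G_0,S_0}(Y,X)\geq 2$.

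Set $T_Y := Y \cap (W \cup T)$, $\lambda := \lambda_{G_0}(T_Y,(W\cup T)\setminus T_Y)$, and fix a minimum $(T_Y,(W\cup T)\setminus T_Y)$-cut $\hat S_0 = \delta_{G_0}(Y^*)$ in $G_0$, so $|\hat S_0|=\lambda$; $Y^*$ may be chosen, for instance, as the inclusion-minimal such side after submodular uncrossing, in order to keep $Y^*$ as aligned with $Y$ as possible. Because $Y$ is a component of $G_0\setminus S_0$, its $G_0$-boundary $\delta_{G_0}(Y)$ lies inside $S_0\subseteq S$ and satisfies $|\delta_{G_0}(Y)| \ge \lambda_{G_0}(Y,V(G)\setminus Y) = \lambda + \alpha_{G_0,\mathcal{P}}(Y) \ge \lambda + \pi(t+1) + 1$. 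Define the candidate
\[
\tilde S \;:=\; (S \cup \hat S_0)\setminus \delta_{G_0}(Y).
\]
Since $\delta_{G_0}(Y)\subseteq S$, a direct count gives $|\tilde S| \le |S| + |\hat S_0| - |\delta_{G_0}(Y)| \le |S|-\pi(t+1)-1$, so it suffices to augment $\tilde S$ into a multicut of $(G,H)$ by adding at most $\pi(t+1)$ extra edges.

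To do so, I would classify each demand pair $uv\in E(H)$ according to how $\{u,v\}$ intersects $T_Y$. The case $u,v \in T_Y$ cannot occur, because then $u$ and $v$ would sit in the same component $Y$ of $G_0\setminus S_0$, hence in the same component of $G\setminus S$, contradicting the multicut property of $S$. The case $u\in T_Y$, $v\notin T_Y$ is handled in $G_0$ by the inclusion $\hat S_0 \subseteq \tilde S$; a demand pair of this form can be reconnected in $G\setminus \tilde S$ only by a detour using an edge of $E_\pi$. The delicate case is $u,v\notin T_Y$: any new $u$-$v$ path in $G\setminus \tilde S$ must use a restored edge of $\delta_{G_0}(Y)$ and hence visit $Y\setminus Y^*$. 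Combining the extended biclique decomposition $(B_1,B_2,I,X)$ with the distance hypothesis (which forces $X$ to be disjoint from $Y$ and from every $\mathcal{P}$-class adjacent to it) limits the reconnectable pairs essentially to $B_1$-$B_2$ pairs whose detour is mediated by an $E_\pi$-edge running through a $W$-vertex near $Y$.

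The technical heart of the argument is the bookkeeping behind the repair bound: one must exhibit a set $R\subseteq E(G)$ of size at most $\pi(t+1)$ such that $\tilde S\cup R$ is a multicut. The intuition is that detours through $Y$ are forced to use one of the $\pi$ edges of $E_\pi$, and each such edge can be charged with at most $t+1$ repair edges — for example, by cutting the $E_\pi$-edge itself together with a bounded number of edges at its $W$-endpoint per terminal class that could be rerouted. Verifying this charging precisely, and verifying that the choice of $Y^*$ interacts cleanly with $Y$ under the submodular uncrossing, is the main obstacle. Once the repair bound is in place, $|\tilde S\cup R|\le |S|-\pi(t+1)-1+\pi(t+1)=|S|-1<|S|$ contradicts the minimality of $S$ and completes the proof.
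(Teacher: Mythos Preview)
Your overall contradiction strategy matches the paper's: remove $\delta_{G_0}(Y)$, reinstate the minimum $(T_Y,(W\cup T)\setminus T_Y)$-cut, and argue that at most $\pi(t+1)$ further repair edges suffice. Your arithmetic for $|\tilde S|$ is correct, and your first observation (that no $\mathcal{P}$-class adjacent to $Y$ meets $X$) is exactly what the paper uses.

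However, there is a genuine gap: you never actually construct the repair set $R$ of size $\le \pi(t+1)$; you only speculate about a charging scheme (``$t+1$ repair edges per $E_\pi$-edge'') and explicitly flag it as the unresolved ``main obstacle''. The paper fills this gap with a concrete and rather different repair. First, it proves as a separate lemma that if two components $Q_1,Q_2$ of $G_0\setminus S_0$ both have their terminals contained in $B_i\cup I$ for the same $i$, then $d_{G_0}(V(Q_1),V(Q_2))\le \pi$; otherwise one could merge $Q_1,Q_2$ and throw in all of $E_\pi$ for a strictly cheaper multicut. Second, using $Y\cap T\subseteq B_1\cup I$ (WLOG) and the distance hypothesis, it observes that every component adjacent to $Y$ is $X$-free, so the ones containing a $B_1$-terminal form a set $\mathcal{Q}_1$ of size $\le t$, and the total number $|F_1|$ of $G_0$-edges between $Y$ and $\bigcup\mathcal{Q}_1$ is at most $\pi t$ by the previous lemma. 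The repair is then simply $F_1\cup E_\pi$, of size $\le \pi(t+1)$. The multicut verification is clean: for $t_1\in R$ the cut $\delta_{G_0}(R)\cup E_\pi$ isolates it; for $t_1\notin R$ with (WLOG) $t_1\in B_1\cup X$, its original component $Q$ is either non-adjacent to $Y$ or lies in $\mathcal{Q}_1$, and in both cases $\delta_{G_0}(V(Q))\cup E_\pi$ survives in the new cut. Your per-$E_\pi$-edge charging idea does not obviously yield this, and without the auxiliary bound on $d_{G_0}(Y,V(Q))$ for same-side neighbours $Q$, the repair step is not complete.
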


We now explain the strategy of our algorithm. Given a state with an extra property that will be explained later (``relevant state''), we choose a vertex that is not contained in any class of the state and has a neighbor in a thin class of the state in $G_0$. We then guess a partition class which this vertex will be added to and modify the state accordingly. In order to show the functionality of this algorithm, there are two points that need further explanation. First, we need to explain how to conclude when the described modification is no longer possible and second, we need to explain why this procedure terminates sufficiently fast.

We first explain the scenario that we can no longer execute the described modification, so there does not exist any vertex in the graph that is not contained in any class of the state $\mathcal{P}$ in consideration and that has a neighbor in a thin class of $\mathcal{P}$ in $G_0$. We call such a state {\it complete}, all other states are called {\it incomplete}. For complete states, we need to consider two different cases, depending on whether $\mathcal{P}$ contains a thin class.

The case that there exists a thin class $Y$ can be handled with elementary means. Observe that for this thin class, all its neighbors in $G_0$ are contained in a different class of $\mathcal{P}$. Hence, if $\mathcal{P}$ is valid, then all edges in $\delta_{G_0}(Y)$ are contained in a minimum multicut for $(G,H)$. We can hence delete these edges and solve the remaining instance, which turns out to be significatly smaller in a certain sense. This is summarized in the following result.

\begin{restatable}{lem}{phigross}\label{phigross}
Let $(G,H)$ be a connected $\kplanar$ instance of $\mc$ and let  a complete valid state $\mathcal{P}$ of $(G,H)$ with $\mathcal{P}_{\textup{thin}}\neq \emptyset$ be given. Then in polynomial time, we can compute an optimumbound extended subinstance $(S',(G',H'))$ of $(G,H)$ such that $\pi(G')+|V(H')|<\pi(G)+|V(H)|$.
\end{restatable}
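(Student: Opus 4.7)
The plan is to pick any thin class $Y\in\mathcal{P}_{\textup{thin}}$ and build the extended subinstance $(S',(G',H'))$ with $G' := G - Y$, $H' := H - (Y\cap T)$, and $S' := \delta_G(Y)$. Two structural facts about $Y$ will drive the argument. Because $\mathcal{P}$ is complete at thin classes, every $G_0$-neighbour of $Y$ outside $Y$ lies in some class of $\mathcal{P}\setminus\{Y\}$; coupled with the validity of $\mathcal{P}$ — which supplies a min multicut $S$ whose $G_0$-component partition $\mathcal{K}(G_0\setminus S_0)$ extends $\mathcal{P}$, so that distinct classes end up in distinct components — this forces $\delta_{G_0}(Y)\subseteq S_0$. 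Moreover $Y$ itself lies in a single such component, which must then equal $Y$, and since $G_0\setminus S_0\subseteq G\setminus S$, any two demand-paired terminals in $Y$ would remain connected in $G\setminus S$; so $H[Y\cap T]$ is edgeless.

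From here, the parameter decrease is immediate. Since $Y\cap(W\cup T)\neq\emptyset$, either $Y\cap T\neq\emptyset$ (giving $|V(H')|<|V(H)|$) or $Y\cap W\neq\emptyset$ (giving $\pi(G')\leq\pi(G)-1$, because deleting $Y$ destroys at least one endpoint of an $E_\pi$-edge). The multicut half of the optimumbound property is equally direct: committing $\delta_G(Y)$ to $S'$ isolates $Y$ in $G\setminus S'$, so bridging demands are trivially severed, internal $Y$-demands do not exist by the edgelessness of $H[Y\cap T]$, and external demands are handled by any multicut $S''$ for $(G',H')$ acting on $G-Y$.

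The hard part is the minimality half of the optimumbound property: for every min multicut $S''$ of $(G',H')$ we need $S'\cup S''$ to itself be a min multicut of $(G,H)$, which is equivalent to producing a min multicut $S^*$ of $(G,H)$ with $\delta_G(Y)\subseteq S^*$. The $G_0$-edges are already supplied by the first structural fact, leaving only the $E_\pi$-edges of $\delta(Y)$ to be forced in. I would proceed by an uncrossing-style exchange: starting from a respecting min multicut $S$, its $Y$-component $\hat{Y}\supseteq Y$ in $G\setminus S$ extends $Y$ only through $E_\pi$-edges of $\delta(Y)\setminus S$, and I would swap the cut $\delta_G(\hat{Y})\cap S$ for $\delta_G(Y)$, isolating $Y$ while re-absorbing the appendages $\hat{Y}\setminus Y$ into their outside neighbours. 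Two things need to be verified: that the swap does not increase the total cut size (which I expect to follow from the minimality of $S$ together with the bounds $\alpha_{G_0,\mathcal{P}}(Y)\leq\pi(t+1)$ and $|E_\pi|=\pi$ controlling the boundary edges involved in the trade) and that no demand pair is silently reconnected by the re-merging (relying on the respecting property of $S$ and on the edgelessness of $H[Y\cap T]$). I expect formalising this exchange to be the technical heart of the proof; the polynomial-time claim is then immediate, since the reduction itself is purely combinatorial once $Y$ is chosen.
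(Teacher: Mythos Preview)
Your structural observations are correct: completeness plus validity do force $\delta_{G_0}(Y)\subseteq S$ and $\overline Y=Y$, and your handling of the parameter drop and of the multicut property of $S'\cup S''$ is fine. The gap is exactly where you flagged it, and it is fatal: when an $E_\pi$-edge leaves $Y$, there need not exist \emph{any} minimum multicut containing all of $\delta_G(Y)$, so the uniform choice $S'=\delta_G(Y)$, $G'=G-Y$ is not optimumbound and the proposed exchange cannot be completed.

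A concrete failure: take $V(G)=\{t_1,t_2,t_3,a,c\}$ with $E(G_0)=\{t_1a,\ at_2,\ t_1c,\ ct_3\}$ and $E_\pi=\{ac\}$; let $E(H)=\{t_1t_2,\ t_1t_3\}$. Then $S=\{t_1a,\ t_1c\}$ is a minimum multicut and $\mathcal P=\{\{t_1\},\{a,t_2\},\{c,t_3\}\}$ is a complete valid state in which every class is thin. For $Y=\{a,t_2\}$ your $S'=\delta_G(Y)=\{t_1a,\ ac\}$, but $G\setminus S'$ still connects $t_1$ to $t_3$ through $c$, so no multicut of size $2$ contains $S'$; accordingly $|S'\cup S''|\ge 3$ for any multicut $S''$ of $(G-Y,H-t_2)$. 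Running your exchange here gives $\hat Y=\{a,t_2,c,t_3\}$ and $\delta_G(\hat Y)=S$, hence $S^*=\{t_1a,\ ac\}$, which is not a multicut: re-absorbing the appendage $\{c,t_3\}$ into its outside neighbour $t_1$ reconnects the demand pair $t_1t_3$. Neither the respecting property of $S$ nor the edgelessness of $H[Y\cap T]$ helps, since the violated demand involves no vertex of $Y$.

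The paper sidesteps this by taking only $S'=\delta_{G_0}(Y_0)$, which is already contained in the respecting minimum multicut $S$, and then splitting into two cases. If some $e\in E_\pi$ has exactly one endpoint in $Y_0$, it sets $G'=G\setminus S'$ and $H'=H$; optimumbound is immediate from $S'\subseteq S$ via Proposition~\ref{cutequiv}, and since $Y_0$ is now a component of $G'\setminus E_\pi$ with $e$ crossing to another component, Lemma~\ref{conn} gives $\pi(G')<\pi(G)$. Otherwise no $E_\pi$-edge crosses $Y_0$, so $\delta_G(Y_0)=\delta_{G_0}(Y_0)$ and (by Proposition~\ref{nontriv}) $Y_0\cap T\neq\emptyset$; only then does the paper set $G'=G-Y_0$, $H'=H-Y_0$ as you do, verifying optimumbound directly and getting $|V(H')|<|V(H)|$. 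The case split is precisely what replaces the exchange you could not carry out.
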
 

We now turn to the case that $\mathcal{P}$ does not contain a thin set. In order to handle this case, geometrical arguments on multicut duals will be needed. Namely, we show that if $\mathcal{P}$ is valid and does not contain a thin set, then the problem of finding a minimum multicut for $(G,H)$ can be reduced to solving a planar instance of multicut, which can be solved sufficiently fast due to \cref{algogenext} and an argument on the treewidth of the multicut duals for this instance. The treewidth bound is based on the fact that every face of a multicut dual is close to a face containing a terminal of $X$. Formally, we prove the following result, where $\tau(\mathcal{P})$ denotes the sum of the number of components of $G_0[Y]$ over all $Y \in \mathcal{P}$.

\begin{restatable}{lem}{compphiklein}\label{compphiklein}
Let $(G,H)$ be a connected $\kplanar$ instance of $\mc$ and let  a complete maximum valid state $\mathcal{P}$ of $(G,H)$ with $\mathcal{P}_{\textup{thin}}=\emptyset$ be given. Then a minimum multicut for $(G,H)$ can be computed in $f(\pi,\tau(\mathcal{P}))n^{O(\sqrt{\mu})}$.
\end{restatable}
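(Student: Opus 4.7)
The plan is to reduce the problem to an extended planar instance on $G_0$ and invoke \cref{algogenext}, with the treewidth bound furnished by combining \cref{hauptred} with \cref{domplanar}. Fix an extended biclique decomposition $(B_1,B_2,I,X)$ of $H$ with $|X|\leq\mu$.

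I would first branch over $S_\pi:=S\cap E_\pi$ for the hypothetical minimum multicut $S$ respecting $\mathcal{P}$, giving at most $2^\pi$ branches. For each branch, the classes of $\mathcal{P}$ together with $E_\pi\setminus S_\pi$ determine \emph{super-components}, namely those classes that are merged in $G\setminus S$; the branch is discarded unless no demand edge $t_1t_2\in E(H)$ has both endpoints in a single super-component. For each surviving branch I build an extended planar instance $(G_0,H_0,F^*)$, where $H_0$ is the complete graph on one chosen representative per class $Y\in\mathcal{P}$, and where $F^*$ is a set of $O(\pi+\tau(\mathcal{P}))$ faces of $G_0$ that includes a constant number of faces around each endpoint of $E_\pi$ and around each component of each $G_0[Y]$. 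The construction is arranged so that the sets $S_0\subseteq E(G_0)$ with $S_\pi\cup S_0$ a minimum multicut for $(G,H)$ respecting $\mathcal{P}$ and consistent with the branch correspond precisely to the minimum-weight multicuts of this extended planar instance.

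The crux is to bound the treewidth of $C-F^*$ for every subcubic inclusionwise minimal minimum-weight multicut dual $C$ of $(G_0,H_0)$. Letting $S_0=e_{G_0}(C)$, \cref{dualcut} gives that $S_0$ is the multicut encoded by $C$, and by the validity of $\mathcal{P}$ we may assume $\mathcal{K}(G_0\setminus S_0)$ extends $\mathcal{P}$. Since $\mathcal{P}_{\textup{thin}}=\emptyset$, every face of $C$ corresponds to a fat or fat-neighboring class. Applying \cref{hauptred} to each fat class shows that its $C$-face either contains a vertex of $X$ or shares a boundary edge with a $C$-face that does; fat-neighboring faces are within two dual edges of a fat face by definition. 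Exploiting that $C$ is subcubic and inclusionwise minimal, for each of the at most $\mu$ vertices of $X$ a constant-size set of nearby $C$-vertices $5$-dominates the boundary vertices of all surrounding fat and fat-neighboring faces. The union over $X$ is a $5$-dominating set of $C-F^*$ of size $O(\mu)$, and then \cref{domplanar} yields $\tw(C-F^*)=O(\sqrt{\mu})$.

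Finally, \cref{algogenext} solves each extended planar instance in time $f(|F^*|,|V(H_0)|)\,n^{O(\sqrt{\mu})}$; since $|V(H_0)|\leq|\mathcal{P}|\leq\tau(\mathcal{P})$ and $|F^*|=O(\pi+\tau(\mathcal{P}))$, this fits into $f(\pi,\tau(\mathcal{P}))n^{O(\sqrt{\mu})}$, and the $2^\pi$-fold branching only inflates this factor. Returning the best solution across consistent branches yields the claimed complexity. I expect the main obstacle to be the treewidth argument: defining $F^*$ so that it captures exactly the combinatorial information needed to distinguish legitimate extensions of $\mathcal{P}$ without exceeding $O(\pi+\tau(\mathcal{P}))$ in size, and then showing that every subcubic inclusionwise minimal minimum-weight multicut dual is automatically structured so that \cref{hauptred} and the $5$-dominating set construction combine to give treewidth $O(\sqrt{\mu})$, will require delicate geometric-combinatorial analysis.
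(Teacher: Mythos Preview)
Your overall plan---reduce to a planar instance on $G_0$, bound the treewidth of multicut duals via \cref{hauptred} and \cref{domplanar}, then apply \cref{algogenext}---matches the paper's architecture. But the reduction step has a genuine gap.

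The set $F^*$ in \cref{algogenext} plays no role in constraining which edge sets count as solutions; it only enters the running-time analysis. So your extended planar instance $(G_0,H_0,F^*)$ is really just the ordinary planar instance $(G_0,H_0)$, and a minimum multicut for it has no reason to respect $\mathcal{P}$. With only one representative per class in $H_0$, a minimum multicut of $(G_0,H_0)$ can separate the chosen representatives while leaving other components of the $G_0[Y]$'s connected to the wrong side; once you add back $E_\pi\setminus S_\pi$, you may reconnect terminals that $H$ requires to be separated. The paper fixes this not through $F^*$ but by (i) taking one representative \emph{per component} of each $G_0[Y]$, so $|V(H')|\le\tau(\mathcal{P})$, and (ii) giving infinite weight to every edge inside each $G_0[Y]$, so that any finite-weight multicut keeps each such component intact. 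With this mechanism they can prove both that a minimum-weight multicut of the planar instance combines with $S_\pi$ to a multicut of $(G,H)$, and---using the maximality of $\mathcal{P}$---that it \emph{respects} $\mathcal{P}$.

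This gap propagates to your treewidth argument. \cref{hauptred} is a statement about minimum multicuts of $(G,H)$ and the partition $\mathcal{K}(G_0\setminus S_0)$ they induce; to invoke it for a multicut dual $C$ you need $e_{G_0}(C)$ to arise from a minimum multicut of $(G,H)$ respecting $\mathcal{P}$, which is exactly what the infinite-weight reduction guarantees and what your construction does not. Once this is in place, the paper gets $\textup{dist}_{G_0,S_0}(v,X)\le 2$ for every vertex $v$ (fat classes are at distance $\le 1$ by \cref{hauptred}, fat-neighboring at $\le 2$ by definition), and the face-vertex augmentation plus \cref{domplanar} gives $\tw(C)=O(\sqrt{\mu})$ outright---so the paper actually takes $F^*=\emptyset$, and your face-marking around $W$ and the components of the $G_0[Y]$ is not needed.
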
 

It remains to show that we can design our procedure so that we reach a complete state sufficiently fast. In order to keep track of the progress of our algorithm, we need a measure of how far advanced a given state is. To this end, given a state $\mathcal{P}$, we first define an integer $\kappa(\mathcal{P},Y)$  for every $Y \in \mathcal{P}$ by
\[
    \kappa(\mathcal{P},Y)= \left\{\begin{array}{lr}
        2(\pi(t+1)+1), & \text{for }Y \in \mathcal{P}_{\textup{fat}},\\
        \pi(t+1)+1+\alpha_{G_0,\mathcal{P}}(Y), & \text{for }Y \in \mathcal{P}_{\textup{f-n}},\\
        \alpha_{G_0,\mathcal{P}}(Y), & \text{for }Y \in \mathcal{P}_{\textup{thin}}.
        \end{array}\right\}
  \]
 Now, we define $\kappa(\mathcal{P})=\sum_{Y \in \mathcal{P}}\kappa(\mathcal{P},Y)$.

 Our objective is to make the value of $\kappa$ increase in each iteration of the algorithm. In order to make this possible, we need that the addition of a vertex to a given set in the state makes the value of $\alpha$ of this set increase. In order for that to hold, we need the classes in the state to be as large as possible in a certain sense. More precisely, given a graph $G$ and two disjoint sets $Y_1,Y_2\subseteq V(G)$, we say that a set $Y_3$ is {\it relevant} for $(Y_1,Y_2)$ in $G$ if $Y_1\subseteq Y_3, Y_2\cap Y_3=\emptyset, d_G(Y_3)=\lambda_G(Y_1,Y_2)$ and $Y_3$ is maximum among all sets with these properties. Further, we say that a state $\mathcal{P}$ is {\it relevant} if for all $Y \in \mathcal{P}$, we have that $Y$ is a relevant set for $(Y,\bigcup \mathcal{P}\setminus Y)$.
We can make a state $\mathcal{P}$ relevant by iteratively replacing each class $Y\in \mathcal{P}$  by the set which is relevant in this sense.  However, it needs careful verification that this process does not destroy any of the important properties of the state.

\begin{restatable}{lem}{makerelevant}\label{makerelevant}
Let a connected $\kplanar$ instance $(G,H)$ of $\mc$ and a state $\mathcal{P}$ of $(G,H)$ be given. Then, in polynomial time, we can compute a relevant state $\mathcal{P}'$ of $(G,H)$ such that $|\mathcal{P}'|=|\mathcal{P}|,\tau(\mathcal{P}')\leq \tau(\mathcal{P})$, $\kappa(\mathcal{P}')\geq \kappa(\mathcal{P})$ and, if $\mathcal{P}$ is maximum valid, then so is $\mathcal{P}'$.
\end{restatable}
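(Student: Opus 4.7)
The plan is to iteratively enlarge each class of $\mathcal{P}$ to its unique maximum ``tight'' superset, using a standard submodularity-based uncrossing of minimum cuts in $G_0$. For any $Y\in\mathcal{P}$, write $Z=\bigcup\mathcal{P}\setminus Y$ and $\lambda=\lambda_{G_0}(Y,Z)$. Among all sets $Y^{\star}$ with $Y\subseteq Y^{\star}\subseteq V(G)\setminus Z$ and $d_{G_0}(Y^{\star})=\lambda$, submodularity of the cut function supplies a unique inclusion-maximal one: if $A,B$ are both such sets, then $d_{G_0}(A\cup B)+d_{G_0}(A\cap B)\leq 2\lambda$, while both $A\cup B$ and $A\cap B$ are $(Y,Z)$-separators, so they each attain $\lambda$ and $A\cup B$ is again tight. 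This unique $Y^{\star}$ is computable in polynomial time from a maximum $Y$-to-$Z$ flow in $G_0$ by taking the vertices reachable from $Y$ in the residual graph.

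The algorithm now iteratively finds a class $Y$ of the current state with $Y\neq Y^{\star}$ and replaces $Y$ by $Y^{\star}$. Since $Y^{\star}\cap Z=\emptyset$, the resulting subpartition still has $|\mathcal{P}|$ classes, each meeting $W\cup T$ (because $Y^{\star}\supseteq Y$). Every step strictly enlarges $|\bigcup\mathcal{P}|$, so the procedure halts in polynomially many rounds at a relevant state $\mathcal{P}'$. For $\tau(\mathcal{P}')\leq\tau(\mathcal{P})$, I would show that every component of $G_0[Y^{\star}]$ meets $Y$: a component $C\subseteq Y^{\star}\setminus Y$ has no $G_0$-edges to $Y^{\star}\setminus C$, so $d_{G_0}(Y^{\star}\setminus C)=\lambda - d_{G_0}(C)$, while $Y^{\star}\setminus C$ is still a $(Y,Z)$-separator, forcing $d_{G_0}(C)=0$; then $C$ is a full component of $G_0$, and the connectedness of $G$ together with the fact that the endpoints of $E_\pi$ lie in $W\subseteq\bigcup\mathcal{P}$ forces $C\cap(W\cup T)\neq\emptyset$, contradicting $C\cap Y=C\cap Z=\emptyset$. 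Hence every component of $G_0[Y^{\star}]$ contains at least one component of $G_0[Y]$, yielding the bound. For $\kappa(\mathcal{P}')\geq\kappa(\mathcal{P})$, the first term $\lambda_{G_0}(Y,Z)$ of $\alpha_{G_0,\mathcal{P}}(Y)$ is unchanged by the expansion (both $Y$ and $Y^{\star}$ admit a tight cut of value $\lambda$), while for every other class the analogous first term weakly grows because its target set has enlarged; the second term is untouched because $Y^{\star}\cap(W\cup T)=Y\cap(W\cup T)$ (any terminal outside $Y$ lies in $Z$, hence outside $Y^{\star}$). A case analysis over the fat/fat-neighboring/thin classification then shows each class's $\kappa$-contribution is non-decreasing.

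Preservation of maximum validity is the main obstacle, and is where the careful reconfiguration comes in. Given a minimum multicut $S$ witnessing that $\mathcal{P}$ is maximum valid, let $\overline{Y}\in\mathcal{K}(G_0\setminus S_0)$ be the component containing $Y$; uncrossing $\overline{Y}$ with $Y^{\star}$ combined with the maximality of $Y^{\star}$ forces $\overline{Y}\subseteq Y^{\star}$. The natural candidate $S'=(S\setminus\delta_{G_0}(\overline{Y}))\cup\delta_{G_0}(Y^{\star})$ satisfies $w(S')\leq w(S)$ because $\delta_{G_0}(\overline{Y})\subseteq S_0$ and $w(\delta_{G_0}(Y^{\star}))=\lambda\leq w(\delta_{G_0}(\overline{Y}))$, and leaves the $E_\pi$-part of $S$ untouched. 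The delicate task is to verify that $S'$ is still a multicut and that $\mathcal{K}(G_0\setminus S'_0)$ has exactly $|\mathcal{P}|$ components extending $\mathcal{P}'$: any new demand path in $G\setminus S'$ must enter $Y^{\star}\setminus\overline{Y}$ via a ``restored'' $G_0$-edge in $\delta_{G_0}(\overline{Y})\setminus\delta_{G_0}(Y^{\star})$ and must then exit $Y^{\star}$ via an $E_\pi$-edge outside $S$, but then a bookkeeping argument on the pre-existing component structure of $G\setminus S$ traces the path back to an $E_\pi$-bridge between two components $\overline{Y_{j_1}},\overline{Y_{j_2}}$ already in $G\setminus S$, contradicting that $S$ separates the demand pair. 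To handle the possibility that adding $\delta_{G_0}(Y^{\star})$ splits a component $\overline{Y_j}$ straddling $\partial Y^{\star}$ and thereby increases $|\mathcal{K}(G_0\setminus S'_0)|$, one chooses $S$ among all minimum multicuts respecting $\mathcal{P}$ that lexicographically maximizes $(|\overline{Y_1}|,\ldots,|\overline{Y_k}|)$: the uncrossing then forces $\overline{Y_j}=Y_j^{\star}$ for every class, so that no straddling and no splitting can occur, and $S$ itself already respects $\mathcal{P}'$. Carrying out this reconfiguration class by class, and showing that the lexicographic maximum is attained by a minimum multicut, is the technical heart of the argument.
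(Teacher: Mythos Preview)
Your overall architecture --- iteratively replace each class $Y$ by its relevant set $Y^\star$ and verify the invariants one step at a time --- matches the paper (the paper packages the single step into Lemma~\ref{makefat} and does one pass rather than iterating to a fixpoint, but that is cosmetic). Your arguments for $\tau$ and $\kappa$ are essentially the paper's Proposition~\ref{relev1} and Proposition~\ref{kappaincrease}.

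The validity-preservation step, however, has a real gap. The assertion that ``uncrossing $\overline{Y}$ with $Y^\star$ combined with the maximality of $Y^\star$ forces $\overline{Y}\subseteq Y^\star$'' is false. Submodularity only gives $d_{G_0}(\overline{Y}\cup Y^\star)\le d_{G_0}(\overline{Y})$; since $d_{G_0}(\overline{Y})$ can be strictly larger than $\lambda$ (a global minimum multicut $S$ need not induce a minimum $(Y,Z)$-cut in $G_0$), you cannot invoke the maximality of $Y^\star$ among sets of cut value exactly $\lambda$. With $\overline{Y}\not\subseteq Y^\star$ possible, your candidate $S'=(S\setminus\delta_{G_0}(\overline{Y}))\cup\delta_{G_0}(Y^\star)$ breaks: edges of $\delta_{G_0}(\overline{Y})$ leaving $\overline{Y}\setminus Y^\star$ toward $V(G)\setminus(\overline{Y}\cup Y^\star)$ are removed from $S$ but not reinserted, so formerly separated components can merge and $S'$ need not be a multicut. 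The lexicographic repair is also wrong as stated: the relevant sets $\{Y_j^\star\}$ in general do not cover $V(G)$, whereas the components $\{\overline{Y_j}\}$ partition it, so $\overline{Y_j}=Y_j^\star$ for every $j$ is impossible.

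The paper's fix is simple but essential: set $\tilde{Y_0}=\overline{Y_0}\cup Y_0^\star$ (the \emph{union}) and $S'=(S\setminus\delta_{G_0}(\overline{Y_0}))\cup\delta_{G_0}(\tilde{Y_0})$. Submodularity now directly gives $d_{G_0}(\tilde{Y_0})\le d_{G_0}(\overline{Y_0})$ (using $d_{G_0}(\overline{Y_0}\cap Y_0^\star)\ge\lambda=d_{G_0}(Y_0^\star)$), hence $|S'|\le |S|$; the multicut property survives because every element of $W\cup T$ lies in the same $\tilde{Y}$ as in $\overline{Y}$; and the fact that the $\tilde{Y}$ are \emph{exactly} the components of $G_0\setminus S'_0$ (rather than proper unions thereof) is forced by the maximum validity of $\mathcal{P}$, since any further split would exhibit a valid state with more classes.
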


With this result at hand, we can handle incomplete states. As pointed out before, given an incomplete relevant state $\mathcal{P}$, we choose a vertex $u_0 \in V(G)\setminus \bigcup\mathcal{P}$ that has a neighbor in a thin class $Y$ of $\mathcal{P}$. We now consider all states that are obtained by adding $u_0$ to one of these partition classes. If $u_0$ is added to a thin or a fat-neighboring partition class, then, as $\mathcal{P}$ is relevant, the value of $\alpha$ increases for this class and hence $\kappa$ increases. If $u_0$ is added to a fat class, then $Y$ becomes fat-neighboring and hence $\kappa$ also increases. This shows that a complete state can be reached sufficiently fast. In order to apply \cref{compphiklein}, we also need to make sure that $\tau(\mathcal{P})$ does not increase too fast. However, this follows readily by construction. Our handling of relevant incomplete states is summarized in the following result.

\begin{restatable}{lem}{dealincomplete}\label{dealincomplete}
Let $(G,H)$ be a connected $\kplanar$ instance of $\mc$ and let a relevant incomplete state $\mathcal{P}$ of $(G,H)$ be given. Then, in polynomial time, we can compute a collection $(\mathcal{P}^i)_{i \in [q]}$ of $q$ states of $(G,H)$ for some $q\leq 2\pi+t$ such that $\tau(\mathcal{P}^i)\leq \tau(\mathcal{P})+1$ and $\kappa(\mathcal{P}^i)>\kappa(\mathcal{P})$ hold for all $i \in [q]$ and if $\mathcal{P}$ is maximum valid, then there exists some $i \in [q]$ such that $\mathcal{P}^i$ is maximum valid.
\end{restatable}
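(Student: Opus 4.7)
Because $\mathcal{P}$ is incomplete, by definition there is some $u_0 \in V(G) \setminus \bigcup \mathcal{P}$ with a $G_0$-neighbor $v_0$ in some thin class $Y^* \in \mathcal{P}_{\textup{thin}}$. I would fix such a pair and, for each $Y \in \mathcal{P}$, take $\mathcal{P}^Y$ to be the state obtained by replacing $Y$ with $Y \cup \{u_0\}$ in $\mathcal{P}$; the branches $(\mathcal{P}^i)_i$ are precisely the $\mathcal{P}^Y$'s. The construction is polynomial, and since each class of $\mathcal{P}$ contains an element of $W \cup T$, we get $q \leq |\mathcal{P}| \leq \pi + t$ with a standard accounting of landmarks (possibly restricting to classes relevant to $u_0$ to pin down the exact constant). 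The bound $\tau(\mathcal{P}^Y) \leq \tau(\mathcal{P}) + 1$ is immediate because inserting one vertex into $G_0[Y]$ changes its component count by at most $+1$ while the other classes are unchanged. For maximum-valid preservation: if $S$ witnesses $\mathcal{P}$ maximum valid, then the components of $G_0 \setminus S_0$ biject with the classes of $\mathcal{P}$, so $u_0$ lies in a unique such component containing a unique $Y \in \mathcal{P}$; the same $S$ then witnesses validity of $\mathcal{P}^Y$, and maximality carries over because every state extending $\mathcal{P}^Y$ also extends $\mathcal{P}$.

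\textbf{The strict $\kappa$-increase.} The main work is proving $\kappa(\mathcal{P}^Y) > \kappa(\mathcal{P})$ for every $Y$. I would first establish a weak monotonicity on the other classes: for every $Z \in \mathcal{P} \setminus \{Y\}$, the target set in $\alpha_{G_0, \mathcal{P}^Y}(Z)$ is a strict superset of the one in $\alpha_{G_0, \mathcal{P}}(Z)$ (it gains $u_0$), while $u_0 \notin W \cup T$ leaves the second $\lambda$-term unchanged; hence $\alpha$ of $Z$ does not decrease and its category can only move upward, giving $\kappa(\mathcal{P}^Y, Z) \geq \kappa(\mathcal{P}, Z)$. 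It then remains to force the strict inequality from $Y$ (or from $Y^*$). If $Y$ is thin, relevance of $\mathcal{P}$ combined with uniqueness of the inclusion-maximum side of a min cut gives $\lambda_{G_0}(Y \cup \{u_0\}, \bigcup\mathcal{P} \setminus Y) > \lambda_{G_0}(Y, \bigcup\mathcal{P} \setminus Y)$, so $\alpha(Y \cup \{u_0\})$ strictly grows; since the new category is at least thin and the thin $\kappa$-value equals $\alpha$, we get $\kappa(Y \cup \{u_0\}) > \kappa(Y)$. If $Y$ is fat, then $Y \cup \{u_0\}$ remains fat and $\kappa$ is unchanged at $Y$; however, the edge $u_0 v_0$ now $G_0$-connects $Y^*$ to the fat class $Y \cup \{u_0\}$, promoting $Y^*$ from thin to fat-neighboring or fat and jumping $\kappa(Y^*)$ from strictly below $\pi(t+1)+1$ to at least $\pi(t+1)+1$.

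\textbf{Main obstacle.} The most delicate sub-case is when $Y$ is fat-neighboring. Relevance still yields a strict increase of $\alpha(Y \cup \{u_0\})$; if that new value crosses the threshold $\pi(t+1)+1$, the class becomes fat and $\kappa$ jumps by even more than in the thin case. The subtle situation is when the new $\alpha$ stays below the threshold, so that $Y \cup \{u_0\}$ is still fat-neighboring and contributes the same constant $\kappa$-value: a strict $\kappa$-increase from $Y$ alone is then not available. The natural way to recover is to exploit the new $G_0$-incidences at $u_0$, arguing that either the fat class that made $Y$ fat-neighboring still provides a fat-neighbor link to $Y^*$ through $u_0$ (promoting $Y^*$ in $\mathcal{P}^Y$), or that the relevance hypothesis on $\mathcal{P}$ forces a stronger quantitative jump in $\alpha$. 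Making this dichotomy precise while preserving the bound $q \leq \pi + t$ and the transfer of maximum validity is where I expect the real technical core of the proof to lie, and where the relevance assumption on $\mathcal{P}$ is most crucially exploited.
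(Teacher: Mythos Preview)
Your construction, the bounds on $q$ and $\tau$, and the maximum-validity transfer are exactly as in the paper. The only place you diverge is the fat-neighboring case of the $\kappa$-increase, which you correctly flag as unresolved.

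Here is how the paper closes that gap: the $\kappa$-value of a fat-neighboring class is taken to be $\pi(t+1)+1+\alpha_{G_0,\mathcal{P}}(Y)$, not the bare constant $\pi(t+1)+1$. (The displayed definition in the overview section omits the $+\alpha$ term, but the proofs of both Proposition~\ref{kappaincrease} and the present lemma consistently use the version with $+\alpha$; see for instance the line ``$\kappa(\mathcal{P}^2,\overline{Y})= 2(\pi(t+1)+1)>\pi(t+1)+1+\alpha_{G_0,\mathcal{P}^1}(Y)=\kappa(\mathcal{P}^1,Y)$'' in the proof of Proposition~\ref{kappaincrease}.) With this definition the fat-neighboring case collapses to the thin case: relevance of $\mathcal{P}$ gives $\lambda_{G_0}(Y\cup\{u_0\},\bigcup\mathcal{P}\setminus Y)>\lambda_{G_0}(Y,\bigcup\mathcal{P}\setminus Y)$, hence $\alpha_{G_0,\mathcal{P}^Y}(Y')>\alpha_{G_0,\mathcal{P}}(Y)$, hence $\kappa(\mathcal{P}^Y,Y')>\kappa(\mathcal{P},Y)$ whenever $Y'$ is not fat. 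Proposition~\ref{kappaincrease} then lifts this single strict inequality to $\kappa(\mathcal{P}^Y)>\kappa(\mathcal{P})$.

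Your proposed workaround --- promoting $Y^*$ via the edge $u_0v_0$ --- does not work as stated: if $Y$ is fat-neighboring and $Y'=Y\cup\{u_0\}$ stays fat-neighboring, then $Y^*$ now borders $Y'$ through $u_0v_0$, but $Y'$ is not fat, so $Y^*$ gains no fat neighbor and remains thin in $\mathcal{P}^Y$. The strict increase really has to come from the $\kappa$-value of $Y$ itself, and the $+\alpha$ term in the fat-neighboring bucket is precisely what delivers it.
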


The overall strategy for handling \cref{creins} is now to apply \cref{makerelevant} and \cref{dealincomplete} until we are left with a collection of complete states. These can then be handled by \cref{phigross} and \cref{compphiklein}.

The remainder of Section \ref{edgedel} can be summarized as follows. First, in Section \ref{prep}, we collect some simple preliminary results for our algorithm. Next, in Section \ref{struktur}, we give some structural results on minimum multicuts we need for our algorithm. In particular, we prove Lemma \ref{hauptred}. Next, in Section \ref{relevant}, we prove that arbitrary states can efficiently be made relevant, that is, we prove Lemma \ref{makerelevant}. After, in Section \ref{incompl}, we show how an incomplete relevant state can be replaced by a small number of states whose value for $\kappa$ is larger, meaning we prove Lemma \ref{dealincomplete}.
Next, in Section \ref{compl}, we show how to deal with complete states. More precisely, we prove Lemmas \ref{phigross} and \ref{compphiklein} that show how to obtain \cref{creins} once a complete valid state is available In Section \ref{reduce}, we prove Lemma \ref{creins} from the results obtained in the previous sections. Finally, in Section \ref{finalg}, we conclude Theorem \ref{extraedges} from \cref{creins}.

 \subsection{Preparatory results}\label{prep}

In this section, we collect some preliminary results we need for our algorithm. First, a solution of the multicut problem can be checked in polynomial time (e.g., with the use of breadth first search). 
 \begin{prop}\label{checkmc}
Let $(G,H)$ be an instance of $\mc$ and $S \subseteq E(G)$. Then we can check in polynomial time whether $S$ is a multicut for $(G,H)$.
\end{prop}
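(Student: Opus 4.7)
The plan is to give a direct algorithm, since the statement is routine: we simply compute the connected components of $G \setminus S$ and then verify the multicut condition on each demand edge. The running time will be clearly polynomial, so no main obstacle is expected.

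First, I would construct the graph $G' = G \setminus S$ explicitly in time $O(|V(G)| + |E(G)|)$ by removing from the edge list of $G$ each edge in $S$. Next, I would compute the partition of $V(G')$ into connected components, for instance by running a breadth-first search from each unvisited vertex; this standard procedure runs in time $O(|V(G)| + |E(G)|)$ and can be implemented so as to produce, for every vertex $v$, a component label $c(v)$.

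Finally, I would iterate over the edges of $H$ (of which there are at most $\binom{|V(H)|}{2}$) and, for each $t_1 t_2 \in E(H)$, compare $c(t_1)$ with $c(t_2)$. By the definition of a multicut recalled in \Cref{baseapp}, the set $S$ is a multicut for $(G,H)$ if and only if $c(t_1) \neq c(t_2)$ holds for every such demand edge. Hence we answer \emph{yes} if all comparisons succeed and \emph{no} as soon as one fails. The total running time is $O(|V(G)| + |E(G)| + |E(H)|)$, which is polynomial in the input size, establishing the proposition.
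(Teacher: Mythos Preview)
Your proposal is correct and matches the paper's approach: the paper does not give a formal proof but simply remarks that the check can be done in polynomial time ``e.g., with the use of breadth first search,'' which is precisely the component-labelling procedure you describe.
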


We next need the following simple property which holds for optimization problems aiming for the minimization of the number of elements we delete in general. Its proof can be found in \cite{focke_et_al:LIPIcs.SoCG.2024.57}.
\begin{prop}\label{cutequiv}
Let $(G,H)$ be an instance of $\mc$ with $S$ being a minimum multicut for $(G,H)$ and $F \subseteq S$. Then for every minimum multicut $S'$ for $(G\setminus F,H)$, we have that $S' \cup F$ is a minimum multicut for $(G,H)$.
\end{prop}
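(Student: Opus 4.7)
The plan is to establish two things in sequence: first that $S' \cup F$ is a feasible multicut for $(G,H)$, and second that $|S' \cup F| \leq |S|$. Combined with the minimality of $S$ for $(G,H)$, these force equality and hence minimality of $S' \cup F$.

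For feasibility I would use the edge-set identity $G \setminus (S' \cup F) = (G \setminus F) \setminus S'$. This is immediate once one notes that $S'$ is an edge set of $G \setminus F$ and so is automatically disjoint from $F$. Since by hypothesis $S'$ is a multicut for $(G \setminus F, H)$, every demand pair of $H$ lies in distinct components of $(G \setminus F) \setminus S'$, which by the identity are exactly the components of $G \setminus (S' \cup F)$. Hence $S' \cup F$ is a multicut for $(G,H)$.

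For the size bound I would compare $S'$ against the alternative candidate $S \setminus F$. Because $F \subseteq S$, the set $S \setminus F$ is disjoint from $F$, hence lies in $E(G \setminus F)$, and satisfies $(G \setminus F) \setminus (S \setminus F) = G \setminus S$; this shows $S \setminus F$ is itself a multicut for $(G \setminus F, H)$. By the minimality of $S'$ for $(G \setminus F, H)$ we get $|S'| \leq |S \setminus F| = |S| - |F|$, and using the disjointness $S' \cap F = \emptyset$ this rearranges to $|S' \cup F| = |S'| + |F| \leq |S|$. Combined with the first step and the minimality of $S$ for $(G,H)$, this gives $|S' \cup F| = |S|$, so $S' \cup F$ is a minimum multicut.

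The only possible obstacle is purely notational bookkeeping, namely being explicit about the disjointness $S' \cap F = \emptyset$ (which follows because $S'$ is an edge set of $G \setminus F$) and the fact that sequential deletion of $F$ and then $S'$ produces the same graph as the single deletion of $S' \cup F$. There is no structural or algorithmic insight needed; the argument is a direct verification, and since cardinality is never used in an essential way beyond additivity on disjoint sets, the same reasoning extends verbatim to the edge-weighted setting by replacing $|\cdot|$ with $w(\cdot)$.
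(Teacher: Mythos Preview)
Your proof is correct and is the standard argument for this elementary fact. The paper itself does not give a proof of this proposition but simply cites \cite{focke_et_al:LIPIcs.SoCG.2024.57}, so there is nothing to compare against here; your verification is exactly what one would expect.
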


The next two results we need deal with the number of edges whose deletion is needed to make a given graph planar. 
We first need the following result, showing that a small edge set whose deletion makes a graph planar can be found sufficiently fast for our purposes. Analogous results are known also for the vertex-deletion version \cite{DBLP:journals/algorithmica/MarxS12,DBLP:conf/soda/AdlerGK08,DBLP:conf/soda/JansenLS14}, which can be made to work also for the edge-deletion version by appropriate reductions.

\begin{prop}[\cite{DBLP:conf/stoc/KawarabayashiR07}]\label{findmod}
Let $G$ be a graph and $\pl$ a nonnegative integer. Then there exists an algorithm that either outputs a set $E_\pl\subseteq E(G)$ such that $|E_\pl|\leq \pl$ and $G\setminus E_\pl$ is planar or correctly reports that no such set exists and runs in $f(\pl)n$. 
\end{prop}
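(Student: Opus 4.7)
The plan is to reduce the edge-deletion planarization problem to the vertex-deletion planarization problem; as remarked in the paragraph preceding the proposition, analogous FPT algorithms of running time $f(k)\cdot n$ are available for vertex deletion \cite{DBLP:journals/algorithmica/MarxS12,DBLP:conf/soda/AdlerGK08,DBLP:conf/soda/JansenLS14}, and a standard reduction turns the edge-deletion problem into a constrained vertex-deletion problem without worsening the asymptotic running time.

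First I would perform a simple density check. By Euler's formula applied to $G\setminus E_\pl$, any planarizing edge set of size at most $\pl$ forces $|E(G)|\leq 3|V(G)|-6+\pl$; if this is violated, correctly report that no solution exists. Otherwise $|E(G)|=O(n+\pl)$, which ensures that the auxiliary graph built in the next step has size linear in $n$ for fixed $\pl$.

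Second, I would construct an auxiliary graph $G'$ from $G$ by subdividing every edge once: each $e=uv\in E(G)$ becomes a path $u-v_e-v$ with a fresh ``edge-vertex'' $v_e$. Since subdivision preserves planarity in both directions, one has that $G\setminus F$ is planar if and only if $G'-U_F$ is planar, where $U_F=\{v_e:e\in F\}$. Thus an edge-modulator of size $\pl$ for $G$ corresponds exactly to a vertex-modulator of size $\pl$ for $G'$ that avoids the original vertices $V(G)\subseteq V(G')$.

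Third, I would invoke the cited vertex-deletion FPT algorithm in its constrained (or weighted) variant: mark vertices of $V(G)$ as forbidden and allow only edge-vertices to be deleted, with budget $\pl$. Translating the output $U\subseteq\{v_e:e\in E(G)\}$ back via $E_\pl=\{e:v_e\in U\}$ yields the required edge set, or converts a ``no'' answer on $G'$ into a ``no'' answer on $G$. The total running time is the preprocessing time plus $f(\pl)\cdot |V(G')|=f(\pl)\cdot n$.

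The main obstacle is justifying the constrained variant of the vertex-deletion algorithm. If the cited algorithms do not directly support forbidden vertices, one must embed this restriction via a planarity-preserving gadget that forces any useful deletion to land on an edge-vertex; a naive false-twin blowup fails, because duplicating a high-degree vertex destroys planarity by creating a $K_{3,\pl+1}$-minor. A more delicate construction is therefore needed, for instance attaching rigidly planar structures at each original vertex whose planarity cannot be affected by any $\pl$ vertex deletions. Fortunately, for the irrelevant-vertex and bidimensionality-based algorithms in the cited references, the restriction to a deletable subset is a standard modification that preserves the $f(\pl)\cdot n$ running time, and this is the route I would follow.
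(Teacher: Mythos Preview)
The paper does not supply a proof of this proposition at all: it is stated as a black-box citation to Kawarabayashi--Reed, and the sentence immediately preceding it merely remarks that the vertex-deletion algorithms ``can be made to work also for the edge-deletion version by appropriate reductions.'' Your proposal is precisely one such reduction, so rather than matching the paper's (nonexistent) proof, you are spelling out the alternative route the paper only alludes to. The argument is sound: the density check guarantees $|V(G')|=O(n)$ for fixed $\pl$, and the subdivision equivalence is correct. The one real dependency you identify---that the vertex-deletion algorithm must respect a set of undeletable vertices---is genuine, and your resolution (that the irrelevant-vertex style algorithms in the cited references accommodate this annotation) is accurate; both \cite{DBLP:journals/algorithmica/MarxS12} and \cite{DBLP:conf/soda/JansenLS14} explicitly handle the annotated variant. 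So your proof is correct, just via the reduction the paper mentions rather than the direct citation it actually uses.
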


The next lemma is important as it shows that progress is made on an instance when an edge linking two connected components of the planar part of the graph is detected. Its proof can be found in \cite{focke_et_al:LIPIcs.SoCG.2024.57}.
\begin{lem}\label{conn}
Let $G$ be a graph and $E'\subseteq E(G)$ be such that $G\setminus E'$ is planar. If there is an edge $e$ in $E'$ whose endvertices are contained in distinct components of $G\setminus E'$, then $G\setminus (E'\setminus \{e\})$ is planar.
\end{lem}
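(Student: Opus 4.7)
The plan is to build a planar embedding of $G\setminus (E'\setminus\{e\})$ by modifying a planar embedding of $G\setminus E'$, exploiting the fact that the endpoints $u,v$ of $e$ lie in distinct connected components of $G\setminus E'$. The entire argument rests on the standard flexibility one has in choosing the outer face of, and the relative placement of, components in a planar embedding. So no counting or clever combinatorial insight is required; the work is purely topological.

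First I would fix, by hypothesis, an arbitrary planar embedding $\Pi$ of $G\setminus E'$, and let $C_u$ and $C_v$ denote the (distinct) components of $G\setminus E'$ containing $u$ and $v$ respectively. Let $\Pi_u$ and $\Pi_v$ be the embeddings of $C_u$ and $C_v$ obtained by restricting $\Pi$. Using the classical fact that for a connected planar graph any face of any embedding can be chosen as the unbounded (outer) face (via stereographic projection), I would re-embed $C_u$ so that $u$ lies on the boundary of the outer face, and independently re-embed $C_v$ so that $v$ lies on the boundary of its outer face.

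Next I would assemble a new planar embedding $\Pi'$ of $G\setminus E'$ as follows. Draw the re-embedding of $C_u$ in the plane. Since $u$ is incident to its outer face, that outer face is a single face $f$ of the drawing of $C_u$ whose boundary contains $u$. Inside $f$ I would then place the re-embedding of $C_v$, which is legal because it is bounded and can be shrunk to fit into any open region. Because $v$ is on the outer face of the embedding of $C_v$, after this insertion there is a face of the combined drawing $C_u\cup C_v$ whose boundary contains both $u$ and $v$. The remaining components of $G\setminus E'$, which are disjoint from $C_u\cup C_v$ and play no role for $e$, I would drop one by one into arbitrary faces of the current drawing, in such a way as to keep a face with both $u$ and $v$ on its boundary (for instance, place all of them inside a different face of $C_u$). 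This produces a planar embedding of $G\setminus E'$ in which $u$ and $v$ are incident to a common face.

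Finally, in this common face I would draw a simple arc from $u$ to $v$, representing the edge $e$; by construction it meets no other edge or vertex, so the result is a planar embedding of the graph $(G\setminus E')+e = G\setminus (E'\setminus\{e\})$, as required. The only subtlety worth verifying is the re-embedding step, but this is an immediate application of the outer-face-choice property together with the fact that any connected component can be placed in any face of a planar embedding; I do not anticipate any genuine obstacle.
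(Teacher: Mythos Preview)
Your argument is correct and is exactly the standard one: re-embed each of the two components so that the designated endpoint lies on the outer face, nest one component inside a face of the other incident to that endpoint, and draw $e$ as an arc in the resulting common face. The paper does not supply its own proof of this lemma but simply cites \cite{focke_et_al:LIPIcs.SoCG.2024.57}, so there is nothing to compare against; your write-up would serve perfectly well as a self-contained proof here.
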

Further, we give some basic results on relevant sets. For the first of these results, the uniqueness of $Y_3$ is well-known and can be established by a simple submodularity argument. Moreover, computing $Y_3$ is possible through a simple modification of the algorithm of Edmonds and Karp and the last property follows directly from the minimality of $Y_3$. Propositions \ref{relev2} and \ref{relev3} follow directly from the definition of relevant sets.

\begin{prop}\label{relev1}
Let $G$ be a connected graph and $Y_1,Y_2\subseteq V(G)$ be disjoint sets. Then there exists a unique relevant set $Y_3$ for $(Y_1,Y_2)$. Moreover $Y_3$ can be computed in polynomial time and every component of $G[Y_3]$ contains a component of $G[Y_1]$. 
\end{prop}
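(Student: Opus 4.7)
The plan is to establish the three claims of the proposition in the order they are stated: existence/uniqueness of the maximum, polynomial-time computation, and the component property.

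For the existence of a unique maximum relevant set, I would rely on the submodularity of the cut function $d_G$, i.e.\ $d_G(A)+d_G(B) \ge d_G(A\cup B) + d_G(A\cap B)$ for all $A,B\subseteq V(G)$. Let $\lambda=\lambda_G(Y_1,Y_2)$ and call a set $Z$ \emph{admissible} if $Y_1\subseteq Z$, $Y_2\cap Z=\emptyset$, and $d_G(Z)=\lambda$; note that at least one admissible set exists by definition of $\lambda$. If $Z_1,Z_2$ are admissible, then $Z_1\cup Z_2$ and $Z_1\cap Z_2$ are both admissible in the sense that they satisfy $Y_1\subseteq \cdot$ and $Y_2\cap\cdot=\emptyset$, hence $d_G(Z_1\cup Z_2),d_G(Z_1\cap Z_2)\ge \lambda$. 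Combined with submodularity, we get $2\lambda = d_G(Z_1)+d_G(Z_2) \ge d_G(Z_1\cup Z_2)+d_G(Z_1\cap Z_2)\ge 2\lambda$, so equality holds and $Z_1\cup Z_2$ is admissible. Iterating, the union of all admissible sets is admissible and is, by construction, the unique inclusionwise maximum such set, which we take as $Y_3$.

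For polynomial-time computability, I would reduce to a classical $s$--$t$ min-cut computation by contracting $Y_1$ to a single source $s$ and $Y_2$ to a single sink $t$ (the contractions preserve $\lambda_G(Y_1,Y_2)$). After computing a maximum flow via the Edmonds--Karp algorithm, the residual graph is available in polynomial time; then take $Y_3$ to be the set of vertices from which $t$ is \emph{not} reachable in the residual graph. By standard max-flow/min-cut duality this yields an admissible set, and because reachability to $t$ is closed under adding residual-reachable predecessors, $Y_3$ is precisely the union of all $s$-side minimum cuts, i.e.\ the maximum admissible set constructed above.

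For the component property, I would argue by contradiction. Suppose some component $K$ of $G[Y_3]$ contains no vertex of $Y_1$, and consider $Y_3':=Y_3\setminus V(K)$; then $Y_1\subseteq Y_3'$ and $Y_2\cap Y_3'=\emptyset$. Since $K$ is a component of $G[Y_3]$, no edge of $G$ joins $V(K)$ to $Y_3\setminus V(K)$, so every edge of $\delta_G(Y_3')$ is also an edge of $\delta_G(Y_3)$, giving $d_G(Y_3')\le d_G(Y_3)=\lambda$. If $d_G(Y_3')<\lambda$, this contradicts the definition of $\lambda$; if $d_G(Y_3')=\lambda$ then $Y_3'$ is admissible and strictly larger than $Y_3$ is impossible, but by the uniqueness of the maximum, admissibility forces $Y_3'\subseteq Y_3$, so $Y_3'\ne Y_3$ shows $Y_3$ is not maximum, contradiction. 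The only remaining case is that every edge incident to $V(K)$ lies inside $Y_3$, meaning $V(K)$ has no edges leaving it in $G$; since $G$ is connected this forces $V(K)=V(G)$, contradicting $Y_2\cap V(K)=\emptyset$ (assuming $Y_2\ne\emptyset$, which we may assume since otherwise $Y_3=V(G)$ trivially and the claim is vacuous). Thus every component of $G[Y_3]$ meets $Y_1$, and since such a component is connected in $G[Y_3]\supseteq G[Y_1]$, it must contain a whole component of $G[Y_1]$. The main conceptual step is the submodularity argument for uniqueness; the remaining parts are routine once uniqueness is in hand.
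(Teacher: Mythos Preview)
Your proposal follows essentially the same approach the paper sketches: submodularity for uniqueness, the residual graph after Edmonds--Karp for polynomial-time computation, and a direct argument for the component property.

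One remark on the component argument: the sentence ``if $d_G(Y_3')=\lambda$ then $Y_3'$ is admissible \ldots\ so $Y_3'\ne Y_3$ shows $Y_3$ is not maximum, contradiction'' is not valid reasoning --- a strictly smaller admissible set does not contradict the maximality of $Y_3$. However, your subsequent ``remaining case'' (no edge leaves $V(K)$, so connectivity of $G$ forces $V(K)=V(G)$) is exactly the correct handling of $d_G(Y_3')=\lambda$: since $\delta_G(Y_3')\subseteq\delta_G(Y_3)$ and the two sets have equal size, they are equal, hence no edge runs from $V(K)$ to $V(G)\setminus Y_3$; combined with $K$ being a component of $G[Y_3]$ this indeed makes $V(K)$ a component of $G$. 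So the argument is correct once the muddled sentence is deleted; it just needs tidying.
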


\begin{prop}\label{relev2}
Let $G$ be a graph and $Y_1,Y_2\subseteq V(G)$ be disjoint, $Y_3$ the relevant set for $(Y_1,Y_2)$ and $Y_4 \subseteq V(G)$ with $Y_1\subseteq Y_4 \subseteq Y_3$. Then $Y_3$ is the relevant set for $(Y_4,Y_2)$.
\end{prop}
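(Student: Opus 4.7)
The plan is to verify directly each of the four defining conditions of being the relevant set, now with respect to the pair $(Y_4, Y_2)$, for the set $Y_3$. The two containment conditions $Y_4 \subseteq Y_3$ and $Y_2 \cap Y_3 = \emptyset$ require no work: the first is given as a hypothesis on $Y_4$, and the second is inherited from $Y_3$ being the relevant set for $(Y_1, Y_2)$.

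The one substantive step is the cut-value equality $d_G(Y_3) = \lambda_G(Y_4, Y_2)$, which I would establish by a two-sided squeeze. On the one hand, $Y_1 \subseteq Y_4$ implies that any edge set whose deletion separates $Y_4$ from $Y_2$ must also separate $Y_1$ from $Y_2$, so $\lambda_G(Y_1, Y_2) \le \lambda_G(Y_4, Y_2)$. On the other hand, since $Y_4 \subseteq Y_3$ and $Y_2 \cap Y_3 = \emptyset$, the cut $\delta_G(Y_3)$ itself separates $Y_4$ from $Y_2$, so $\lambda_G(Y_4, Y_2) \le d_G(Y_3) = \lambda_G(Y_1, Y_2)$. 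These two bounds collapse to the required equality.

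Maximality can then be deduced from uniqueness rather than any fresh combinatorial argument. Given any candidate set $Y_3'$ satisfying $Y_4 \subseteq Y_3'$, $Y_2 \cap Y_3' = \emptyset$, and $d_G(Y_3') = \lambda_G(Y_4, Y_2)$, the inclusion $Y_1 \subseteq Y_4 \subseteq Y_3'$ together with the equality just shown turns $Y_3'$ into a candidate satisfying the defining conditions of relevance for $(Y_1, Y_2)$ as well. By the uniqueness part of Proposition~\ref{relev1}, this forces $Y_3' = Y_3$, which in particular gives that $Y_3$ itself is the unique maximum such set for $(Y_4, Y_2)$. No real obstacle is expected in carrying this out; the argument is essentially bookkeeping against the definition of relevant set, consistent with the paper's remark that Proposition~\ref{relev2} follows directly from it.
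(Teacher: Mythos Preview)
Your approach is exactly what the paper intends (it merely states that the result ``follows directly from the definition of relevant sets''), and your verification of the first three conditions is clean.

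There is one slip in the maximality step. You take an arbitrary candidate $Y_3'$ for $(Y_4,Y_2)$, correctly observe that it is also a candidate for $(Y_1,Y_2)$, and then write ``By the uniqueness part of Proposition~\ref{relev1}, this forces $Y_3' = Y_3$.'' That is not what uniqueness gives you: uniqueness says the \emph{maximum} candidate for $(Y_1,Y_2)$ is unique, not that every candidate coincides with $Y_3$ (there can certainly be strictly smaller candidates). What you actually need here is simply the \emph{maximality} of $Y_3$ for $(Y_1,Y_2)$: since $Y_3'$ is a candidate for $(Y_1,Y_2)$, you get $|Y_3'|\le |Y_3|$, and as $Y_3'$ was an arbitrary candidate for $(Y_4,Y_2)$, this shows $Y_3$ is maximum for $(Y_4,Y_2)$. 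With that one-word replacement (maximality instead of uniqueness), your argument is complete.
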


\begin{prop}\label{relev3}
Let $G$ be a graph and $Y_1,Y_2\subseteq V(G)$ be disjoint, $Y_3$ the relevant set for $(Y_1,Y_2)$ and $Y_4 \subseteq V(G)$ with $Y_2\subseteq Y_4 \subseteq V(G)\setminus Y_3$. Then $Y_3$ is the relevant set for $(Y_1,Y_4)$.
\end{prop}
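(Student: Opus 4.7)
The plan is to verify directly that $Y_3$ satisfies all four defining clauses of being relevant for the pair $(Y_1, Y_4)$, namely $Y_1 \subseteq Y_3$, $Y_4 \cap Y_3 = \emptyset$, $d_G(Y_3) = \lambda_G(Y_1, Y_4)$, and maximality among all sets with these three properties. The first two clauses are essentially given to us: $Y_1 \subseteq Y_3$ holds by relevance of $Y_3$ for $(Y_1, Y_2)$, and $Y_4 \cap Y_3 = \emptyset$ is exactly the hypothesis $Y_4 \subseteq V(G) \setminus Y_3$.

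The main step is the cut-value equality. I would argue by a two-sided inequality. Since $Y_2 \subseteq Y_4$, every edge set that separates $Y_1$ from $Y_4$ also separates $Y_1$ from $Y_2$, giving $\lambda_G(Y_1, Y_4) \geq \lambda_G(Y_1, Y_2) = d_G(Y_3)$, where the equality uses relevance of $Y_3$ for $(Y_1, Y_2)$. Conversely, because $Y_1 \subseteq Y_3$ and $Y_4 \cap Y_3 = \emptyset$, the edge boundary $\delta_G(Y_3)$ itself is a $Y_1$-$Y_4$ cut, so $\lambda_G(Y_1, Y_4) \leq d_G(Y_3)$. Combining these gives $d_G(Y_3) = \lambda_G(Y_1, Y_4)$, and simultaneously shows $\lambda_G(Y_1, Y_4) = \lambda_G(Y_1, Y_2)$, which will be reused in the maximality step.

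For maximality, let $Y_3'$ be any set satisfying $Y_1 \subseteq Y_3'$, $Y_4 \cap Y_3' = \emptyset$, and $d_G(Y_3') = \lambda_G(Y_1, Y_4)$. Using $Y_2 \subseteq Y_4$ we get $Y_2 \cap Y_3' = \emptyset$, and by the cut-value equality $d_G(Y_3') = \lambda_G(Y_1, Y_2)$. Hence $Y_3'$ is itself a candidate for being relevant for $(Y_1, Y_2)$, so by the maximality (in fact uniqueness, by \cref{relev1}) of $Y_3$ for $(Y_1, Y_2)$ we conclude $Y_3' \subseteq Y_3$. This shows $Y_3$ is maximum among all candidates for $(Y_1, Y_4)$ and completes the proof.

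I do not anticipate a real obstacle here; the argument is a short bookkeeping exercise once one observes the key monotonicity: enlarging the sink side from $Y_2$ to $Y_4$ can only increase the minimum cut, while the already-achieved cut $\delta_G(Y_3)$ survives as a valid $Y_1$-$Y_4$ cut because we are told $Y_4$ stays outside $Y_3$. The only subtlety worth flagging in the write-up is the interpretation of ``maximum'' in the definition of relevant sets; I would use \cref{relev1} to treat it as uniqueness, which makes the containment $Y_3' \subseteq Y_3$ immediate.
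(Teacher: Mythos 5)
Your proof is correct and takes the approach the paper intends: the paper states \cref{relev3} without proof, remarking that it ``follows directly from the definition of relevant sets,'' and your direct verification of the four defining clauses (containment of $Y_1$, disjointness from $Y_4$, the two-sided cut-value comparison, and maximality) is exactly that argument. One small remark: in the maximality step you only need $|Y_3'|\le |Y_3|$, which follows immediately since every candidate for $(Y_1,Y_4)$ is a candidate for $(Y_1,Y_2)$; the stronger containment $Y_3'\subseteq Y_3$ is true but rests on the submodularity argument behind \cref{relev1} rather than on uniqueness alone, and it is not needed here.
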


Throughout our algorithm, it is important to  assure that the value of $\kappa$ increases continuously. The following easy result establishes a criterion to test whether this is actually the case. 

\begin{prop}\label{kappaincrease}
Let $(G,H)$ be a $\kplanar$ instance of \mc, and $\mathcal{P}^1$ and $\mathcal{P}^2$ two states for $(G,H)$ such that $\mathcal{P}^2$ extends $\mathcal{P}^1$. Then $\kappa(\mathcal{P}^2)\geq \kappa(\mathcal{P}^1)$. Moreover, if there exists some $Y_0 \in \mathcal{P}^1$ and $\overline{Y}_0 \in \mathcal{P}^2$ with $Y_0 \subseteq \overline{Y}_0$ and $\kappa(\mathcal{P}^1,Y_0)<\kappa(\mathcal{P}^2,\overline{Y}_0)$, then $\kappa(\mathcal{P}^2)> \kappa(\mathcal{P}^1)$.
\end{prop}
\begin{proof}
As $\mathcal{P}^2$ extends $\mathcal{P}^1$, for every $Y \in \mathcal{P}^1$, there exists a unique $\overline{Y} \in \mathcal{P}^2$ such that $Y \subseteq \overline{Y}$ with $\overline{Y}_1\neq \overline{Y}_2$ for all distinct $Y_1,Y_2 \in \mathcal{P}^1$.

We first show that $\alpha_{G_0,\mathcal{P}^1}(Y)\leq \alpha_{G_0,\mathcal{P}^2}(\overline{Y})$ for all $Y \in \mathcal{P}$.  Consider some $Y \in \mathcal{P}$.  As $Y \subseteq \overline{Y}$ and $\bigcup\mathcal{P}^1\setminus Y\subseteq \bigcup\mathcal{P}^2\setminus \overline{Y}$, we obtain that for every set $S \subseteq E(G_0)$ whose deletion separates all vertices in $Y$ from all vertices in $\bigcup\mathcal{P}^2\setminus Y$, we have that its deletion also separates all vertices in $\overline{Y}$ from all vertices in $\bigcup\mathcal{P}^1\setminus \overline{Y}$. This yields $\lambda_{G_0}(Y,\bigcup\mathcal{P}^1\setminus Y)\leq \lambda_{G_0}(\overline{Y},\bigcup\mathcal{P}^2\setminus \overline{Y})$. As $\mathcal{P}^1$ and $\mathcal{P}^2$ are states, we obtain that $\alpha_{G_0,\mathcal{P}^1}(Y)-\alpha_{G_0,\mathcal{P}^2}(\overline{Y})=\lambda_{G_0}(Y,\bigcup\mathcal{P}^1\setminus Y)-\lambda_{G_0}(\overline{Y},\bigcup\mathcal{P}^2\setminus \overline{Y})\leq 0$.
 
We now show that $\kappa(\mathcal{P}^1,Y)\leq \kappa(\mathcal{P}^2,\overline{Y})$ for all $Y \in \mathcal{P}^1$.  Consider some $Y \in \mathcal{P}$.   We distinguish some cases on the shape of $Y$. First suppose that $Y \in \mathcal{P}^1_{\textup{fat}}$. Then we have $\alpha_{G_0,\mathcal{P}^2}(\overline{Y})\geq \alpha_{G_0,\mathcal{P}^1}(Y)\geq \pi(t+1)+1$, so $\overline{Y}\in \mathcal{P}^2_{\textup{fat}}$. It follows that $\kappa(\mathcal{P}^2,\overline{Y})=2(\pi(t+1)+1)=\kappa(\mathcal{P}^1,Y)$. Next suppose that $Y\in \mathcal{P}^1_{\textup{f-n}}$, so there exists some $Y_0 \in \mathcal{P}^1_{\textup{fat}}, u \in Y$, and $v \in Y_0$ such that $uv \in E(G_0)$. By the same argument as before, we obtain that $\overline{Y}_0 \in \mathcal{P}^2_{\textup{fat}}$. It follows that $\overline{Y}\in \mathcal{P}^2_{\textup{fat}}\cup \mathcal{P}^2_{\textup{f-n}}$. If $\overline{Y}\in \mathcal{P}^2_{\textup{fat}}$, we obtain that $\kappa(\mathcal{P}^2,\overline{Y})= 2(\pi(t+1)+1)>\pi(t+1)+1+\alpha_{G_0,\mathcal{P}^1}(Y)=\kappa(\mathcal{P}^1,Y)$. If $\overline{Y}\in \mathcal{P}^2_{\textup{f-n}}$, we have $\kappa(\mathcal{P}^2,\overline{Y})-\kappa(\mathcal{P}^1,Y)=\alpha_{G_0,\mathcal{P}^2}(\overline{Y})-\alpha_{G_0,\mathcal{P}^1}(Y)\geq 0$. Finally suppose that $Y \in \mathcal{P}^1_{\textup{thin}}$. If $\overline{Y}\in \mathcal{P}^2_{\textup{fat}}\cup \mathcal{P}^2_{\textup{f-n}}$, we obtain $\kappa(\mathcal{P}^2,\overline{Y})\geq \pi(t+1)+1>\alpha_{G_0,\mathcal{P}^1}(Y)=\kappa(\mathcal{P}^1,Y)$. If $\overline{Y}\in \mathcal{P}^2_{\textup{thin}}$, we have $\kappa(\mathcal{P}^2,\overline{Y})-\kappa(\mathcal{P}^1,Y)=\alpha_{G_0,\mathcal{P}^2}(\overline{Y})-\alpha_{G_0,\mathcal{P}^1}(Y)\geq 0$.

We obtain $\kappa(\mathcal{P}^1)=\sum_{Y \in \mathcal{P}^1}\kappa(\mathcal{P}^1,Y)\leq \sum_{Y \in \mathcal{P}^1}\kappa(\mathcal{P}^2,\overline{Y})\leq \kappa(\mathcal{P}^2)$. Moreover, if equality holds, then $\kappa(\mathcal{P}^1,Y)=\kappa(\mathcal{P}^2,\overline{Y})$ holds for every $Y \in \mathcal{P}^1$.
\end{proof}

\subsection{Structural properties of minimum multicuts}\label{struktur}
This section is dedicated to proving some structural properties of minimum multicuts, which are crucially needed in our algorithm. In particular, we prove \cref{hauptred}. The first result is simple, but useful.
 \begin{prop}\label{nontriv}
 Let $(G,H)$ be a connected $\kplanar$ instance, $S$ a minimum multicut for $(G,H)$ and $Q$ a component of $G_0 \setminus S_0$. Then either $V(Q)\cap T \neq \emptyset$ or there exists some $e \in E_\pi$ that has exactly one endvertex in $V(Q)$.
 \end{prop}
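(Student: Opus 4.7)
I plan to argue by contradiction. Assume $V(Q)\cap T = \emptyset$ and that no edge of $E_\pi$ has exactly one endpoint in $V(Q)$; the goal is to exhibit a single edge whose removal from $S$ preserves the multicut property, contradicting the minimality of $S$.

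The first step will be to analyze the boundary $D := \delta_G(V(Q))$ of $V(Q)$ in the full graph $G$. Since $Q$ is a component of $G_0 \setminus S_0 = G \setminus (E_\pi \cup S)$, every edge of $D$ lies in $E_\pi \cup S$, and by the contradiction assumption $D \cap E_\pi = \emptyset$, so $D \subseteq S$. Combined with the fact that $V(Q)$ is already connected in $G \setminus (E_\pi \cup S) \subseteq G \setminus S$ and has no outgoing edge in $G \setminus S$, this yields the key structural observation: $V(Q)$ is itself a full component of $G \setminus S$ (not just of $G_0 \setminus S_0$), and in particular a terminal-free one.

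The argument then splits on whether $D$ is empty. If $D = \emptyset$, connectivity of $G$ forces $V(Q) = V(G)$, so $T \subseteq V(Q)$; since the statement is nontrivial only when $T \neq \emptyset$, this contradicts $V(Q) \cap T = \emptyset$. Otherwise, pick any $e \in D \subseteq S$ and consider $S' := S \setminus \{e\}$. The graph $G \setminus S'$ differs from $G \setminus S$ only by the addition of $e$, and this addition merges the terminal-free component $V(Q)$ with exactly one other component $R$ of $G \setminus S$ (the one containing the non-$V(Q)$ endpoint of $e$); all other components are untouched. Because $V(Q)$ has no terminal, the set of terminals in the merged component $V(Q) \cup R$ equals that of $R$, so every demand pair that was separated by $S$ is still separated by $S'$. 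Hence $S'$ is a multicut of strictly smaller weight than $S$, contradicting the minimality of $S$.

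The main obstacle to isolate cleanly is the structural step: under the contradiction hypothesis, the boundary $D$ of $V(Q)$ is forced into $S$, which promotes $V(Q)$ from merely a component of $G_0 \setminus S_0$ to a full component of $G \setminus S$. Once this is in hand, the swap argument removing any single $e \in D$ is routine and yields the desired contradiction.
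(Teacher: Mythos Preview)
Your proof is correct and follows essentially the same approach as the paper's: assume both conclusions fail, deduce that $V(Q)$ is a terminal-free component of $G\setminus S$ with $\delta_G(V(Q))\subseteq S$, pick any boundary edge $e$, and show $S\setminus\{e\}$ is still a multicut, contradicting minimality. The only cosmetic differences are that you explicitly isolate the degenerate case $D=\emptyset$ (which the paper leaves implicit) and phrase the final separation argument via the merged component $V(Q)\cup R$ rather than via $\delta_G(V(Q_1))\subseteq S'$; these are equivalent.
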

 \begin{proof}
 Suppose otherwise. Observe that $Q$ is a component of $G \setminus S$. As $G$ is connected and $\delta_G(V(Q))\cap E_{\pi}=\emptyset$ by assumption, there exists some $e \in \delta_{G_0}(V(Q))$. Let $Q_0$ be the unique component of $G\setminus S$ containing the unique endvertex of $e$ that is not contained in $V(Q)$. Let $S'=S\setminus \{e\}$. Let $t_1,t_2 \in T$ with $t_1t_2\in E(H)$ and for $i \in[2]$, let $Q_i$ be the unique component of $G \setminus S$ containing $t_i$. Observe that $Q \notin \{Q_1,Q_2\}$ by the assumption that $V(Q)\cap T =\emptyset$. Further, as $S$ is a multicut for $(G,H)$, we have that $Q_1\neq Q_2$. Hence, by symmetry, we may suppose that $Q_0\neq Q_1$. We obtain that $\delta_G(V(Q_1))\subseteq S'$. It follows that $t_1$ and $t_2$ are in distinct components of $G \setminus S'$. This contradicts the minimality of $S$.
 \end{proof}
 
 The objective of the remainder of this section is to prove Lemma \ref{hauptred}.  We first need the following preparatory result.
\begin{lem}\label{2a1}
Let $(G,H)$ be a $\kplanar$ instance of $\mc$, let $S$ be a minimum multicut for $(G,H)$, let $i \in [2]$, and let $Q_1,Q_2$ be two components of $G_0\setminus S_0$  such that $(V(Q_1)\cup V(Q_2))\cap T \subseteq B_i \cup I$. Then $\delta_{G_0}(V(Q_1),V(Q_2))\leq \pi$.
\end{lem}
\begin{proof}
  Suppose otherwise and let $S'=(S \setminus \delta_{G_0}(V(Q_1),V(Q_2)))\cup E_\pi$. As $\delta_{G_0}(V(Q_1),V(Q_2))\subseteq S$  and $|E_\pi|=\pi$ by assumption, we have $|S'|<|S|$.

  We next show that $S'$ is a multicut for $(G,H)$. To this end, let $t_1,t_2 \in T$ with $t_1t_2 \in E(H)$. As $(B_1,B_2,I,X)$ is an extended biclique decomposition for $(G,H)$ and by symmetry, we may suppose that $t_1 \in B_{3-i}\cup X$. Let $Q$ be the component of $G_0\setminus S_0$ containing $t_1$. Observe that by construction, we have $Q\notin \{Q_1,Q_2\}$ and hence $\delta_G(V(Q))\subseteq \delta_{G_0}(V(Q))\cup E_\pi \subseteq S'$. It follows that there exists a collection of components of $G \setminus S'$ the union of whose vertex sets is $V(Q)$. Moreover, as $S$ is a multicut for $(G,H)$, we obtain that $t_2$ is not contained in $V(Q)$. It follows that $t_1$ and $t_2$ are in distinct components of $G \setminus S'$. 
  
 It follows that $S'$ is a multicut for $(G,H)$. As $|S'|<|S|$, this contradicts $S$ being a minimum multicut for $(G,H)$.
\end{proof}
We are now ready to prove Lemma \ref{hauptred}, which we restate here for convenience.
\hauptred*
\begin{proof}
Suppose otherwise, so $\textup{dist}_{G_0,S_0}(Y,X)\geq 2$.
By assumption, we in particular have $Y\cap X=\emptyset$. Hence, as $S$ is a multicut for $(G,H)$ and by symmetry, we may suppose that $Y \cap T \subseteq B_1 \cup I$. Next, let $\mathcal{Q}_1$ be the set of components of $G_0\setminus S_0$ that are linked to $Y$ by at least one edge and that contain at least one element of $B_1$ and let $F_1 \subseteq E(G_0)$ be the set of edges in $E(G_0)$ connecting $Y$ and $\bigcup_{Q \in \mathcal{Q}_1}V(Q)$. As $\textup{dist}_{G_0,S_0}(Y,X)\geq 2$, we have that $V(Q)\cap X=\emptyset$ for all $Q \in \mathcal{Q}_1$. It hence follows by Lemma \ref{2a1} that $d_{G_0}(Y,V(Q))\leq \pi$ for all $Q \in \mathcal{Q}_1$. As we clearly have $|\mathcal{Q}_1|\leq t$, we obtain $|F_1|\leq \pi t$. Now let $R\subseteq V(G_0)$  be a set such that $\delta_{G_0}(R)$ is a $(Y \cap (W \cup T),(W \cup T)\setminus Y)$-cut of size $\lambda(Y \cap (W \cup T),(W \cup T)\setminus Y)$ in $G_0$. We now define $S'=(S\setminus \delta_{G_0}(Y)) \cup (\delta_{G_0}(R) \cup F_1 \cup E_\pi)$. As $\delta_{G_0}(Y)\subseteq S$, by the definitions of $R$ and $F_1$ and as $Y$ is fat in $\mathcal{P}$, we have
\begin{align*}
|S'|&\leq |S|-d_{G_0}(Y)+d_{G_0}(R)+|F_1|+|E_\pi|\\
&\leq  |S|-(d_{G_0}(Y)-\lambda(Y \cap (W \cup T),(W \cup T)\setminus Y))+\pi t+\pi\\
&=|S|-\alpha_{G_0,\mathcal{P}}(Y)+\pi(t+1)\\
&<|S|.
\end{align*}

We next show that $S'$ is a multicut for $(G,H)$. Let $t_1,t_2 \in T$ with $t_1t_2 \in E(H)$. First suppose that $\{t_1,t_2\}\cap R \neq \emptyset$, say $t_1 \in R$. As $S$ is a multicut for $(G,H)$, it follows that $t_2 \notin Y$, so in particular $t_2 \notin R$. As $\delta_{G_0}(R)\cup E_{\pi}\subseteq S'$, we obtain that $t_1$ and $t_2$ are in distinct components of $G \setminus S'$.

Now suppose that $\{t_1,t_2\}\cap R = \emptyset$. As $(B_1,B_2,I,X)$ is an extended biclique decomposition of $(G,H)$ and by symmetry, we may suppose that $t_1 \in B_1 \cup X$. Let $Q$ be the component of $G_0\setminus S_0$ containing $t_1$.
 As $\textup{dist}_{G_0,S_0}(Y,X)\geq 2$ and by construction, we obtain that either $Q \in \mathcal{Q}_1$ or $E(G_0)$ does not contain an edge linking $V(Q)$ and $Y$. In either case, we obtain that $\delta_{G_0}(V(Q))\cup E_\pi\subseteq S'$. It follows that there exists a collection of components of $G \setminus S'$ the union of whose vertex sets is $V(Q)$. 
 As $S$ is a multicut for $(G,H)$, we obtain that $t_2 \notin V(Q)$. Hence $S'$ is a multicut for $(G,H)$. As $|S'|<|S|$, this contradicts $S$ being a minimum multicut for $(G,H)$.
\end{proof}

\subsection{Making states relevant}\label{relevant}

This section is dedicated to showing that, given an arbitrary state $\mathcal{P}$, we can efficiently compute another state $\mathcal{P}'$, which is relevant and maintains all important properties of $\mathcal{P}$. In other words, we prove \cref{makerelevant}. Most technical difficulties for this proof are contained in the following result, which shows that a single class of a state can be replaced by a larger one which is relevant in the arising state without losing any important property. 

\begin{lem}\label{makefat}
Let $(G,H)$ be a connected $\kplanar$ instance of $\mc$, and let $\mathcal{P}$ be a state. Further, let $Y_0 \in \mathcal{P}$, let $Y_0'$ be the relevant set for $(Y_0,\bigcup \mathcal{P}\setminus Y_0)$ in $G_0$ and let $\mathcal{P}'$ be obtained from $\mathcal{P}$ by replacing $Y_0$ by $Y_0'$. Then $\mathcal{P}'$ is a state with $\tau(\mathcal{P}')\leq \tau(\mathcal{P})$ and $\kappa(\mathcal{P}')\geq \kappa(\mathcal{P})$ and if $\mathcal{P}$ is maximum valid, so is $\mathcal{P}'$.
\end{lem}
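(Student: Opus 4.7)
Parts $(1)$, $(2)$ and $(3)$ are immediate from the setup. For $(1)$: since $Y_0\subseteq Y_0'$ we have $Y_0'\cap(W\cup T)\supseteq Y_0\cap(W\cup T)\neq\emptyset$, and $Y_0'$ is disjoint from $\bigcup\mathcal{P}\setminus Y_0$ by definition of the relevant set, so $\mathcal{P}'$ is a state containing $W\cup T$. For $(2)$: Proposition~\ref{relev1} states that every component of $G_0[Y_0']$ contains a component of $G_0[Y_0]$, so the contribution of $Y_0'$ to $\tau$ is at most that of $Y_0$; the other classes are unchanged. For $(3)$: the partition $\mathcal{P}'$ extends $\mathcal{P}$ in the sense of Proposition~\ref{kappaincrease} (each $Y\in\mathcal{P}\setminus\{Y_0\}$ is mapped to itself and $Y_0$ to $Y_0'\supseteq Y_0$), so $\kappa(\mathcal{P}')\geq\kappa(\mathcal{P})$.

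For $(4)$, since $|\mathcal{P}'|=|\mathcal{P}|$ it suffices to prove that $\mathcal{P}'$ is valid. Fix a minimum multicut $S$ respecting $\mathcal{P}$ and let $Y^*$ be the component of $G_0\setminus S_0$ containing $Y_0$. Because $S$ respects $\mathcal{P}$, the set $Y^*$ is disjoint from $\bigcup\mathcal{P}\setminus Y_0$, so $\delta_{G_0}(Y^*)\subseteq S_0$ is a $(Y_0,\bigcup\mathcal{P}\setminus Y_0)$-separator and $d_{G_0}(Y^*)\geq d_{G_0}(Y_0')$. As $Y^*\cap Y_0'$ is also such a separator, $d_{G_0}(Y^*\cap Y_0')\geq d_{G_0}(Y_0')$; submodularity of $d_{G_0}$ then yields $d_{G_0}(Y^*\cup Y_0')\leq d_{G_0}(Y^*)$. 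Writing $\tilde Y:=Y^*\cup Y_0'$, the plan is to swap cut boundaries and set $S'_0:=(S_0\setminus\delta_{G_0}(Y^*))\cup\delta_{G_0}(\tilde Y)$ and $S':=S'_0\cup(S\cap E_\pi)$, so that $|S'|\leq|S|$.

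Two structural observations drive the verification that $S'$ is a minimum multicut respecting $\mathcal{P}'$. First, minimality of $S$ implies that $S_0$ contains no edge inside any component of $G_0\setminus S_0$: any such edge could be dropped while preserving the multicut property, since both its endpoints already lie in a common component of $G\setminus S$. In particular $G_0[Y^*]$ and each $G_0[\overline Y]$ (where $\overline Y$ is the component of $G_0\setminus S_0$ containing $Y\in\mathcal{P}\setminus\{Y_0\}$) are fully present in $G_0\setminus S_0$. Second, by Proposition~\ref{relev1} every component of $G_0[Y_0']$ meets $Y_0\subseteq Y^*$. Combined with the fact that the edges $\delta_{G_0}(Y^*,Y_0'\setminus Y^*)\subseteq\delta_{G_0}(Y^*)\setminus\delta_{G_0}(\tilde Y)$ are present in $G_0\setminus S'_0$, these observations should imply that $\tilde Y$ sits inside a single component of $G_0\setminus S'_0$ containing $Y_0'$; outside $\tilde Y$, the edges of $G_0\setminus S'_0$ coincide with those of $G_0\setminus S_0$, so every $Y\in\mathcal{P}\setminus\{Y_0\}$ remains in a single component.

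The main obstacles are twofold. (a)~Verifying that $S'$ is a multicut in $G$: the swap increases connectivity inside $\tilde Y$ by releasing the edges of $\delta_{G_0}(Y^*,Y_0'\setminus Y^*)$, which combined with the unchanged $E_\pi$-edges could a priori create new terminal paths through $\tilde Y$ in $G\setminus S'$. A case analysis on which class of $\mathcal{P}$ each endpoint of a demand belongs to, exploiting $S'\cap E_\pi=S\cap E_\pi$ and the fact that $\tilde Y\cap\bigcup\mathcal{P}=Y_0$, should reduce this back to the multicut property of $S$. (b)~Ensuring $\mathcal{K}(G_0\setminus S'_0)$ has exactly $|\mathcal{P}|$ classes: a priori $G_0[\overline Y\setminus Y_0']$ could split into several components of $\mathcal{K}(G_0\setminus S'_0)$, but any such spurious component is disjoint from $W\cup T$, since $(W\cup T)\cap\overline Y\subseteq Y\subseteq\overline Y\setminus Y_0'$. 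Once $S'$ is a multicut, $|S'|\leq|S|$ and minimality of $S$ force $|S'|=|S|$, so $S'$ is itself minimum, and Proposition~\ref{nontriv} then rules out $W\cup T$-free components; the analogous bookkeeping on $\tilde Y$ closes the argument and produces a minimum multicut respecting $\mathcal{P}'$.
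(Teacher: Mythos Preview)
Your approach is essentially the paper's: define $\tilde Y_0:=Y^*\cup Y_0'$, swap $\delta_{G_0}(Y^*)$ for $\delta_{G_0}(\tilde Y_0)$, use submodularity for $|S'|\le|S|$, and reduce the multicut check to connectivity between $W\cup T$ vertices in $G\setminus S$. Parts (1)--(3) and the submodularity step match the paper verbatim, and your outline for obstacle~(a) is exactly the paper's Claim~\textit{newmc}.

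The one genuine gap is in your connectivity argument for $\tilde Y_0$. You claim that ``every component of $G_0[Y_0']$ meets $Y^*$'' together with ``edges in $\delta_{G_0}(Y^*,Y_0'\setminus Y^*)$ survive in $G_0\setminus S'_0$'' should make $\tilde Y_0$ sit in a single component of $G_0\setminus S'_0$. This does not follow: an edge $uv$ with $u,v\in Y_0'\setminus Y^*$ lies neither in $\delta_{G_0}(Y^*)$ nor in $\delta_{G_0}(\tilde Y_0)$, so $uv\in S'_0$ iff $uv\in S_0$, and the latter can hold whenever $u,v$ lie in different classes of $\mathcal K(G_0\setminus S_0)$. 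Thus a path inside $G_0[Y_0']$ from some $w\in Y_0'\setminus Y^*$ to $Y_0$ may well use edges of $S'_0$, and $w$ can be cut off from $Y^*$ inside $\tilde Y_0$. The same issue bites on the other side: for $Y\neq Y_0$, a path in $\overline Y$ between two vertices of $Y$ may pass through $\overline Y\cap Y_0'\subseteq\tilde Y_0$, so it is not clear that $Y$ stays in one component of $G_0\setminus S'_0$ either. Consequently your Proposition~\ref{nontriv} argument, which only eliminates $(W\cup T)$--free components, does not by itself force $|\mathcal K(G_0\setminus S'_0)|=|\mathcal P|$.

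The paper sidesteps this entirely and uses maximality instead of direct connectivity. It first observes (trivially, since $\delta_{G_0}(\tilde Y_0)\subseteq S'_0$) that each $\tilde Y$ is a \emph{union} of classes of $\mathcal K(G_0\setminus S'_0)$; since the $\tilde Y$'s partition $V(G)$ (here one uses that $\mathcal P$ is maximum, so $\mathcal K(G_0\setminus S_0)=\{\overline Y:Y\in\mathcal P\}$), any splitting would give $|\mathcal K(G_0\setminus S'_0)|>|\mathcal P|$. But $\mathcal K(G_0\setminus S'_0)$ is itself a valid state once $S'$ is minimum (Proposition~\ref{nontriv} guarantees each class meets $W\cup T$), contradicting maximality of $\mathcal P$. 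So each $\tilde Y$ is a single component and $S'$ respects $\mathcal P'$. Replacing your connectivity sketch with this two-line maximality argument fixes the gap.
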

\begin{proof}
It follows directly by construction that $\mathcal{P}'$ is a state.
We now show that $\tau(\mathcal{P}')\leq \tau(\mathcal{P})$ and $\kappa(\mathcal{P}')\geq \kappa(\mathcal{P})$. By \cref{relev1}, we have that every component of $Y_0'$ contains a component of $Y_0$. By the definition of $\mathcal{P}'$, this yields $\tau(\mathcal{P}')-\tau(\mathcal{P})=\tau(\{Y_0'\})-\tau(\{Y_0\})\leq 0$.

Next observe that $\kappa(\mathcal{P}')\geq \kappa(\mathcal{P})$ holds by \cref{kappaincrease}.

For the last part, we assume that $\mathcal{P}$ is maximum valid. We need to show that $\mathcal{P}'$ is maximum valid.
As $\mathcal{P}$ is maximum valid, there exists a minimum multicut $S$ for $(G,H)$ that respects $\mathcal{P}$. It follows in particular that for every $Y \in \mathcal{P}$, there exists some $\overline{Y}\in \mathcal{K}(G_0\setminus S_0)$ with $Y \subseteq \overline{Y}$. As $\mathcal{K}(G_0\setminus S_0)$ extends $\mathcal{P}$ and by the maximality of $\mathcal{P}$, we obtain that $\mathcal{K}(G_0\setminus S_0)=\{\overline{Y}:Y \in \mathcal{P}\}$. We define $\widetilde{Y}_0=\overline{Y}_0\cup Y_0'$ and $\widetilde{Y}=\overline{Y}\setminus Y_0'$ for all $Y \in \mathcal{P}\setminus \{Y_0\}$. Observe that, as $W \cup T \subseteq \bigcup \mathcal{P}$, as $\mathcal{K}(G_0\setminus S_0)$ extends $\mathcal{P}$, and by the definition of $Y_0'$, we have $(W \cup T)\cap Y=(W \cup T)\cap \overline{Y}=(W \cup T)\cap \widetilde{Y}$ for all $Y \in \mathcal{P}$. In particular, $\mathcal{P}'$ is a state. Next, we set $S'=(S \setminus \delta_{G_0}(\overline{Y}_0))\cup \delta_{G_0}(\widetilde{Y}_0)$. We will show that $S'$ is a minimum multicut for $(G,H)$ and $\mathcal{K}(G_0\setminus S'_0)=\{\widetilde{Y}:Y \in \mathcal{P}\}$. This directly implies that $S'$ respects $\mathcal{P}'$.

The following result is a first step in the direction of characterizing the components of $G_0 \setminus S_0'$. 
\begin{clm}\label{sevcomp}
For every $Y \in \mathcal{P}$, we have that $\widetilde{Y}$ is the union of a collection of classes of $\mathcal{K}(G_0\setminus S'_0)$.
\end{clm}
\begin{proof}
Suppose otherwise, so there exists an edge $e \in E(G_0)\setminus S_0'$ linking a vertex $v_1 \in \widetilde{Y}_1$ and a vertex $v_2 \in \widetilde{Y}_2$ for some $Y_1,Y_2 \in \mathcal{P}$. As $\delta_{G_0}(\widetilde{Y}_0)\subseteq S'_0$ by construction, we have $Y_0\notin \{Y_1,Y_2\}$. It follows by the definition of $\widetilde{Y}_1$ and $\widetilde{Y}_2$ that $v_1 \in \overline{Y_1}$ and $v_2 \in \overline{Y_2}$. This implies that $e \in S_0\setminus \delta_{G_0}(Y_0)$, so $e \in S'_0$, a contradiction.\claimqedhere
\end{proof}
We are now ready to show that $S'$ is a multicut indeed.
\begin{clm}\label{newmc}
$S'$ is a multicut for $(G,H)$.
\end{clm}
\begin{proof}
  We will show that if for some $t_1,t_2\in T \cup W$, there exists a $t_1t_2$-path in $G\setminus S'$, then there also exists a $t_1t_2$-path in $G\setminus S$. It suffices to prove the statement for the case that there exists a $t_1t_2$-path $P$ in $G\setminus S'$ none of whose interior vertices is contained in $T \cup W$.  In particular, if $P$ contains an edge of $E_\pi$, then $P$ is a single edge. As a first case, suppose that $P$ is a single edge of $E_\pi$. It follows that $P$ also exists in $G \setminus S$, so, in particular, there exists a $t_1t_2$-path in $G\setminus S$. Now suppose that $P$ is not a single edge in $E_\pi$ and hence that $E(P)\cap E_\pi=\emptyset$. We obtain that $t_1$ and $t_2$ are in the same component of $G_0\setminus S_0'$. It follows by Claim \ref{sevcomp} and as $\{\widetilde{Y}: Y \in \mathcal{P}\}$ is a partition of $V(G)$ that $\{t_1,t_2\}\subseteq \widetilde{Y}$ for some $Y \in \mathcal{P}$.  As $(W \cup T)\cap \overline{Y}=(W \cup T)\cap \widetilde{Y}$, this yields that $\{t_1,t_2\}\subseteq \overline{Y}$. As $\overline{Y}$ is a component of $G_0\setminus S_0$, we obtain in particular that $\overline{Y}$ is connected, so there exists a $t_1t_2$-path in $G\setminus S$.

 In all cases, there exists
a $t_1t_2$-path in $G\setminus S$. In particular, as $S$ is a multicut for $(G,H)$, we obtain that $G\setminus S'$ does not contain a $t_1t_2$-path for any $t_1,t_2 \in T$ with $t_1t_2 \in E(H)$.
\claimqedhere\end{proof}

We next show that $S'$ is of minimum size. 
\begin{clm}\label{minweight}
 $|S'|\leq |S|$.
\end{clm}
\begin{proof}
 As $Y_0'$ is the relevant set for $(Y_0,\bigcup \mathcal{P}\setminus Y_0)$ in $G_0$, we obtain that $d_{G_0}(\overline{Y}_0\cap Y_0')\geq d_{G_0}(Y_0')$. By the submodularity of the degree function, we obtain that $d_{G_0}(\widetilde{Y}_0)\leq d_{G_0}(\overline{Y}_0)+d_{G_0}(Y_0')-d_{G_0}(\overline{Y}_0\cap Y_0')\leq d_{G_0}(\overline{Y}_0)$.

By the minimality of $S$, together with $\delta_{G_0}(\overline{Y}_0)\subseteq S$, this yields 
\begin{align*}|S'|
&=|(S \setminus \delta_{G_0}(\overline{Y}_0))\cup \delta_{G_0}(\widetilde{Y}_0)|\\
&\leq|S \setminus \delta_{G_0}(\overline{Y}_0)|+d_{G_0}(\widetilde{Y}_0)\\
&=|S|- d_{G_0}(\overline{Y}_0)+d_{G_0}(\widetilde{Y}_0)\\
& \leq |S|.
\end{align*}
\claimqedhere\end{proof}
It follows from Claims \ref{newmc} and \ref{minweight} and the choice of $S$ that $S'$ is a minimum multicut for $(G,H)$.
With this result at hand, we are able to use the maximality of $\mathcal{P}$ to fully characterize the components of $G_0\setminus S'_0$. The next claim will directly imply that $\mathcal{K}(G_0\setminus S'_0)$ extends $\mathcal{P}'$ and hence finish the proof. 
\begin{clm}\label{comp}
$\mathcal{K}(G_0\setminus S'_0)=\{\widetilde{Y}:Y \in \mathcal{P}\}$.
\end{clm}
\begin{proof}
As $S'$ is a minimum multicut for $(G,H)$ by Claims \ref{newmc} and \ref{minweight}, we have that $\mathcal{K}(G_0\setminus S'_0)$ is a subpartition of $V(G)$ that is respected by a minimum multicut for $(G,H)$. If $G_0[\widetilde{Y}]$ is disconnected for some $Y \in \widetilde{Y}$ or $G_0\setminus S'_0$ contains a component which is distinct from $\widetilde{Y}$ for all $Y \in \mathcal{P}$, we obtain that $|\mathcal{K}(G_0\setminus S'_0)|>|\mathcal{P}|$, a contradiction to the maximality of $\mathcal{P}$.
\claimqedhere\end{proof}

\end{proof}
We are now ready to prove \cref{makerelevant}, which we restate here for convenience.
\makerelevant*

\begin{proof}

Let $Y_1,\ldots,Y_q$ be an arbitrary enumeration of $\mathcal{P}$. We now define a sequence $\mathcal{P}^0,\ldots,\mathcal{P}^q$  of states extending $\mathcal{P}$ such that $Y_{i+1},\dots,Y_q \in \mathcal{P}^{i}$ for $i \in [q]$. First, we set $\mathcal{P}^0=\mathcal{P}$. Next, for $i \in [q]$, let $Y_i'$ be the relevant set for $(Y_i,\bigcup \mathcal{P}^{i-1}\setminus Y_i)$. We obtain $\mathcal{P}^{i}$ from $\mathcal{P}^{i-1}$ by replacing $Y_i$ by $Y'_i$. We set $\mathcal{P}'=\mathcal{P}^q$. It follows by construction and Proposition \ref{relev3} that $Y'$ is the relevant set for $(Y',\bigcup \mathcal{P}'\setminus Y')$ for all $Y \in \mathcal{P}$.  Hence $\mathcal{P}'$ is a relevant state. Moreover, repeatedly applying \cref{makefat}, we obtain that $\tau(\mathcal{P}')\leq \tau(\mathcal{P})$, $\kappa(\mathcal{P}')\geq \kappa(\mathcal{P})$ and that, if $\mathcal{P}$ is maximum valid, then so is $\mathcal{P}'$. Further, we have $|\mathcal{P'}|=|\mathcal{P}|$ by construction. Finally, it follows from \cref{relev1} and that fact $q\leq 2\pi+t$ that $\mathcal{P}'$ can be computed in polynomial time. 
\end{proof}
\subsection{Dealing with incomplete states}\label{incompl}
In this section, we show how to replace an incomplete state by a bounded number of states for which $\kappa$ is larger. More precisely, we prove \cref{dealincomplete}, which we restate here for convenience.
\dealincomplete*
\begin{proof}
As $\mathcal{P}$ is incomplete, there exists a vertex $u_0 \in V(G)\setminus\bigcup \mathcal{P}$ that has a neighbor in some $Y_0\in \mathcal{P}_{\textup{thin}}$ in $G_0$. Now for every $Y \in \mathcal{P}$, let $Y'=Y\cup \{u_0\}$ and let $\mathcal{P}^{Y}$ be the subpartition of $V(G)$ in which $Y$ is replaced by $Y'$. It is easy to see that $\mathcal{P}^{Y}$ is a state for every $Y \in \mathcal{P}$. We now show that $(\mathcal{P}^{Y})_{Y \in \mathcal{P}}$ has the desired properties. First observe that $(\mathcal{P}^{Y})_{Y \in \mathcal{P}}$ can be computed in polynomial time by construction. Next, we have $|(\mathcal{P}^{Y})_{Y \in \mathcal{P}}|=|\mathcal{P}|\leq 2\pi+t$ as $\mathcal{P}$ is a state.

Let $Y \in \mathcal{P}$. Then $\mathcal{P}^{Y}$ is obtained from $\mathcal{P}$ by replacing $Y$ with $Y'$. As $Y'=Y\cup \{u_0\}$, we obtain that $G_0[Y']$ has at most one more component than $G_0[Y]$. This yields $\tau(\mathcal{P}^{Y})\leq  \tau(\mathcal{P})+1$.

We next show that $\kappa$ is increased for all newly created states, which is slightly more technical.
\begin{clm}
$\kappa(\mathcal{P}^{Y})>\kappa(\mathcal{P})$ for all $Y \in \mathcal{P}$.
\end{clm}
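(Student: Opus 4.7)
The plan is to invoke \cref{kappaincrease}, which already yields the weak monotonicity $\kappa(\mathcal{P}^Y)\geq \kappa(\mathcal{P})$ since $\mathcal{P}^Y$ extends $\mathcal{P}$, and then to sharpen this to a strict inequality by exhibiting, for each $Y\in\mathcal{P}$, at least one class whose $\kappa$-contribution grows strictly. The key technical input is a strict $\alpha$-increase for the enlarged class $Y'=Y\cup\{u_0\}$: because $\mathcal{P}$ is relevant, $Y$ is the unique maximum set containing $Y$, disjoint from $\bigcup\mathcal{P}\setminus Y$, and achieving $d_{G_0}$-value $\lambda_{G_0}(Y,\bigcup\mathcal{P}\setminus Y)$, so since $u_0\notin\bigcup\mathcal{P}$ the set $Y'$ is a strict admissible superset and the uniqueness of the relevant set (\cref{relev1}) forces $\lambda_{G_0}(Y',\bigcup\mathcal{P}^Y\setminus Y')>\lambda_{G_0}(Y,\bigcup\mathcal{P}\setminus Y)$. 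Combined with $u_0\notin W\cup T$ (so the subtracted $\lambda$-term in $\alpha$ is unchanged), this yields $\alpha_{G_0,\mathcal{P}^Y}(Y')\geq \alpha_{G_0,\mathcal{P}}(Y)+1$.

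The argument then proceeds by a case split on the type of $Y$ in $\mathcal{P}$. When $Y\in\mathcal{P}_{\textup{thin}}$, we have $\kappa(\mathcal{P},Y)=\alpha_{G_0,\mathcal{P}}(Y)\leq \pi(t+1)$, and in whichever of the three categories $Y'$ lands in $\mathcal{P}^Y$, its contribution $\kappa(\mathcal{P}^Y,Y')$ is at least $\min\{\alpha_{G_0,\mathcal{P}^Y}(Y'),\,\pi(t+1)+1\}$, which strictly exceeds $\alpha_{G_0,\mathcal{P}}(Y)$ by the strict $\alpha$-increase. When $Y\in\mathcal{P}_{\textup{fat}}$, $Y'$ remains fat and $\kappa(\mathcal{P}^Y,Y')$ is unchanged, so I would locate the strict increase at $Y_0$: the edge $u_0v_0\in E(G_0)$ with $v_0\in Y_0$ now joins $Y_0$ to the fat class $Y'\in\mathcal{P}^Y$, promoting $Y_0$ from thin to fat-neighboring in $\mathcal{P}^Y$ and raising its $\kappa$-value from $\alpha_{G_0,\mathcal{P}}(Y_0)\leq \pi(t+1)$ to $\pi(t+1)+1$.

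The main obstacle is the case $Y\in\mathcal{P}_{\textup{f-n}}$, where $\kappa$ is the constant $\pi(t+1)+1$ across the entire fat-neighboring category, so the $\alpha$-increase is not automatically reflected in $\kappa$. If the $\alpha$-jump promotes $Y'$ all the way to the fat category (which happens precisely when $\alpha_{G_0,\mathcal{P}^Y}(Y')\geq\pi(t+1)+1$), then $\kappa(\mathcal{P}^Y,Y')$ jumps from $\pi(t+1)+1$ to $2(\pi(t+1)+1)$ and we are done. The delicate subcase is when $Y'$ stays fat-neighboring in $\mathcal{P}^Y$, inheriting its fat neighbor from $Y$; here the strict increase must be traced elsewhere in the partition, and I would do so by combining the incompleteness hypothesis---$u_0$ has a neighbor $v_0$ in a thin class $Y_0$---with the relevance of $\mathcal{P}$ applied class-by-class, either to locate a previously thin class whose adjacency profile forces a category change in $\mathcal{P}^Y$, or to rule out this configuration via the maximality built into relevance. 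Once a single strictly increased $\kappa$-contribution is produced in each case, combining with \cref{kappaincrease} gives $\kappa(\mathcal{P}^Y)>\kappa(\mathcal{P})$.
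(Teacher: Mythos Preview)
Your treatment of the cases $Y\in\mathcal{P}_{\textup{thin}}$ and $Y\in\mathcal{P}_{\textup{fat}}$ matches the paper's proof: in the thin case you exploit the strict $\alpha$-increase at $Y'$ (coming from relevance of $Y$), and in the fat case you observe that $Y_0$ is promoted from thin to fat-neighboring.

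The gap is in your handling of the subcase $Y\in\mathcal{P}_{\textup{f-n}}$ with $Y'\in\mathcal{P}^Y_{\textup{f-n}}$. Your suggestion to ``trace the strict increase elsewhere in the partition'' does not work. For every class $Z\in\mathcal{P}\setminus\{Y\}$, relevance of $Z$ together with \cref{relev3} (applied with $u_0\notin Z$) gives that $Z$ is still the relevant set for $(Z,(\bigcup\mathcal{P}\setminus Z)\cup\{u_0\})$, so $\lambda_{G_0}(Z,\bigcup\mathcal{P}^Y\setminus Z)=d_{G_0}(Z)=\lambda_{G_0}(Z,\bigcup\mathcal{P}\setminus Z)$ and hence $\alpha_{G_0,\mathcal{P}^Y}(Z)=\alpha_{G_0,\mathcal{P}}(Z)$. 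Moreover no class changes type: $Y'$ is by assumption not fat, so $Y_0$ stays thin, and every previously fat class stays fat while every f-n class keeps its fat neighbor. Thus under the definition $\kappa(\mathcal{P},Y)=\pi(t+1)+1$ for $Y\in\mathcal{P}_{\textup{f-n}}$ that you are using, there is no strict increase anywhere, and the configuration cannot be ruled out by relevance either (nothing prevents $\alpha_{G_0,\mathcal{P}}(Y)$ and $\alpha_{G_0,\mathcal{P}^Y}(Y')$ from both lying strictly below $\pi(t+1)+1$).

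The resolution, visible in the paper's computations both here and in the proof of \cref{kappaincrease}, is that the intended value of $\kappa$ on fat-neighboring classes is $\pi(t+1)+1+\alpha_{G_0,\mathcal{P}}(Y)$ rather than the constant $\pi(t+1)+1$ (the displayed definition appears to contain a typo). With this correction the strict $\alpha$-increase at $Y'$ you already established translates directly into $\kappa(\mathcal{P}^Y,Y')>\kappa(\mathcal{P},Y)$, and the f-n and thin cases are handled uniformly.
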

\begin{proof} Let $Y \in \mathcal{P}$. First observe that $\mathcal{P}^{Y}$ clearly extends $\mathcal{P}$.

 First consider the case that $Y \in \mathcal{P}_{\textup{f-n}}\cup \mathcal{P}_{\textup{thin}}$. We first show that $\kappa(\mathcal{P}^{Y},Y')>\kappa(\mathcal{P},Y)$. If $Y'\in  \mathcal{P}^{Y}_{\textup{fat}}$, we obtain that $\kappa(\mathcal{P}^{Y},Y')-\kappa(\mathcal{P},Y)\geq 2(\pi(t+1)+1)-(\pi(t+1)+1+\alpha_{G_0,\mathcal{P}}(Y))=\pi(t+1)+1-\alpha_{G_0,\mathcal{P}}(Y)>0$. We may hence suppose that $Y'\in \mathcal{P}^{Y}_{\textup{f-n}}\cup \mathcal{P}^{Y}_{\textup{thin}}$. It follows that $\kappa(\mathcal{P}^{Y},Y')-\kappa(\mathcal{P},Y)\geq \alpha_{G_0,\mathcal{P}^{Y}}(Y')-\alpha_{G_0,\mathcal{P}}(Y)$. As $\mathcal{P}$ is relevant, we have that $Y$ is the relevant set for $(Y,\bigcup \mathcal{P}\setminus Y)$, which yields $\lambda(Y',\bigcup \mathcal{P}\setminus Y)>\lambda(Y,\bigcup \mathcal{P}\setminus Y)$. By construction, we have  $\alpha_{G_0,\mathcal{P}^{Y}}(Y')-\alpha_{G_0,\mathcal{P}}(Y)=\lambda(Y',\bigcup \mathcal{P}^{Y}\setminus Y')-\lambda(Y,\bigcup \mathcal{P}\setminus Y)=\lambda(Y',\bigcup \mathcal{P}\setminus Y)-\lambda(Y,\bigcup \mathcal{P}\setminus Y)>0$. In either case, we obtain $\kappa(\mathcal{P}^{Y},Y')>\kappa(\mathcal{P},Y)$. It follows that $\kappa(\mathcal{P}^{Y})>\kappa(\mathcal{P})$ by \cref{kappaincrease}.

Now suppose that $Y\in \mathcal{P}_{\textup{fat}}$. We have  $\alpha_{G,\mathcal{P}^{Y}}(Y')=\lambda_{G_0}(Y',\bigcup \mathcal{P}^{Y}\setminus Y')-\lambda_{G_0}(Y'\cap (W \cup T), (W \cup T) \setminus Y')\geq \lambda_{G_0}(Y,\bigcup \mathcal{P}\setminus Y)-\lambda_{G_0}(Y\cap (W \cup T), (W \cup T) \setminus Y)=\alpha_{G,\mathcal{P}}(Y)$. It follows that $Y' \in \mathcal{P}^{Y}_{\textup{fat}}$. As $u_0$ has a neighbor in $Y_0$ in $G_0$, it follows by construction and definition that $Y_0 \in \mathcal{P}^{Y}_{\textup{fat}}\cup \mathcal{P}^{Y}_{\textup{f-n}}$. This yields $\kappa(\mathcal{P}^{Y},Y_0)-\kappa(\mathcal{P},Y_0)\geq \pi(t+1)+1-\alpha_{G_0,\mathcal{P}}(Y_0)>0$. Again, it follows that $\kappa(\mathcal{P}^{Y})>\kappa(\mathcal{P})$ by \cref{kappaincrease}.
\claimqedhere\end{proof}

For the rest of the proof, suppose that $\mathcal{P}$ is maximum valid, so there exists a minimum multicut $S$ for $(G,H)$ that respects $\mathcal{P}$. It follows that for every $Y \in \mathcal{P}$, there exists a unique $\overline{Y}\in \mathcal{K}(G_0\setminus S_0)$ such that $Y \subseteq \overline{Y}$ with $\overline{Y}_1\neq \overline{Y}_2$ for all distinct $Y_1,Y_2 \in \mathcal{P}$. By the maximality of $\mathcal{P}$, we obtain that $\mathcal{K}(G_0\setminus S_0)=(\overline{Y}:Y \in \mathcal{P})$. In particular, there exists a unique $Y_1 \in \mathcal{P}$ such that $u_0 \in \overline{Y}_1$. It follows by definition that $S$ respects $\mathcal{P}^{Y}_1$, so $\mathcal{P}^{Y_1}$ is valid. As $|\mathcal{P}^{Y_1}|=|\mathcal{P}|$ and $\mathcal{P}$ is maximum valid, so is $\mathcal{P}^{Y_1}$. 
\end{proof}

\subsection{Dealing with complete states}\label{compl}
In this section, we show how to continue with our algorithm once a complete maximum valid state $\mathcal{P}$ is attained. More precisely, we prove \cref{phigross} and \cref{compphiklein}. The case that the state in consideration contains a thin class is dealt with in Lemma \ref{phigross}. The other case is when $\mathcal{P}_{\textup{thin}}$ is empty. In this case, we show that a treewidth bound on the multicut dual can be proved and hence a minimum multicut can be directly computed. We need to combine some topological arguments with the machinery developped in \cite{focke_et_al:LIPIcs.SoCG.2024.57}. The result is proved in Lemma \ref{compphiklein}.

We start by dealing with the first case. More precisely, we prove \cref{phigross}, which we restate here for convenience.
\phigross* 
\begin{proof}
We first choose some thin $Y_0 \in \mathcal{P}_{\textup{thin}}$ and define $S'=\delta_{G_0}(Y_0)$. Clearly, $S'$ can be computed in polynomial time. 

Next, as $\mathcal{P}$ is valid, there exists a minimum multicut $S$ for $(G,H)$ that respects $\mathcal{P}$. For $Y \in \mathcal{P}$, as $\mathcal{K}(G_0\setminus S_0)$ extends $\mathcal{P}$, there exists a unique $\overline{Y}\in \mathcal{K}(G_0\setminus S_0)$ with $Y \subseteq \overline{Y}$. Now consider an edge $v_0v_1 \in S'$ with $v_0 \in Y_0$. As $\mathcal{P}$ is complete, there exists some $Y_1 \in \mathcal{P}$ such that $v_1 \in Y_1$. We obtain that $v_0 \in \overline{Y}_0$ and $v_1 \in \overline{Y}_1$. It follows that $S' \subseteq S$ and that $\overline{Y}_0=Y_0$. For the remainder of the proof, we need a small case distinction. By Proposition \ref{nontriv} and the fact that $Y_0\in \mathcal{K}(G_0\setminus S_0)$, one of the following two cases occurs. Further, we can clearly observe in polynomial time which of the two cases occurs.
\begin{Casex}
There exists an edge $e \in E_\pi$ with exactly one endvertex in $Y_0$.
\end{Casex}
\begin{proof}
We now define $G'=G \setminus S'$ and $H'=H$. Clearly, $(S',(G',H'))$ can be computed in polynomial time and is an extended subinstance of $(G,H)$. Further, as $G'\setminus E_\pi$ is planar and $G'[Y_0]$ is a component of $G'\setminus E_\pi$, we obtain by Lemma \ref{conn} that $G'\setminus (E_\pi \setminus \{e\})$ is planar, so $\pi(G')<\pi(G)$. Now let $S''$ be a minimum multicut for $(G',H')$. Then, as $S$ is a minimum multicut for $(G,H)$ and $S' \subseteq S$, we have that $S' \cup S''$ is a minimum multicut for $(G,H)$ by Proposition \ref{cutequiv}. Hence $(S',(G',H'))$ is optimumbound.
\claimqedhere\end{proof}
\begin{Casex}
$Y_0 \cap W=V(E')$ for some $E' \subseteq E_\pi$ and $Y \cap T \neq \emptyset$.
\end{Casex}
\begin{proof}
We now define $G'=G- Y_0$ and $H'=H- Y_0$. Clearly, $(S',(G',H'))$ can be computed in polynomial time and is an extended subinstance of $(G,H)$. Further, $|V(H')|<|V(H)|$ holds by assumption.  

Now let $S''$ be a minimum multicut for $(G',H')$. We first show that $S' \cup S''$ is a multicut for $(G,H)$. Let $t_1,t_2 \in V(H)$ with $t_1t_2 \in E(H)$. If $\{t_1,t_2\}\subseteq Y_0$, we obtain an immediate contradiction as $Y_0$ is a component of $G \setminus S$ and $S$ is a multicut for $(G,H)$. We may hence suppose that $\{t_1,t_2\} \setminus Y_0 \neq \emptyset$. Suppose for the sake of a contradiction that there exists a $t_1t_2$-path $P$ in $G \setminus (S' \cup S'')$. If $V(P)\cap Y_0=\emptyset$, we obtain that $P$ also exists in $G'\setminus S''$, a contradiction to $S''$ being a multicut for $(G',H')$. Otherwise, we obtain that $E(P)$ contains an edge $e$ with exactly one endvertex in $Y_0$. If $e \in E(G_0)$, we obtain a contradiction as $\delta_{G_0}(Y_0)\subseteq S' \cup S''$. If $e \in E_\pi$, we obtain a contradiction to the assumption that $Y_0 \cap W=V(E')$ for some $E' \subseteq E_\pi$.

Observe that, as $G'\setminus (S \setminus S')$ is a subgraph of $G \setminus S$, we have that $(S \setminus S')\cap E(G')$ is a multicut for $(G',H')$. By the choice of $S''$, we obtain that $|S' \cup S''|=|S'|+|S''|\leq |S'|+|S \setminus S'|=|S|$. Hence $S' \cup S''$ is a minimum multicut for $(G,H)$, so $(S',(G',H'))$ is optimumbound.
\claimqedhere\end{proof}
\end{proof}

 We now consider the case that $\mathcal{P}$ does not contain a thin set. Our objective is to prove \cref{compphiklein}. We want to make use of the fact that the problem can be reduced to computing a multicut of a planar instance of \mc{} such that every minimum multicut dual for this instance is of small treewidth. The key idea is that for any minimum multicut $S$ for $(G,H)$, we have that $\textup{dist}_{G_0,S_0}(v,X)$ is small for all $v \in V(G)$. The first result shows that this property indeed guarantees that all minimum multicut duals are of small treewidth.
\begin{lem}\label{disttw}
Let $(G,H)$ be a planar instance of \mc, let $(B_1,B_2,I,X)$ an extended biclique decomposition of $H$ with $|X|=\mu$, let $C$ be an inclusionwise minimal multicut dual for $(G,H)$, let $S=e_G(C)$ and suppose that $\textup{dist}_{G,S}(v,X)\leq 2$ for every $v \in V(G)$. Then $\tw(C)=O(\sqrt{\mu})$.
\end{lem}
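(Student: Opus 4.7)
The plan is to apply Proposition~\ref{domplanar} to a planar super-graph of $C$ in which a natural set of at most $\mu$ vertices forms a $5$-dominating set. Concretely, define the \emph{radial augmentation} $C_0$ of $C$ by placing, inside every face $f$ of $C$, a new vertex $z_f$ and joining it by curves to every vertex of $C$ incident to $f$. Since the new edges can be drawn inside the corresponding faces without creating crossings, $C_0$ is planar and contains $C$ as a subgraph, so $\tw(C)\leq \tw(C_0)$ and it suffices to bound $\tw(C_0)$.

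The first step is a structural observation enforced by inclusion-wise minimality of $C$: every face $f$ of $C$ must contain at least one vertex of $G$. Indeed, if $f$ contained no vertex of $G$, it would in particular contain no terminal, and deleting any edge of $C$ incident to $f$ only merges $f$ with an adjacent face (or leaves face structure otherwise intact in the bridge case); no pair of demand terminals can thereby end up in a common face, contradicting the minimality of $C$. With this in hand, let $F_X$ be the set of faces of $C$ containing some terminal of $X$ and set $Z:=\{z_f : f\in F_X\}\subseteq V(C_0)$; clearly $|Z|\leq |X|=\mu$.

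The main step is to show that $Z$ is a $5$-dominating set of $C_0$. Fix any face $f$ of $C$ and, using the observation above, pick $v\in V(G)$ lying in $f$. By hypothesis there is a $v$-to-$X$ path $P$ in $G$ with $|E(P)\cap S|\leq 2$; viewed as a curve in the plane, $P$ starts in $f$, ends in the face $f_x$ containing its $X$-endpoint $x$, and crosses the embedded curves of $C$ at most twice, since each such crossing is an intersection of $P$ with an edge of $S=e_G(C)$. Hence there is a sequence $f=f_0,f_1,f_2=f_x$ of faces of $C$ where consecutive faces share an edge, which gives the path $z_f,u_1,z_{f_1},u_2,z_{f_x}$ of length at most $4$ in $C_0$, where each $u_i$ is an endpoint of the edge shared by $f_{i-1}$ and $f_i$. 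So every face-vertex $z_f$ is within distance $4$ of $Z$, and any original vertex $u\in V(C)$ lies on the boundary of some face $f$ and is adjacent to $z_f$ in $C_0$, giving distance at most $1+4=5$ to $Z$. Applying Proposition~\ref{domplanar} to $C_0$ then yields $\tw(C_0)=O(\sqrt{\mu})$, and therefore $\tw(C)=O(\sqrt{\mu})$.

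The step I expect to be most delicate is the preliminary one: while the ``empty face'' reduction is intuitive, writing it down cleanly requires being careful about bridges of $C$, about the fact that removing an edge only affects the two incident faces while leaving all other faces untouched, and about checking that the resulting $C\setminus e$ is still a multicut dual (and not just that $e_G(C\setminus e)$ remains a multicut). Once this is settled, the face-distance bound is an immediate consequence of the hypothesis and the domination argument in $C_0$ is routine.
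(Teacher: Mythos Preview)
Your proof is correct and follows essentially the same approach as the paper: you build the radial (face-vertex) augmentation of $C$, use inclusion-wise minimality to guarantee each face contains a vertex of $G$, translate the hypothesis $\textup{dist}_{G,S}(v,X)\le 2$ into a bound of $4$ on the $C_0$-distance from any $z_f$ to $Z$, and invoke Proposition~\ref{domplanar}. The only cosmetic difference is that the paper phrases the face-hopping step via the components $Q_1,\dots,Q_k$ of $G\setminus S$ traversed by the path (so $k\le 3$), rather than via counting crossings of $P$ with $C$; the paper's phrasing is slightly more robust, since a single edge of $S$ could in principle be crossed more than once by $C$, but the conclusion and the overall argument are identical.
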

\begin{proof}
We now construct a graph $C'$ from $C$ by adding a new vertex $z_f$ for every face $f$ of $C$ and linking it to all vertices in $V(C)$ incident to $f$. Observe that $C'$ is planar. Next, let $Z_X\subseteq V(C')$ be the set containing the vertex $z_{f}$ for every face $f$ of $C$ containing some $x \in X$.

\begin{clm}
$Z_X$ is a 5-dominating set of $C'$.
\end{clm}    
\begin{proof}
It suffices to prove that for every face $f$ of $C$, there exists a path of length at most 4 in $C'$ from $z_f$ to $Z_X$. Let $f$ be a face of $C$. If $f$ did not contain a vertex of $V(G)$, then another multicut dual could be obtained from $C$ by removing an edge incident to $f$.  Hence, by the minimality of $C$, there exists some $u \in V(G)$ that is contained in $f$. As $\textup{dist}_{G,S}(v,X)\leq 2$ for every $v \in V(G)$, we obtain that there exists a $ux$-path in $G$ for some $x \in X$ that contains at most two edges of $S$. Let $Q_1,\ldots,Q_k$ be the components of $G \setminus S$ traversed by this path and observe that $k \leq 3$. Further, for $i \in [k]$, let $f_i$ be the unique face containing $V(Q_i)$. Now, as $x \in V(Q_k)$, we have $z_{f_k}\in Z_X$. For $i \in [k-1]$, observe that, as $Q_i$ and $Q_{i+1}$ are linked by an edge of $S$ and $S=e_G(C)$, we have that $f_i$ and $f_{i+1}$ are adjacent faces of $C$. Let $v_i \in V(C)$ be a vertex incident to both $f_i$ and $f_{i+1}$. We obtain that $z_{f_1}v_1z_{f_2}\ldots z_{f_k}$ is a path of length at most 4 in $C'$ from $z_f$ to $Z_X$.
\claimqedhere\end{proof}
Finally, we clearly have $|Z_X|\leq |X|=\mu$. It hence follows from Proposition \ref{domplanar} that $tw(C)\leq tw(C')=O(\sqrt{\mu})$.
\end{proof}
 With the combinatorial bound of \cref{disttw} and the algorithm of \cref{algogenext}, we can show  that the above described distance property for minimum multicuts guarantees an efficient algorithm for planar instances.
\begin{lem}\label{compdistsmall}
Let $(G,H)$ be an edge-weighted planar instance of \mc\ such that for every minimum-weight multicut $S$ for $(G,H)$ and every $v \in V(G)$, we have $\textup{dist}_{G,S}(v,X)\leq 2$. Then an optimum solution for $(G,H)$ can be computed in $f(t)n^{O(\sqrt{\mu})}$.
\end{lem}
\begin{proof}
Let $C$ be an inclusionwise minimal, minimum weigh-multicut dual for $(G,H)$ and let $S=e_G(C)$. By \cref{dualcut}, we have that $S$ is a minimum-weight multicut for $(G,H)$. It hence follows by assumption that $\textup{dist}_{G,S}(v,X)\leq 2$ holds for all $v \in V(G)$. It now follows from \cref{disttw} that $\tw(C)=O(\sqrt{\mu})$. The statement now follows from \cref{algogenext} for $F^*=\emptyset$.
\end{proof}
We are now ready to prove \cref{compphiklein}, which is the second main result of this section. It shows how to obtain a minimum multicut when a complete maximum valid state with no thin class is available.
\compphiklein*

\begin{proof}
We start by defining an edge-weighted planar instance $(G_0',H')$ of \mc. We first define $G'_0$. Namely, for all $Y \in \mathcal{P}$ and every $e \in E(G_0[Y])$, we set the weight of $e$ to $\infty$. All remaining edges of $G'_0$ are of weight $1$. We now define $H'$. First, for every $Y \in \mathcal{P}$ and every component $Q$ of $G_0[Y]$, we choose a vertex $v_Q\in V(Q)$ and let $V(H')$ contain $v_Q$. Next, for any distinct $Y_1,Y_2 \in \mathcal{P}$, any component $Q_1$ of $G_0[Y_1]$ and any component $Q_2$ of $G_0[Y_2]$, we let $V(H')$ contain an edge linking $v_{Q_1}$ and $v_{Q_2}$ (thus $H'$ is a complete multipartite graph). 

In the following, we will show that a minimum multicut for $(G,H)$ can easily be obtained from a minimum multicut for $(G_0',H')$ and, moreover, that a minimum multicut for $(G_0',H')$ can be efficiently computed.
We first show that a minimum multicut for $(G,H)$ can be obtained from a minimum multicut for $(G_0',H')$.
\begin{clm}\label{ismc}Let $S$ be a minimum multicut for $(G,H)$ that respects $\mathcal{P}$, let $S_\pi=S \cap E_\pi$, and let $S'_0$ be a minimum multicut for $(G'_0,H')$ and $S'=S_0' \cup S_\pi$. Then $S'$ is a multicut for $(G,H)$.
\end{clm}
\begin{proof}As $S$ is a multicut for $(G,H)$ and by construction, we obtain that $S_0$ is a multicut for $(G'_0,H')$ of finite weight. By the minimality of $S_0'$, it follows that $S_0'$ is of finite weight. 
 We will show that for any $t_1,t_2 \in W \cup T$, if there exists a $t_1t_2$-path in $G\setminus S'$, then there also exists a $t_1t_2$-path in $G\setminus S$. It suffices to consider the case that there exists a $t_1t_2$-path $P$ in $G\setminus S'$ none of whose interior vertices is contained in $W \cup T$. If $E(P)\cap E_\pi \neq \emptyset$, we obtain that $E(P)$ consists of a single edge in $E_\pi \setminus S_\pi$. Then $P$ also exists in $G\setminus S$. Otherwise, we obtain that $P$ is fully contained in $G_0\setminus S'_0$. For $i \in [2]$, let $Y_i$ be the unique partition class of $\mathcal{P}$ containing $t_i$ and let $Q_i$ be the component of $G_0[Y_i]$ containing $t_i$. As $S'_0$ is of finite weight and by construction, we obtain that $v_{Q_i}$ is in the same component of $G_0'\setminus S_0'$ as $t_i$. As $P$ is fully contained in $G_0\setminus S'_0$, we obtain that $v_{Q_1}$ and $v_{Q_2}$ are in the same component of $G_0'\setminus S_0'$. If $Y_1\neq Y_2$, we obtain that $H'$ contains an edge linking $v_{Q_1}$ and $v_{Q_2}$, contradicting $S_0'$ being a multicut for $(G_0',H')$. It follows that $Y_1=Y_2$. As $S$ respects $\mathcal{P}$, we obtain that $t_1$ and $t_2$ are contained in the same component of $G_0 \setminus S_0$. Clearly, it follows that there exists a $t_1t_2$-path in $G\setminus S$. This yields that $S'$ is a multicut for $(G,H)$.
\claimqedhere\end{proof}

The next already implies that computing a minimum multicut for $(G_0',H')$ is sufficient.
\begin{clm}\label{ismmc}Let $S$ be a minimum multicut for $(G,H)$ that respects $\mathcal{P}$, let $S_\pi=S \cap E_\pi$, let $S'_0$ be a minimum multicut for $(G'_0,H')$ and $S'=S_0' \cup S_\pi$. Then $S'$ is a minimum multicut for $(G,H)$.
\end{clm}
\begin{proof}
It follows from Claim \ref{ismc} that $S'$ is a multicut for $(G,H)$. Next, observe that by the definition of $H'$, we have that $S_0$ is a multicut for $(G_0',H')$. Further, as $S$ respects $\mathcal{P}$ and the minimality of $S$, we have that $w(e)=1$ for all $e \in S_0$, where $w$ is the weight function associated to $G_0'$. Hence, by the definition of $w$ and the minimality of $S_0'$, we obtain $|S'|=|S_0'|+|S_\pi|\leq w(S_0')+|S_\pi|\leq w(S_0)+|S_\pi|=|S_0|+|S_\pi|=|S|$. Hence, as $S$ is a minimum multicut for $(G,H)$, so is  $S'$.
\claimqedhere\end{proof}
In order to show that a minimum multicut for $(G_0',H')$ can be computed efficiently, we need the following strengthening of Claim \ref{ismmc}.
\begin{clm}\label{ismmcpr}Let $S$ be a minimum multicut for $(G,H)$ that respects $\mathcal{P}$, let $S_\pi=S \cap E_\pi$, let $S'_0$ be a minimum multicut for $(G'_0,H')$, and let $S'=S_0' \cup S_\pi$. Then $S'$ is a minimum multicut for $(G,H)$ that respects $\mathcal{P}$.
\end{clm}
\begin{proof}
By Claim \ref{ismmc}, we only need to show that $S'$ that respects $\mathcal{P}$.
Let $Y_1,Y_2 \in \mathcal{P}$ with $Y_1 \neq Y_2$ and let $y_1\in Y_1$ and $y_2\in Y_2$. We start by showing that $y_1$ and $y_2$ are in distinct components of $G'_0\setminus S'_0$. To this end, for $i \in [2]$, let $Q_i$ be the component of $G_0[Y_i]$ containing $y_i$. Observe that, as $S_0'$ is finite and by construction, we have that $y_i$ and $v_{Q_i}$ are in the same connected component of $G'_0\setminus S_0'$. Next, by construction, we have $v_{Q_1}v_{Q_2}\in E(H')$. As $S'_0$ is a multicut for $(G_0',H')$, we obtain that $v_{Q_1}$ and $v_{Q_2}$ are in distinct components of $G_0'\setminus S_0'$. It follows that $y_1$ and $y_2$ are in distinct components of $G_0'\setminus S_0'$.

We hence obtain that for every $Z \in \mathcal{K}(G_0\setminus S'_0)$, there exists at most one $Y \in \mathcal{P}$ with $Z \cap Y \neq \emptyset$. Hence, if $\mathcal{K}(G_0\setminus S'_0)$ does not extend $\mathcal{P}$, we obtain that $|\mathcal{K}(G_0\setminus S'_0)|>|\mathcal{P}|$. This contradicts the maximality of $\mathcal{P}$. It follows that $\mathcal{K}(G_0\setminus S'_0)$ extends $\mathcal{P}$, so $S'$ respects $\mathcal{P}$.
\claimqedhere\end{proof}

We are now ready to show that a minimum multicut for $(G'_0,H')$ can be computed efficiently.
\begin{clm}\label{compute}We can compute a minimum multicut for $(G'_0,H')$ in $f(|V(H')|)n^{O(\sqrt{\mu})}$.
\end{clm}
\begin{proof}
Let $C$ be an inclusionwise minimal minimum weight multicut dual for $(G'_0,H')$ and let $S'_0=e_{G'_0}(C)$. By Lemma \ref{dualcut}, we have that $S'_0$ is a minimum weight multicut for $(G'_0,H')$. Now, as $\mathcal{P}$ is maximum valid, there exists a minimum multicut $S$ for $(G,H)$ that respects $\mathcal{P}$. Let $S_\pi=S \cap E_\pi$ and let $S'=S'_0 \cup S_\pi$. It follows from Claim \ref{ismmcpr} that $S'$ is a minimum multicut for $(G,H)$ that respects $\mathcal{P}$, so for every $Y \in \mathcal{P}$, there exists a unique $\overline{Y}\in \mathcal{K}(G_0\setminus S'_0)$ with $Y \subseteq \overline{Y}$.  It follows from \cref{hauptred} that $\textup{dist}_{G_0\setminus S'_0}(v,X)\leq 1$ for all $v \in \overline{Y}$ and all $Y \in \mathcal{P}_{\textup{fat}}$. Next for every $Y \in \mathcal{P}_{\textup{f-n}}$, by definition, there exists some $Y'\in \mathcal{P}_{\textup{fat}}$ such that $E(G_0)$ contains an edge linking $Y$ and $Y'$. Observe that this edge also links $\overline{Y}$ and $\overline{Y'}$. It follows that $\textup{dist}_{G_0\setminus S'_0}(v,X)\leq 2$ for all $v \in \overline{Y}$ and all $Y \in \mathcal{P}_{\textup{f-n}}$. As $G$ is connected and by Lemma \ref{conn}, we have that $G_0$ is connected. Hence, by the minimality of $S_0'$, we obtain that $\bigcup_{Y \in \mathcal{P}}\overline{Y}=V(G)$. As $\mathcal{P}_{\textup{thin}}=\emptyset$ by assumption, it follows that $\textup{dist}_{G_0,S'_0}(v,X)\leq 2$ holds for all $v \in V(G)$.

It now follows from \cref{compdistsmall} that a minimum multicut for $(G_0',H')$ can be computed in $f(|V(H')|)n^{O(\sqrt{\mu})}$.
\claimqedhere\end{proof}

We are now ready to conclude the main proof. We first compute a minimum multicut $S_0$ for $(G_0',H')$.  Now for every $S_\pi \subseteq E_\pi$, we check whether $S_0\cup S_\pi$ is a multicut for $(G,H)$ and we output the minimum multicut found during this procedure. As $\mathcal{P}$ is maximum valid, there is a minimum multicut for $(G,H)$ that respects $\mathcal{P}$. It hence follows from Claim \ref{ismmc} that our algorithm outputs a minimum multicut for $(G,H)$.

For the running time of the algorithm, first observe that $(G_0',H')$ can clearly be computed from $(G,H)$ and $\mathcal{P}$ in polynomial time. Next observe that by construction, we have that $|V(H')|\leq \tau(\mathcal{P})$. It hence follows by Claim \ref{compute} that we can compute a minimum multicut $S_0$ for $(G_0',H')$ in $f(\tau(\mathcal{P}))n^{O(\sqrt{\mu})}$.  Moreover, as there are only $2^\pi$ choices for $S_\pi$ and we can check in polynomial time whether $S_0 \cup S_\pi$ is a multicut for $(G,H)$, we have that the total running time of our algorithm is $f(\tau(\mathcal{P}),\pi)n^{O(\sqrt{\mu})}$.
\end{proof}

\subsection{Reducing to smaller instances}\label{reduce}
In this section, we combine the previously established subroutines to reduce a given $\kplanar$ instance of $\mc$. 
More concretely, the rest of Section \ref{reduce} is dedicated to proving Lemma \ref{creins}. The lemmas stated earlier make it more or less straightforward how this can be done, but we describe the process in detail for completeness.

In Lemma \ref{initialize}, we show how an initial collection of states one of which is maximum valid can efficiently be made available. Next, in Lemma \ref{crzwei}, we show how to prove the result once a maximum valid state is available. Afterwards, Lemma \ref{creins} follows easily.
\begin{lem}\label{initialize}
Let $(G,H)$ be a connected $\kplanar$ instance of $\mc$. Then, in $f(\pi,t)n^{O(1)}$, we can compute a collection $\mathbb{P}$ of $f(\pi,t)$ states such that at least one of them is maximum valid and $\tau(\mathcal{P})\leq 2 \pi+t$ holds for all $\mathcal{P}\in \mathbb{P}$.
\end{lem}
\begin{proof}

We let $\mathbb{P}$ be the set of partitions of $W \cup T$. Clearly, $\mathbb{P}$ can be computed in $f(\pi,t)n^{O(1)}$ and $\tau(\mathcal{P})\leq 2 \pi+t$ holds for all $\mathcal{P}\in \mathbb{P}$. By definition, every $\mathcal{P}\in \mathbb{P}$ is a state. It hence suffices to prove that one of them is maximum valid. Let $S$ be a minimum multicut for $(G,H)$ that maximizes $|\mathcal{K}(G_0\setminus S_0)|$ and let $\mathcal{P}$ be the partition of $W \cup T$ which is defined by $\mathcal{P}=\{Y \cap (W \cup T): Y \in \mathcal{K}(G_0\setminus S_0)\}$. Observe that by Proposition \ref{nontriv}, we have that $Y \cap (W \cup T)\neq \emptyset$ for $Y \in \mathcal{K}(G_0\setminus S_0)$ and hence $\mathcal{P}\in \mathbb{P}$. Then, clearly $\mathcal{K}(G_0\setminus S_0)$ extends $\mathcal{P}$ and hence $S$ respects $\mathcal{P}$. It follows that $\mathcal{P}$ is valid. Further, we have $|\mathcal{P}|=|\mathcal{K}(G_0\setminus S_0)|$, so $\mathcal{P}$ is maximum.
\end{proof}
The most difficult part of the recursive procedure for computing our minimum multicut is encoded in the following algorithm. It shows how the desired output can be obtained once a maximum valid state is available. The algorithm requires the construction of a search tree, which is slightly technical. However, the idea of the algorithm is conceptually not difficult now that all necessary subroutines are available.

\begin{lem}\label{crzwei}
Let $(G,H)$ be a connected $\kplanar$ instance of $\mc$ and let a maximum valid state $\mathcal{P}$ with $\tau(\mathcal{P})\leq 2 \pi+t$ be given. Then, in $f(\pi,t)n^{O(\sqrt{\mu})}$, we can run an algorithm that returns a set $S$  and a collection of extended subinstances $(S_i,(G_i,H_i))_{i \in [k]}$ of $(G,H)$  for some $k=f(\pi,t)$ such that for all $i \in [k]$, we have that $\pi(G_i)+|V(H_i)|<\pi(G)+|V(H)|$ holds and either $S$ is a minimum multicut for $(G,H)$ or there exists some $i \in [k]$ such that $(S_i,(G_i,H_i))$ is optimumbound.
\end{lem}
\begin{proof}
We create a rooted search tree each of whose nodes is associated to a state of $(G,H)$. First, we initialize the tree by its root which is associated to $\mathcal{P}$. 

Now consider a leaf of the current tree that has not yet been considered after its creation and is associated to an incomplete state $\mathcal{P}^{0}$.

 First suppose that $\mathcal{P}^{0}$ is not relevant. In that case, by Lemma \ref{makerelevant}, in polynomial time, we can compute a relevant state $\mathcal{P}^1$ with $|\mathcal{P}^1|=|\mathcal{P}^{0}|, \tau(\mathcal{P}^1)\leq \tau(\mathcal{P}^{0}), \kappa(\mathcal{P}^1)\geq \kappa(\mathcal{P}^{0})$ and such that $\mathcal{P}^1$ is maximum valid if $\mathcal{P}^{0}$ is. We create a single child of the leaf in consideration and associate it to $\mathcal{P}^1$. 

Now suppose that $\mathcal{P}^{0}$ is relevant. By Lemma \ref{dealincomplete}, there exists an algorithm that runs in polynomial time which we can apply to $\mathcal{P}^{0}$ and that returns a collection of $k$ states $(\mathcal{P}^{i})_{i \in [k]}$ for some $k \leq 2\pi+t$ such that $\kappa(\mathcal{P}^{i})>\kappa(\mathcal{P}^{0})$ and $\tau(\mathcal{P}^{i})\leq \tau(\mathcal{P}^{0})+1$ holds for $i \in [k]$ and such that, if $\mathcal{P}^{0}$ is maximum valid, then  $\mathcal{P}^{i}$ is maximum valid for some $i \in [k]$. If the algorithm returns a collection of $k$ states $(\mathcal{P}^{i})_{i \in [k]}$ for some $k \leq 2\pi+t$ such that $\kappa(\mathcal{P}^{i})>\kappa(\mathcal{P}^{0})$ and $\tau(\mathcal{P}^{i})\leq \tau(\mathcal{P}^{0})+1$ for $i \in [k]$, then we add $k$ nodes to the rooted tree which are children of the node in consideration and associate each of them to the state $\mathcal{P}^{i}$ for some $i \in [k]$ in a  way that $\mathcal{P}^{i}$ is associated to one of these nodes for $i \in [k]$.

We do this as long as there exists a leaf of the current tree which has not been considered after its creation and is associated to an incomplete state. Let $U$ be the final tree at the end of this procedure. Observe that $\kappa(\mathcal{P})\geq 0$ by definition and, for every leaf of $U$ that is associated to a state $\mathcal{P}'$, we have  $\kappa(\mathcal{P}')=f(\pi,t)$ as $\mathcal{P}'$ is a state and by the definition of $\kappa$. Further, for every node that is associated to a state $\mathcal{P}^{0}$ and every child of this node that is associated to a state $\mathcal{P}^1$, we have $\kappa(\mathcal{P}^1)\geq \kappa(\mathcal{P}^{0})$ by assumption. Moreover, if $\mathcal{P}^{0}$ is relevant, than the inequality is strict. As all children of nodes associated to an irrelevant state are relevant, it follows that the depth of $U$ is $f(\pi,t)$. In particular, $U$ is well-defined. Further observe that for every node that is associated to a state $\mathcal{P}^{0}$ and every child of this node that is associated to a state $\mathcal{P}^1$, we have $\tau(\mathcal{P}^1)\leq\tau(\mathcal{P}^{0})+1$ by assumption. As the depth of $U$ is $f(\pi,t)$ and $\tau(\mathcal{P})\leq 2\pi+t$, we obtain that $\tau(\mathcal{P}^{0})=f(\pi,t)$ for every state $\mathcal{P}^{0}$ that is associated to a leaf of $U$. Further observe that every leaf of $U$ either corresponds to a state that is not maximum valid or to a complete state. Next observe that, as $\mathcal{P}$ is maximum valid and every node of $U$ that is associated to a maximum valid state has a child that is associated to a maximum valid state, we obtain that there exists a leaf of $U$ that is associated to a maximum valid complete state. 

For the next part of the algorithm, we maintain a multicut $S$ for $(G,H)$, starting with $E(G)$. Now consider a leaf of $U$ which is associated to a complete state  $\mathcal{P}^{0}$. Recall that $\tau(\mathcal{P}^{0})=f(\pi,t)$. By Lemma \ref{compphiklein}, if $\mathcal{P}^{0}$ is maximum valid and  $\mathcal{P}^{0}_{\textup{thin}}=\emptyset$, there exists an algorithm we can apply to $(G,H)$ that runs in $f(\pi,t)n^{O(\sqrt{\mu})}$ and that returns a minimum multicut to $(G,H)$. We apply this algorithm to $(G,H)$  and if it returns a multicut $S'$ for $(G,H)$ with $|S'|<|S|$, then we replace $S$ by $S'$. We do this for all leaves of $U$ which are associated to a complete state. The assignment of $S$ after the last of the iterations is the first part of our output.

For the last part of our algorithm, again consider a leaf of $U$ which is associated to a complete state  $\mathcal{P}^{0}$. It follows from Lemma \ref{phigross} that there exists an algorithm we can apply to $(G,H)$ that runs in polynomial time and that, in case that $\mathcal{P}^{0}$ is valid and contains a thin partition class, returns an optimumbound extended subinstance of $(G,H)$. If the algorithm returns an extended subinstance of $(G,H)$, we add it to the collection of extended subinstances we return. We do this for all leaves of $U$ which are associated to a complete state.

For the correctness of the algorithm, recall that there exists a leaf of $U$ that is associated to a maximum valid state $\mathcal{P}^{0}$. If $\mathcal{P}^{0}_{\textup{thin}}=\emptyset$, then it follows from the assumption on the algorithm and by construction that the returned set $S$ is a minimum multicut for $(G,H)$. If $\mathcal{P}^{0}_{\textup{thin}}\neq\emptyset$, then it follows from the assumption on the algorithm of Lemma \ref{phigross} and by construction that the algorithm returns a collection of extended subinstances of $(G,H)$ containing an optimumbound one. Let $k$ be the number of leaves of $U$. Observe that the number of extended subinstances of $(G,H)$ that are returned is clearly bounded by $k$. Further observe that, as every node of $U$ has $f(\pi,t)$ children and the depth of $U$ is $f(\pi,t)$, we have that $k=f(\pi,t)$. Hence the size of the output is appropriate.

For the running time, observe that, as every node of $U$ has $f(\pi,t)$ children and the depth of $U$ is $f(\pi,t)$, we obtain that the number of nodes of $U$ is $f(\pi,t)$. It hence suffices to prove that the running time of the operations associated to every single node of $U$ is bounded by $f(\pi,t)n^{O(\sqrt{\mu})}$. For every node that is associated to an incomplete state, observe that by assumption, the algorithm for potentially computing the states associated to its children runs in polynomial time. Further, for all possible outputs, we can clearly check in polynomial time if it is a collection of states of the appropriate size. Hence the computation time associated to this node is polynomial. Now consider a node associated to a complete state. By assumption, the algorithm used for computing a potential multicut for $(G,H)$ associated to this node runs in $f(\pi,t)n^{O(\sqrt{\mu})}$. Further, we can check whether the output is a multicut in polynomial time by  Proposition \ref{checkmc}. For the second part, the algorithm potentially computing an extended subinstance of $(G,H)$ runs in polynomial time by assumption. Finally, we can clearly check whether the output is an instance with the described properties in polynomial time. Hence the computation time associate to this node is also polynomial. This finishes the proof.
\end{proof}
We are now ready to prove Lemma \ref{creins}, which we restate here for convenience.
\creins*
\begin{proof}
We first use Lemma \ref{initialize} to create a collection of states $(\mathcal{P}^{i})_{i \in[k_0]}$ for some $k_0=f(\pi,t)$ such that one of them is maximum valid and $\tau(\mathcal{P}^{i})\leq 2 \pi+t$ holds for all $i \in [k_0]$. Now, throughout the algorithm, we maintain a multicut $S$ for $(G,H)$ and a collection $\mathcal{I}$ of extended subinstances $(S',(G',H'))$ of $(G,H)$ satisfying $\pi(G')+|V(H')|<\pi(G)+|V(H)|$. We initialize $S$ by $E(G)$ and $\mathcal{I}$ by $\emptyset$. It follows from Lemma \ref{crzwei} that there exists an algorithm running in $f(\pi,t)n^{O(\sqrt{\mu})}$ that we can apply to $(G,H)$ and a state $\mathcal{P}$ and that, in case that $\mathcal{P}$ is maximum valid, returns a subset of $E(G)$ and a collection of extended subinstances of $(G,H)$ such that either the edge set is a minimum multicut for $(G,H)$ or one of the extended subinstances is optimumbound. We now apply this algorithm to $(G,H)$ and $\mathcal{P}^{i}$ for all $i \in [k_0]$. Whenever the output contains an edge set $S'$ with $|S'|<|S|$, we replace $S$ by $S'$ and whenever the output contains a collection of extended subinstances of $(G,H)$, we add all these extended subinstances to $\mathcal{I}$. We make the algorithm return the assignments of $S$ and $\mathcal{I}$ after the last one of these iterations.

For the correctness of the algorithm, consider some $i \in [k_0]$ such that $\mathcal{P}^{i}$ is maximum valid. We obtain that when we apply the algorithm of Lemma \ref{crzwei} to $(G,H)$ and $\mathcal{P}^{i}$, we obtain an output that contains either a minimum multicut $S'$ for $(G,H)$ or an optimumbound extended subinstance of $(G,H)$. In the first case, we obtain that $S$ is assigned a minimum multicut for $(G,H)$ after this iteration and in the second case, we obtain that an obtimumbound extended subinstance of $(G,H)$ is added to $\mathcal{I}$ in this iteration. Finally observe, as $k_0=f(\pi,t)$ and for every $i \in [k_0]$, we add $f(\pi,t)$ extended subinstances of $(G,H)$ to $\mathcal{I}$, we have $|\mathcal{I}|=f(\pi,t)$ after the last one of these iterations.

For the running time of the algorithm, first observe that the algorithm of Lemma \ref{initialize} runs in $f(\pi,t)n^{O(1)}$ by assumption. Next observe that for every $i \in [k_0]$, the algorithm of Lemma \ref{crzwei} runs in $f(\pi,t)n^{O(\sqrt{\mu})}$. Further, we can clearly check in polynomial time if the output is of the desired form. As $k_0=f(\pi,t)$, we obtain that the entire algorithm runs in $f(\pi,t)n^{O(\sqrt{\mu})}$.
\end{proof}

\subsection{The final algorithm}\label{finalg}
We are now ready to prove Theorem \ref{extraedges}, which we restate here for convenience. While this involves the careful creation of a search tree, again, the proof conecept is rather straight forward relying on \cref{creins}.

\extraedges*

\begin{proof}
In a first phase of the algorithm, we create a rooted search tree such that each node $q$ of the search tree is associated to an extended subinstance $(S_q^1,(G_q,H_q))$ of $(G,H)$ and a multicut $S_q^2$ of $(G_q,H_q)$. We start by creating the root $r$ of this search tree and defining $S_r^1=\emptyset$ and $(G_r,H_r)=(G,H)$. In the following, while there exists a node $q$ that has not been considered after its creation and is associated to an instance $(S_q^1,(G_q,H_q))$ with $H_q$ not being edgeless, we compute $S_q^2$ and create its children together with the extended subinstance of $(G,H)$ associated to them. 

Now consider a node $q$ that has not been considered after its creation. First suppose that $G_q$ is disconnected and let $G_q^1,\ldots,G^k_q$ be the connected components of $G_q$ that satisfy $V(G^{i}_q)\cap (W \cup T)\neq \emptyset$. Observe that $k \leq |W \cup T|=2\pi+t$. We now add a collection of $k$ children $q_1,\ldots,q_k$ of $q$ to the search tree. For $i \in [k]$, we set $S_{q_i}^1=\emptyset$ and $(G_{q_i},H_{q_i})=(G_q^{i},H[V(G_q^{i})])$. We further set $S_q^2=E(G_q)$.

Now suppose that $G_q$ is connected. We first use the algorithm of Lemma \ref{findmod} to compute a set $E'_q\subseteq E(G_q)$ with $|E'_q|=\pi(G_q)$ such that $G_q \setminus E'_q$ is planar. From now on, we may assume that $(G_q,H_q)$ is a $\kplanar$ instance.  By Lemma \ref{creins} and as $(G_q,H_q)$ is a subinstance of $(G,H)$, in  $f(\pi,t)n^{O(\sqrt{\mu})}$, we can run an algorithm that returns a set $S'$ and a collection of extended subinstances $(S_i,(G_i,H_i))_{i \in [k]}$ of $(G,H)$ for some $k=f(\pi,t)$ such that for all $i \in [k]$, we have that $\pi(G_i)+|V(H_i)|<\pi(G_q)+|V(H_q)|$ holds and either $S'$ is a minimum multicut for $(G_q,H_q)$ or there exists some $i \in [k]$ such that $(S_i,(G_i,H_i))$ is an optimumbound extended subinstance of $(G_q,H_q)$. We first set $S_q^2=S'$. Then, for $i \in [k]$, we create a child $q_i$ of $q$ and set $(S_{q_i},(G_{q_i},H_{q_i}))=(S_i,(G_i,H_i))$. 

We do this as long as there exists a node $q$ that has not been considered after its creation and is associated to an extended subinstance $(S_q^1,(G_q,H_q))$ of $(G,H)$ with $H_q$ not being edgeless. Finally, for all nodes $q$ which are associated to an extended subinstance $(S_q^1,(G_q,H_q))$ of $(G,H)$ with $H_q$ being edgeless, we set $S_q^2=\emptyset$. This finishes the description of the search tree, whose assignment in the end of the procedure we denote by $U$. This finishes the description of the first phase of the algorithm.

Observe that for every node $q$ of $U$ and every child $q'$ of $q$ in $U$, we have $\pi(G_{q'})+|V(H_{q'})|\leq \pi(G_{q})+|V(H_{q})|$. Moreover, for every grandchild $q'$ of $q$, we have that $\pi(G_{q'})+|V(H_{q'})|< \pi(G_{q})+|V(H_{q})|$ is satisfied. It follows that the depth of $U$ is bounded by $2(\pi+t)$. Further, by the assumption on the algorithm of Lemma \ref{creins} and construction, we have that every node of $U$ has $f(\pi,t)$ children. It follows that $U$ has $f(\pi,t)$ nodes in total. Now, for every node of $U$, we have that the running time associated to this node is $f(\pi,t)n^{O(\sqrt{\mu})}$ by the assumption on the algorithms of Lemmas \ref{creins} and \ref{findmod}, and as the connected components of a graph can clearly be computed in polynomial time. It follows that the first phase of the algorithm runs in $f(\pi,t)n^{O(\sqrt{\mu})}$.

In the second phase of the algorithm, we compute a minimum multicut $S^*_q$ for $(G_q,H_q)$ for every node $q$ of $U$ by a bottom-up approach. First, for every leaf $q$ of $U$ which is associated to an extended subinstance $(S_q^1,(G_q,H_q))$ of $(G,H)$ with $H_q$ being edgeless, we set $S^*_q=\emptyset$. Clearly, we have that $\emptyset$ is a minimum multicut for $(G_q,H_q)$, so this assignment is correct. Now consider a node $q$ which is associated to an extended subinstance $(S_q^1,(G_q,H_q))$ of $(G,H)$ with $H_q$ not being edgeless and let $q_1,\ldots,q_k$ be the children of $q$.  Recursively, we may suppose that $S_{q_i}^*$ has already been computed for $i \in[k]$. 

First suppose that $G_q$ is disconnected. In that case, let $S_q^*=\bigcup_{i \in [k]}S_{q_i}^*$, where $q_1,\ldots,q_k$ are the children of $q$. It follows directly from the construction of $U$ that $S_q^*$ is indeed a minimum multicut for $(G_q,H_q)$.

Now suppose that $G_q$ is connected. We now let $\mathcal{S}_q$ contain $S_q^2$ and $S_{q_i}^*\cup S_{q_i}^1$ for $i \in [k]$. We now let $S_q^*$ be the smallest multicut for $(G_q,H_q)$ that is contained in $\mathcal{S}_q$. It follows from the assumption on the algorithm of Lemma \ref{creins} that $S_q^*$ is indeed a minimum multicut for $(G_q,H_q)$. We now recursively compute $(G_q,H_q)$ for all nodes $q$ of $U$. In the end, we have in particular computed $S_r^*$, which is a minimum multicut for $(G,H)$. We output $S_r^*$.

It is not difficult to see that for every node $q$ of $U$, we can compute $S_q^*$ in $f(\pi,t)n^{O(1)}$ once $S_q^2$ and $S_{q'}^*$ for all children $q'$ of $q$ are computed. As the number of nodes of $U$ is $f(\pi,t)$, we obtain that the second phase of the algorithm runs in $f(\pi,t)n^{O(1)}$. Hence the overall running time of the algorithm is $f(\pi,t)n^{O(\sqrt{\mu})}$.
\end{proof}

\section{Crossing number}

In this section, a {\it drawing} of a graph $G$ consists of a mapping of the vertices of $G$ to points in the plane and a mapping of the edges of $G$ to curves in the plane connecting their endpoints such that all vertices and edges of $G$ are in general position except the required adjacencies and such that in any point not corresponding to a vertex, at most two edges intersect. An {\it edge crossing} refers to a point that does not correspond to a vertex and in which two edges intersect. Given an instance $(G,H)$ of \mc\, we use $cr$ for the {\it crossing number} of a graph $G$, which is defined to be the smallest integer $k$ such that there exists a drawing of $G$ with exactly $k$ edge crossings. The following is the main result of this section.
\maincr*
 In order to make our algorithm work, rather than just the crossing number of a graph, we also need to have an optimal drawing available. This can be achieved through the following result due to Grohe \cite{GROHE2004285}.

\begin{prop}\label{grohecr}
Given a graph $G$, there exists an algorithm that computes a drawing of $G$ in the plane minimizing the number of crossings and that runs in time $f(cr)n^{O(1)}$.
\end{prop}

Formally, an instance $(G,H)$ of \mc\ is called \emph{crossing-planar} if $G$ is given as a drawing of a graph. Given a crossing-planar instance, we denote by $E_{cr}$ the set of edges in $E(G)$ crossing at least one other edge of $E(G)$ and  we set $\overline{cr}=|E_{cr}|$. Observe that $\overline{cr}$ can be arbitrarily large in comparison to $cr$ as $\overline{cr}$ refers to the given drawing and $cr$ to an optimal drawing of the abstract graph $G$. However, when considering an optimal drawing of $G$, we have $\overline{cr}\leq2cr$ as every crossing involves 2 edges.

For our algorithm, it will be convenient to assume that the instances in consideration satisfy a technical extra condition.  Namely, a crossing-planar instance $(G,H)$ of \mc\ is called \emph{normalized} if every $e \in E_{cr}$ is of infinite weight.
Observe that for any normalized crossing-planar instance $(G,H)$ of \mc, any finite multicut $S$ of $(G,H)$ and any $e \in E_{cr}$, we have that $V(e)$ is contained in one component of $G\setminus S$.
Our main technical contribution is proving the restriction of Theorem \ref{maincr} to normalized instances.

\begin{thm}\label{maincrnorm}
There exists an algorithm that solves every normalized crossing-planar instance of \mc\ and runs in $f(\overline{cr},t)n^{O(\sqrt{\mu})}$. 
\end{thm}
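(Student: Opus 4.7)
The plan is to reduce the given normalized crossing-planar instance to an extended planar instance by planarizing at the crossings, and then invoke \cref{algogenext} with a treewidth bound obtained by rerunning the state-based machinery of \cref{edgedel}.

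First, I would planarize: at each crossing point of the given embedding, insert a new degree-$4$ vertex subdividing both crossing edges. Let $G^*$ be the resulting planar graph and $W$ the set of new vertices, so $|W| \le \overline{cr}$. Since $(G,H)$ is normalized, every edge incident to $W$ has infinite weight, so any finite multicut of $(G^*,H)$ avoids these edges and corresponds in a weight-preserving way to a finite multicut of $(G,H)$. It therefore suffices to solve \mc\ on $(G^*,H)$. I would then take $F^*$ to be the set of faces of $G^*$ incident to $W$, giving $|F^*|\le 4\overline{cr}$; crucially, no finite-weight multicut dual for $(G^*,H)$ crosses an edge incident to $W$, so the parts of such a dual located in distinct faces around a crossing vertex can only communicate through $F^*$.

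The remaining task is to show that every subcubic inclusionwise minimal minimum-weight multicut dual $C$ for $(G^*,H)$ satisfies $\tw(C-F^*)=O(\sqrt{\mu})$; \cref{algogenext} then yields the desired running time $f(\overline{cr},t)\, n^{O(\sqrt{\mu})}$. For this I would mirror the framework of \cref{edgedel}, with $W$ playing the role of $V(E_\pl)$ and the infinite-weight edges incident to $W$ playing the role of the forbidden edges $E_\pl$. The key structural input is the crossing-planar analog of \cref{hauptred}, which should place every fat class of a valid state within distance $1$ of the $\mu$-sized set $X$ of an extended biclique decomposition of $H$. Combined with the analogs of \cref{makerelevant} and \cref{dealincomplete}, this reduces the problem in $f(\overline{cr},t)\, n^{O(1)}$ branches to either a solved instance, a smaller subinstance (via the analog of \cref{phigross}), or a complete maximum-valid state with no thin class; in the last case the analog of \cref{compphiklein} together with \cref{domplanar} produces the required $O(\sqrt{\mu})$ treewidth bound on $C-F^*$.

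The main obstacle I anticipate is re-establishing \cref{hauptred} and the relevance/reconfiguration arguments in the crossing-planar setting. In \cref{edgedel} the forbidden edges $E_\pl$ form an arbitrary $\pl$-edge set, whereas here they are structurally pinned to the degree-$4$ crossing vertices. Care is needed in the underlying $\lambda_{G_0}$-flow arguments to route augmenting paths around the infinite-weight crossing gadgets without losing the bound on $\alpha_{G_0,\mathcal{P}}$, and to verify that when faces of $F^*$ are excised from $C$, the components of $C$ on opposite sides of a crossing vertex are not inadvertently reconnected through other portions of the dual.
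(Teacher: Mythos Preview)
Your planarization step has a genuine gap. When you subdivide both crossing edges at a common new vertex $w$, you create connectivity in $G^*$ that does not exist in $G$: in $G$ the two crossing edges $u_1v_1$ and $u_2v_2$ share no vertex, so a finite multicut $S$ for $(G,H)$ may well leave $\{u_1,v_1\}$ and $\{u_2,v_2\}$ in different components of $G\setminus S$. In $G^*\setminus S$, however, all four vertices are forced into the same component via the infinite-weight star at $w$. Hence a finite multicut for $(G,H)$ need not be a multicut for $(G^*,H)$, and the optimum of $(G^*,H)$ can be strictly larger than that of $(G,H)$. Your one-direction correspondence (multicuts of $G^*$ map to multicuts of $G$) is correct, but the reverse direction, which you need for optimality, fails.

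The paper avoids this by deleting $E_{cr}$ rather than subdividing: set $G'=G\setminus E_{cr}$, and then \emph{enumerate} all demand graphs $H'$ on the enlarged terminal set $V(H)\cup V(E_{cr})$ (there are only $f(\overline{cr},t)$ of them). \cref{gvugzgzu} shows that for one particular $H'$, determined by how a carefully chosen optimum $S^*$ partitions $V(H)\cup V(E_{cr})$, every minimum multicut of the planar instance $(G',H')$ is a minimum multicut of $(G,H)$, and every subcubic minimal minimum-weight multicut dual $C$ for $(G',H')$ satisfies $\tw(C-F^*)=O(\sqrt{\mu})$. The treewidth bound is obtained by a direct dominating-set argument (faces containing $X$-terminals $3$-dominate the relevant part of the dual), not by rerunning the state machinery. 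Then \cref{algogenext} is applied to each $(G',H',F^*)$ and the best output is returned.

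A secondary issue: even if the planarization were repaired, importing the state-based framework of \cref{edgedel} is problematic here because that framework is built for \emph{unweighted} instances (the potential $\alpha$, the threshold $\pi(t{+}1)+1$, and the exchange arguments in \cref{hauptred} and its supporting lemmas all count edges rather than weights), whereas \cref{maincrnorm} concerns weighted graphs.
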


Before the more involved proof of \cref{maincrnorm}, we first show that \cref{maincrnorm}. \begin{proof}[Proof (of \cref{maincr})]
By Proposition \ref{grohecr} and as the running time in Proposition \ref{grohecr} is dominated by the running time in Theorem \ref{maincr}, we may suppose that $(G,H)$ is given in the form of a crossing-planar instance with $\overline{cr}\leq 2 cr$.
Consider some $Z \subseteq E_{cr}$. Then we let $G_Z$ be obtained from $G$ by deleting $Z$ and assigning all edges in $E_{cr}\setminus Z$ infinite weight. Observe that $(G_Z,H)$ is a normalized crossing-planar instance for every $Z \subseteq E_{cr}$. Using Theorem \ref{maincrnorm}, we can hence compute a minimum weight multicut $S_Z$ for $(G_Z,H)$ and check whether $S_Z \cup Z$ is a multicut for $(G,H)$. We let $S$ be the minimum weight multicut for $(G,H)$ found during this procedure. In order to see that $S$ is a minimum weight multicut for $(G,H)$, let $S^*$ be a minimum weight multicut for $(G,H)$ and let $Z^*=S^*\cap E_{cr}$. Then $S^*\setminus Z^*$ is a multicut for $(G_{Z^*},H)$ and hence a minimum weight multicut for $(G,H)$ is found when considering $Z=Z^*$. Finally, we execute the algorithm of \cref{maincrnorm} only $2^{\overline{cr}}$ times and hence the desired running time follows.
\end{proof}
It remains to prove \cref{maincrnorm}.

Given a normalized crossing-planar instance $(G,H)$ of \mc , we denote by $\mathcal{H}$ the set of graphs $H'$ on $V(H)\cup V(E_{cr})$ and we use $G'$ for $G\setminus E_{cr}$. Further, we let $F^*$ denote the faces of $G'$ which contain a pair of crossing edges in $G$. Recall that $C-F^*$ is the graph obtained from $C$ by removing every vertex that is in a face of $F^*$ and removing every edge that intersects a face of $F^*$. 
The main technical difficulty for the proof of \cref{maincrnorm} is contained in the following result.
\begin{restatable}{lem}{gvugzgzu}\label{gvugzgzu}
Let $(G,H)$ be a normalized crossing-planar instance of \mc{}   admitting a multicut of finite weight. Then there exists \omitnow{some }$H' \in \mathcal{H}$ such that every minimum weight multicut for $(G',H')$ is a minimum weight multicut for $(G,H)$ and for every subcubic, inclusion-wise minimal minimum weight multicut dual $C$ for $(G',H')$, we have $\tw(C-F^*)=O(\sqrt{\mu})$. 
\end{restatable}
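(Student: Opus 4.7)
\emph{Construction of $H'$.} Since $(G,H)$ admits a finite multicut and every edge in $E_{cr}$ has infinite weight, any minimum weight multicut $S^*$ of $(G,H)$ avoids $E_{cr}$, so $S^*\subseteq E(G')$. Among all such $S^*$, I fix one that minimises the number of connected components of $G\setminus S^*$, and let $\mathcal{Q}=\{Q_1,\dots,Q_m\}$ be the partition of $V(G)$ into these components. Define $H'\in\mathcal{H}$ on vertex set $V(H)\cup V(E_{cr})$ by putting $uv\in E(H')$ exactly when $u$ and $v$ lie in distinct classes of $\mathcal{Q}$. Then $E(H)\subseteq E(H')$, because $S^*$ is a multicut for $H$, and $S^*$ itself is a multicut for $(G',H')$, since the components of $G'\setminus S^*$ refine those of $G\setminus S^*$. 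In particular $w(S^*)$ is an upper bound on the minimum multicut weight of $(G',H')$.

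\emph{Correctness of the reduction.} Let $S$ be any minimum weight multicut for $(G',H')$; by optimality, $w(S)\leq w(S^*)$. To conclude that $S$ is a minimum weight multicut for $(G,H)$, it suffices to show that $S$ is a multicut for $(G,H)$. I would prove the invariant that every connected component of $G\setminus S$ meets $V(H')$ in at most one class of $\mathcal{Q}$. This holds in $G'\setminus S$ by the defining demands of $H'$; and reintroducing the edges of $E_{cr}$ one at a time preserves it, because the two endpoints of any $e\in E_{cr}$ lie in $V(E_{cr})\subseteq V(H')$ and in the same class of $\mathcal{Q}$ (as $e\notin S^*$). Consequently, for every $uv\in E(H)\subseteq E(H')$ the endpoints lie in distinct classes of $\mathcal{Q}$ and are therefore separated in $G\setminus S$, which together with $w(S)\leq w(S^*)$ yields the first conclusion.

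\emph{Treewidth bound for $C-F^*$.} Fix an extended biclique decomposition $(B_1,B_2,I,X)$ of $H$ with $|X|\leq\mu$ and let $C$ be any subcubic, inclusion-wise minimal minimum weight multicut dual for $(G',H')$. By \cref{domplanar} it suffices to exhibit a $5$-dominating set of $C-F^*$ of size $O(\mu)$. The plan is to take one vertex of $C-F^*$ near each face of $C$ that contains a vertex of $X$, together with $O(1)$ vertices along the portion of $C-F^*$ that separates the $\mathcal{Q}$-classes meeting $B_1$ from those meeting $B_2$. Three ingredients power the argument: inclusion-minimality and subcubicity of $C$ force the $B_1$–$B_2$ portion of $C$ lying outside $F^*$ to be essentially a single simple cycle; isolated vertices of $I$ contribute no essential structure to $C$; and the choice of $S^*$ with the fewest components ensures that the $\mathcal{Q}$-classes meeting $V(H)$ decompose exactly along the extended biclique decomposition of $H$, so $O(\mu)$ of them suffice.

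\emph{Main obstacle.} The hard step is controlling the structure of $C$ coming from the $V(E_{cr})$-terminals, of which there can be up to $O(\overline{cr})$. The plan is to show that every piece of $C$ that exists only to separate a $V(E_{cr})$-terminal from terminals in a different $\mathcal{Q}$-class is confined to a face of $F^*$, and hence is removed when one passes to $C-F^*$. The intended geometric argument uses that every vertex of $V(E_{cr})$ lies on the boundary of a face in $F^*$, so inclusion-minimality should permit rerouting any separating curve of $C$ that targets such a terminal through the adjacent $F^*$-face without leaving it, while preserving subcubicity. Making this rerouting precise is the chief technical content; once it is in hand, the $5$-dominating set above works and \cref{domplanar} delivers the claimed $O(\sqrt{\mu})$ treewidth bound.
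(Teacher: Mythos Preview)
Your reduction to $(G',H')$ in the first two paragraphs is essentially correct and matches the paper's argument. The gap is in the treewidth bound, and it stems from your choice of $H'$: by making \emph{every} component of $G\setminus S^*$ a separate partition class, you force the dual $C$ to separate far more pairs than necessary, and this can drive $\tw(C-F^*)$ up to $\Theta(\sqrt{t})$ rather than $O(\sqrt{\mu})$.

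Concretely, take a planar $(G,H)$ with no crossings (so $F^*=\emptyset$, $G'=G$): let $G$ be the $\sqrt{t}\times\sqrt{t}$ grid, and let $H$ be the complete bipartite graph between the two colour classes of the checkerboard colouring, so $\mu=0$. Every edge of $G$ joins $B_1$ to $B_2$, so the unique minimum multicut $S^*$ is all of $E(G)$, and $G\setminus S^*$ has $t$ singleton components. Your $H'$ is then the complete graph on all $t$ terminals. A minimum weight multicut dual for $(G,H')$ must place every grid vertex in its own face; the dual grid (with each degree-$4$ vertex split into two degree-$3$ vertices joined by a short non-crossing edge) is subcubic, inclusion-minimal, and minimum weight, and contracting the split edges recovers the dual grid as a minor, so its treewidth is $\Theta(\sqrt{t})$. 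Thus $\tw(C-F^*)=\Theta(\sqrt{t})$ while $\sqrt{\mu}=0$.

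Your tie-breaking rule of minimising the number of components does not rescue this: in the grid example the minimum multicut is unique, so minimisation buys nothing. More generally, there is no reason the components containing only $B_1$-terminals (or only $B_2$-terminals) should merge into a single component; the claim that ``the $\mathcal{Q}$-classes meeting $V(H)$ decompose exactly along the extended biclique decomposition'' is simply false. Likewise, the assertion that the $B_1$--$B_2$ portion of $C$ is ``essentially a single simple cycle'' fails for your $H'$, since your $H'$ also demands separating $B_1$-components from one another.

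The paper avoids this by defining $H'$ with only $O(\mu)$ partition classes: all of $\overline{B_1}$ is one class, all of $\overline{B_2}$ is one class, and each component of $G\setminus S^*$ meeting $X$ contributes one further class. The tie-breaking rule on $S^*$ is also different: it minimises not the number of components but the number of terminals landing in $X$-components, which is what guarantees that every face of $C'$ containing an $\overline X$-terminal already contains an $X$-terminal (so there are at most $\mu$ such faces). With only two ``large'' classes $\overline{B_1},\overline{B_2}$, any degree-$3$ vertex of $C'$ must border a face containing an $\overline X$-terminal (three faces cannot all be two-coloured), and this yields the small dominating set. Your construction has no analogue of this step.
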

\begin{proof}
Let $(B_1,B_2,I,X)$ be an extended biclique decomposition of $H$ with $|X|=\mu$. Further, let $S^*$ be a minimum weight multicut for $(G,H)$ such that $S^*$ minimizes the number of elements of $V(H)\cup V(E_{cr})$ that are in a component of $G\setminus S^*$ that also contains a terminal of $X$. In the following, we use $\overline{X}$ for the set of elements of $V(H)\cup V(E_{cr})$  that are in a component of $G\setminus S^*$ that also contains a terminal of $X$,  we use ${\overline{B}_1}$ for the set of elements of $(V(H)\cup V(E_{cr}))\setminus \overline{X}$  that are in a component of $G\setminus S^*$ that also contains a terminal of $B_1$, and we use ${\overline{B}_2}$ for $(V(H)\cup V(E_{cr}))\setminus (\overline{X}\cup {\overline{B}_1})$. (Here we have a slight asymmetry between the definitions of ${\overline{B}_1}$ and ${\overline{B}_2}$: if a component of $G\setminus S^*$ contains only terminals from $I$, then they are classified to be in ${\overline{B}_1}$. However, this asymmetry will be inconsequential in the proof.) Observe that $({\overline{B}_1},{\overline{B}_2},\overline{X})$ is a partition of $V(H)\cup V(E_{cr})$. We now define $H'$ to be the complete multipartite graph such that ${\overline{B}_1}$ and ${\overline{B}_2}$ are partition classes and such that for every component of $G\setminus S^*$ that contains at least one terminal of $X$, all the elements of $\overline{X}$ contained in this component form a partition class.


 This finishes the description of $H'$. Observe that $H' \in \mathcal{H}$. Further observe that $H$ is a subgraph of $H'$. Indeed, consider some $t_1,t_2 \in V(H)$ with $t_1t_2 \in E(H)$. If $\{t_1,t_2\}\cap \overline{X}=\emptyset$, we have $\{t_1,t_2\}\cap X=\emptyset$ and may hence suppose by symmetry that $t_i \in B_i$ for $i \in [2]$. It follows by construction and $S^*$ being a multicut for $(G,H)$ that $t_i \in {\overline{B}_i}$ for $i \in [2]$ yielding $t_1t_2 \in E(H')$ by construction. On the other hand, if one of  $t_1$ and $t_2$ is contained in $\overline{X}$, then $t_1$ and $t_2$ are in distinct components of $G\setminus S^*$ as $S^*$ is a multicut for $(G,H)$ and so $t_1t_2 \in E(H')$ follows by construction.

In the following, we prove through two claims that every minimum weight multicut for $(G',H')$ is also a minimum weight multicut for $(G,H')$. We first give the following result which will later be reused in the proof of the treewidth bound.
\begin{restatable}{clm}{crossa}\label{fghuj}
Let $S'$ be a multicut for $(G',H')$. Then $S'$ is also a multicut for $(G,H')$. 
\end{restatable}
\begin{proof}
Let $t_1,t_2 \in V(H')$ such that there exists a path $P$ from $t_1$ to $t_2$ in $G\setminus S'$. Let $v_0,\ldots,v_k$ be the vertices of $V(P)\cap V(H')$ in the order they appear on $P$ when going from $t_1$ to $t_2$ with $v_0=t_1$ and $v_k=t_2$. As $H'$ is a complete multipartite graph, it suffices to prove  that $H'$ does not contain an edge linking $v_{i-1}$ and $v_i$ for any $i \in [k]$: this shows that there cannot be an edge between $v_0$ and $v_k$. 

Let $i \in [k]$.  If the subpath of $P$ from $v_{i-1}$ to $v_i$ contains an edge of $E_{cr}$, then (as $V(E_{cr})\subseteq V(H')$, we actually have $v_{i-1}v_i \in E_{cr}$. In this case, we obtain that $v_{i-1}v_i  \notin S^*$ as the weight of $v_{i-1}v_i$ is infinite and the weight of $S^*$ is finite. It follows that $v_{i-1}$ and $v_i$ are in the same connected component of $G\setminus S^*$. By the choice of $\overline{X}, {\overline{B}_1}$, and ${\overline{B}_2}$, we obtain that $v_{i-1}$ and $v_i$ are not linked by an edge of $E(H')$. Consider now the case when the subpath of $P$ from $v_{i-1}$ and $v_i$ does not use an edge of $E_{cr}$. Then $v_{i-1}$ and $v_i$ are in the same connected component of $G'\setminus S'$. As $S'$ is a multicut for $(G',H')$, we obtain that $t_1$ and $t_2$ are not linked by an edge of $E(H')$.

Hence $S'$ is indeed a multicut for $(G,H')$. 
\claimqedhere\end{proof}

In order to prove the optimality, we further need the following result.
\begin{restatable}{clm}{crossb}\label{drzftgzuhui}
$S^*$ is a multicut for $(G',H')$.
\end{restatable}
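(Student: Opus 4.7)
The plan is to first observe that the normalization hypothesis forces $S^* \subseteq E(G')$: every edge of $E_{cr}$ has infinite weight and $(G,H)$ admits a finite weight multicut, so the minimum weight multicut $S^*$ has finite weight and cannot meet $E_{cr}$. Hence $G' \setminus S^*$ is a spanning subgraph of $G \setminus S^*$, so any pair of vertices lying in distinct components of $G \setminus S^*$ also lies in distinct components of $G' \setminus S^*$. It therefore suffices to verify, for each edge $uv \in E(H')$, that $u$ and $v$ lie in distinct components of $G \setminus S^*$.

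Next, I would exploit that $H'$ is complete multipartite on the blocks $\overline{B_1}$, $\overline{B_2}$, and, for each component $C$ of $G \setminus S^*$ meeting $X$, the block $X_C := \overline{X} \cap V(C)$. This yields three types of demand edge $uv \in E(H')$. If $u \in X_{C_1}$ and $v \in X_{C_2}$ with $C_1 \neq C_2$, the endpoints already sit in distinct components of $G \setminus S^*$ by construction. If $u \in X_C$ and $v \in \overline{B_1} \cup \overline{B_2}$, then $u$'s component meets $X$ whereas $v$'s component does not (otherwise $v$ would lie in $\overline{X}$), so again the two components differ. Finally, if $u \in \overline{B_1}$ and $v \in \overline{B_2}$, then $u$'s component meets $B_1$ while $v$'s component meets neither $X$ nor $B_1$ (else $v$ would lie in $\overline{X}$ or $\overline{B_1}$); so the two components disagree on containing a $B_1$-terminal and hence are distinct.

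No step here poses a genuine obstacle: the statement is a direct consequence of the way $\overline{X}$, $\overline{B_1}$, $\overline{B_2}$ and the partition classes of $H'$ were set up, namely so as to refine the partition of $V(H) \cup V(E_{cr})$ induced by the components of $G \setminus S^*$. The substantive difficulty of \cref{gvugzgzu} lies elsewhere, chiefly in \cref{fghuj} (which handles the reverse transfer from $G'$ back to $G$, and does require the ``both endpoints in one block'' property of the $E_{cr}$-edges) and, above all, in the treewidth bound $\tw(C - F^*) = O(\sqrt{\mu})$ for subcubic, inclusionwise minimal minimum weight multicut duals $C$ of $(G',H')$, which is what eventually feeds the algorithm of \cref{algogenext}.
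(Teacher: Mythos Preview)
Your proof is correct and follows essentially the same approach as the paper's: reduce to showing that the endpoints of every $E(H')$-edge lie in distinct components of $G\setminus S^*$, and then do a short case split according to which blocks $\overline{B_1},\overline{B_2},\overline{X}$ the endpoints belong to. The paper merges your first two cases into a single ``$\{t_1,t_2\}\cap\overline{X}\neq\emptyset$'' case, and it does not bother to argue $S^*\cap E_{cr}=\emptyset$ explicitly (the weaker fact that $G'\setminus S^*$ is a subgraph of $G\setminus S^*$ already suffices), but these are cosmetic differences only.
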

\begin{proof}
Let $t_1,t_2 \in V(H')$ with $t_1t_2 \in E(H')$. If $\{t_1,t_2\}\cap \overline{X} \neq \emptyset$, we obtain that $t_1$ and $t_2$ are in distinct components of $G\setminus S^*$ by the definition of $H'$. As $G'\setminus S^*$ is a subgraph of $G\setminus S^*$, we obtain that $t_1$ and $t_2$ are in distinct components of $G'\setminus S^*$.

Now suppose that $\{t_1,t_2\}\subseteq {\overline{B}_1} \cup {\overline{B}_2}$. By the definition of $H'$ and by symmetry, we may suppose that $t_i \in {\overline{B}_i}$ for $i \in [2]$. 
It follows by definition of ${\overline{B}_1}$ and ${\overline{B}_2}$ that $t_1$ is in a component of $G\setminus S^*$ containing a terminal of $B_1$ and $t_2$ is in a component of $G\setminus S^*$ not containing a terminal of $B_1$. In particular, $t_1$ and $t_2$ are in distinct components of $G\setminus S^*$.    As $G'\setminus S^*$ is a subgraph of $G\setminus S^*$, we obtain that $t_1$ and $t_2$ are in distinct components of $G'\setminus S^*$.
\claimqedhere\end{proof}

We are now ready to show that any minimum weight multicut $S$ for $(G',H')$ is a minimum weight multicut for $(G,H)$. First  observe that $S$ is a multicut for $(G,H)$ by Claim \ref{fghuj} and as $H$ is a subgraph of $H'$. Further, by Claim \ref{drzftgzuhui} and the minimality of $S$, we obtain $w(S)\leq w(S^*)$ where $w$ is the weight function associated to $G$. By the minimality of $S^*$, the statement follows.

\medskip

In the following, let $C$ be a subcubic, inclusion-wise minimal, minimum weight multicut dual for $(G',H')$.  We will give the desired treewidth bound on $C$. Let $S'=E_{G'}(C)$ and observe that $S'$ is a minimum weight multicut for $(G',H')$ by \cref{dualcut}. It hence follows from the first part of the lemma that $S'$ is also a minimum weight multicut for $(G,H)$. In the following, we use $C'$ for $C-F^*$.

We first need the following result.
\begin{restatable}{clm}{crossc}\label{rdzthzujiop}
Let $f$ be a face of $C'$ that contains a terminal of $\overline{X}$. Then $f$ also contains a terminal of $X$.
\end{restatable}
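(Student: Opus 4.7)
The plan is to derive a contradiction with the choice of $S^*$ by exhibiting another minimum-weight multicut with strictly smaller $\overline{X}$. Let $S'=E_G(C)$; by \cref{dualcut}, $S'$ is a minimum-weight multicut for $(G',H')$, and by the first part of \cref{gvugzgzu} (already established earlier in this section) $S'$ is a minimum-weight multicut for $(G,H)$ with $w(S')=w(S^*)$. Moreover, by \cref{fghuj}, $S'$ is also a multicut for $(G,H')$.

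Suppose for contradiction that $f$ contains some $v\in\overline{X}$ but no terminal of $X$. The first step is to locate $v$ and the terminals of $X$ within the face structure of $C$. Since $C'\subseteq C$ as embedded graphs, every face of $C$ is contained in some face of $C'$, so $v$ lies in a face $f_C^v$ of $C$ with $f_C^v\subseteq f$. For any $y\in X$, the face of $C$ containing $y$ cannot be a subset of $f$ (otherwise $y$ would lie in $f$), so it differs from $f_C^v$. Any curve from $v$ to $y$ in the plane---in particular the drawing of any $v$-$y$ path in $G$---must therefore cross $C$ and hence use an edge of $E_G(C)=S'$, so $v$ is separated from every $y\in X$ in $G\setminus S'$. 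Consequently $v\notin\overline{X}(S')$, where $\overline{X}(S')$ denotes the analogue of $\overline{X}$ defined with respect to $S'$ in place of $S^*$.

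The key step will be to establish the reverse inclusion $\overline{X}(S')\subseteq \overline{X}$, exploiting the fact that the partition classes of $H'$ were built from the components of $G\setminus S^*$. Pick $u\in \overline{X}(S')$ and let $y\in X$ be in the same component of $G\setminus S'$ as $u$. In $H'$, the terminal $y$ lies in an $\overline{X}$-class, namely the class of $\overline{X}$-elements in $y$'s $G\setminus S^*$-component. If $u\notin\overline{X}$, then $u\in\overline{B_1}\cup\overline{B_2}$, and since $\overline{B_1}$ and $\overline{B_2}$ are themselves partition classes of $H'$ distinct from $y$'s class, the multipartiteness of $H'$ together with the fact that $S'$ is a multicut for $(G,H')$ would force $u$ and $y$ to be separated in $G\setminus S'$, contradicting $u\in\overline{X}(S')$. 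Hence $u\in\overline{X}$.

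Combining $v\in\overline{X}\setminus\overline{X}(S')$ with $\overline{X}(S')\subseteq \overline{X}$ gives $|\overline{X}(S')|<|\overline{X}|$, which contradicts the choice of $S^*$ as a minimum-weight multicut minimizing $|\overline{X}|$. The main conceptual step is the inclusion $\overline{X}(S')\subseteq \overline{X}$, which makes essential use of the construction of $H'$ from $S^*$; everything else amounts to routine planar-topological bookkeeping combined with the hypothesis that $f$ avoids all of $X$.
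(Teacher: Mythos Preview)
Your overall strategy—produce a minimum-weight multicut $S'$ for $(G,H)$ whose analogue $\overline{X}(S')$ is strictly smaller than $\overline{X}$, contradicting the extremal choice of $S^*$—is exactly the paper's approach, and your argument for the inclusion $\overline{X}(S')\subseteq\overline{X}$ (via \cref{fghuj} and the multipartite structure of $H'$) is correct and matches the paper.

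There is, however, a genuine gap in your argument that $v\notin\overline{X}(S')$. You pass from $C'$ down to $C$ and argue that, since $v$ and $y\in X$ lie in different faces of $C$, any $v$--$y$ path in $G$ crosses $C$ and therefore uses an edge of $S'$. The last implication can fail. The set $S'$ consists of the edges of $G'$ that $C$ crosses, but a $v$--$y$ path in $G$ may traverse an edge $e\in E_{cr}$, and nothing in the setup prevents the drawing of $e$ from crossing $C$; such a path could then change faces of $C$ without ever using an edge of $S'$. So working with faces of $C$ does not control connectivity in $G\setminus S'$.

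The paper closes this gap by staying with $C'$ throughout. The interior of every $e\in E_{cr}$ lies in a single face of $G'$ (it meets no edge of $G'$), and that face belongs to $F^*$; since $C'=C-F^*$ has no vertex in and no edge meeting any face of $F^*$, no edge of $E_{cr}$ can cross $C'$. Hence both endpoints of each $e\in E_{cr}$ lie in the same face of $C'$, so the vertices of $G$ inside the face $f$ of $C'$ form a union of connected components of $G\setminus S'$. In particular the component of $v$ stays inside $f$, which by hypothesis contains no terminal of $X$, giving $v\notin\overline{X}(S')$. Your detour through $C$ is unnecessary and is precisely where the argument breaks; arguing directly with $C'$ and invoking the defining property of $F^*$ repairs it.
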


\begin{proof}
Suppose otherwise. We will obtain a contradiction to the fact that $S^*$ is chosen so that the number of elements of $\overline{X}$ is minimized.  Let $\overline{X}'$ denote the set of terminals in $V(H')$ that are contained in a component of $G\setminus S'$ that also contains a terminal of $X$. As $f$ is a face of $C'$, for every every $e \in E_{cr}$, either both or none of its endvertices are contained in $f$ and hence there is a collection of components of $G\setminus S'$ whose union is the set of vertices of $V(G)$ contained in $f$.  By assumption, we obtain that $\overline{X}\setminus \overline{X}'\neq \emptyset$.

Next consider some $x_0 \in \overline{X}'$. Let $K$ be the component of $G\setminus S'$ containing $x_0$. By the definition of $\overline{X}'$, we obtain that there exists some $x_1 \in V(K)\cap X$. By Claim \ref{fghuj}, we obtain that $S'$ is a multicut for $(G,H')$ and hence $x_0$ and $x_1$ are not linked by an edge of $E(H')$. By the definition of $H'$, we obtain that $x_0$ and $x_1$ are in the same component of $G\setminus S^*$. In particular, we have $x_0 \in \overline{X}$. This yields $\overline{X}'\subseteq \overline{X}$.

We obtain $|\overline{X}'|<|\overline{X}|$. As $S'$ is a minimum weight multicut for $(G,H)$, we obtain a contradiction to the choice of $S^*$.
\claimqedhere\end{proof}

The next result is a crucial observation for the treewidth bound for $C$.
\begin{restatable}{clm}{crossd}\label{zftzuguhi}
 Let $v \in V(C')$ with $d_{C'}(v)=3$. Then $v$ is incident to a face of $C'$ that contains a terminal of $\overline{X}$.
 \end{restatable}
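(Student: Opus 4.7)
The plan is to analyze the three faces of $C$ incident to $v$, exploit the complete multipartite structure of $H'$ together with the inclusion-wise minimality of $C$ to force one of those faces to meet $\overline{X}$, and then transfer the conclusion to $C'$.

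I would first observe that since $C'$ is a subgraph of $C$ and $C$ is subcubic, the vertex $v$ has degree exactly $3$ in $C$ as well, with all three of its incident edges surviving into $C'$. I would also argue that $C$ has no bridges: the two sides of a bridge lie in a common face of the embedding, so removing it would not merge any pair of faces and $C-e$ would still be a multicut dual for $(G',H')$, contradicting inclusion-wise minimality. Consequently, the three edges of $C$ at $v$ border three pairwise distinct faces $f_1, f_2, f_3$ of $C$.

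I would then identify a partition class of $H'$ met by each $f_i$. Recall that $H'$ is complete multipartite with parts $\overline{B_1}, \overline{B_2}$ and the $\overline{X}$-subclasses (one per component of $G\setminus S^*$ meeting $X$). Since $C$ is a valid multicut dual, each face of $C$ contains terminals of at most one such class. For each of the three edges of $C$ incident to $v$, inclusion-wise minimality forces its deletion to violate the multicut dual property; because deleting this edge merges exactly its two adjacent faces, the merged face must contain terminals from two distinct classes. Combined with the at-most-one-class-per-face property, this shows that each $f_i$ contains at least one terminal and that the three associated classes are pairwise distinct. As $\overline{B_1}$ and $\overline{B_2}$ provide only two classes, at least one of $f_1, f_2, f_3$ must meet an $\overline{X}$-subclass, i.e., contain a terminal of $\overline{X}$.

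Finally, I would transfer this to $C'$. Since the three edges of $C$ at $v$ survive in $C'$ and $C'$ is a subgraph of $C$, the face of $C'$ incident to $v$ in the sector of the chosen $f_i$ contains the region $f_i$ (possibly together with other material from faces of $C$ merged by the deletions defining $C'$), and hence still contains the $\overline{X}$-terminal witnessing the claim. The main obstacle is the bridgelessness and careful minimality argument: once each of the three faces of $C$ at $v$ is shown to carry a terminal of a distinct partition class of $H'$, counting the available classes and lifting the conclusion from $C$ to $C'$ is essentially routine.
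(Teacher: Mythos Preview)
Your proof is correct and follows essentially the same approach as the paper: both arguments use that, locally at the degree-$3$ vertex $v$, the three faces of $C$ carry terminals from at most one partition class each, and then invoke inclusion-wise minimality to rule out having all three classes among $\{\overline{B_1},\overline{B_2}\}$. The paper phrases this as a pigeonhole contradiction (two faces share the same $\overline{B_j}$, so the separating edge is redundant), whereas you phrase it in the contrapositive direction (each pair of faces must carry distinct classes, so three distinct classes are needed). You are more explicit than the paper about bridgelessness and about lifting the conclusion from faces of $C$ to faces of $C'$, but these are routine refinements of the same argument rather than a different route.
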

 \begin{proof}
 As $C$ is subcubic and $C'$ is a subgraph of $C$, we have $d_C(v)=3$.
 Let $f_1,f_2$, and $f_3$ be the faces of $C$ incident to $v$. Observe that $f_1,f_2$, and $f_3$ are well-defined and pairwise distinct by the minimality of $C$. Now suppose for the sake of a contradiction that none of $f_1,f_2$, and $f_3$ contain a terminal of $\overline{X}$. Observe that, as $C$ is a multicut dual for $(G',H')$ for $i \in [3]$, there exists some $j \in [2]$ such that all the terminals of $V(H')$ contained in $f_i$ are contained in $\overline{B}_j$. By symmetry, we may suppose that there exists some $j \in [2]$ such that all the terminals contained in $f_1$ or $f_2$ are contained in $\overline{B}_j$. Let $e \in E(C)$ be the edge separating $f_1$ and $f_2$. We obtain that $C\setminus \{e\}$ is a multicut dual for $(G',H')$. This contradicts the minimality of $C$. 
 \claimqedhere\end{proof}
 We are now ready to conclude the desired treewidth bound on $C'$.
 Let $C''$ be a component of $C'$. 
 For every face $f$ of $C''$, we now introduce a vertex $z_f$ that shares an edge with every vertex that is incident to $f$. Let $C''_0$ be the resulting graph.
Observe that $C''_0$ is planar. Now let $A\subseteq V(C''_0)$ contain the vertex $z_f$ for all faces $f$ of $C''$ which contain a terminal of $X$. If $d_{C''}(v)\leq 2$ for all $v \in V(C'')$, we clearly have $\tw(C'')\leq 2$. Otherwise, it follows directly from Claims \ref{rdzthzujiop} and \ref{zftzuguhi} and the fact that the faces of $C''$ correspond to the faces of $C'$ that $A$ is a 3-dominating set of $C''_0$: if a face $f'$ of $C''$ is adjacent to a face $f$ of $C''$ containing a vertex from $X$, then $z_{f'}$ is at distance $2$ from $z_f$, and consequently every vertex incident to $f'$ is at distance at most $3$ from $z_f$. 
By Proposition \ref{domplanar} and $|A|\leq |X|\leq \mu$, we obtain that $\tw(C''_0) =O(\sqrt{\mu})$.  As $C''$ is a subgraph of $C''_0$, we obtain that $\tw(C'') =O(\sqrt{\mu})$. As $C''$ was chosen to be an arbitrary component of $C'$, this yields $\tw(C') =O(\sqrt{\mu})$. 
\end{proof}

With \cref{gvugzgzu} at hand, we are now ready to conclude \cref{maincrnorm}.

\begin{proof}[Proof (of \cref{maincrnorm})]
By Proposition \ref{grohecr} and as the running time in Proposition \ref{grohecr} is dominated by the running time in Theorem \ref{maincr}, we may suppose that $(G,H)$ is given in the form of a normalized crossing-planar instance with $\overline{cr}\leq 2 cr$.

By Lemma \ref{gvugzgzu}, there exists some $H'_0 \in \mathcal{H}$ such that every minimum weight multicut for $(G',H'_0)$ is a minimum weight multicut for $(G,H)$ and for every subcubic, inclusion-wise minimal minimum weight multicut dual $C$ for $(G',H'_0)$, we have $\tw(C-F^*)=O(\sqrt{\mu})$. 
Now consider some $H' \in \mathcal{H}$. By Lemma \ref{algogenext}, there exists an algorithm that runs in $f(|V(H')|)n^{O(\sqrt{\mu})}$ and returns a minimum multicut for $(G,H)$ if $H'=H'_0$. We apply this algorithm to $(G',H')$ for every $H' \in \mathcal{H}$ and output the minimum multicut for $(G,H)$ we find during this procedure.

The correctness of the algorithm follows directly from the fact that $H'_0\in \mathcal{H}$. For the running time of the algorithm, observe that $|\mathcal{H}|=f(\pi,t)$. As $|V(H')|=f(\pi,t)$ for all $H' \in \mathcal{H}$, it follows that the entire algorithm runs in $f(\pi,t)n^{O(\sqrt{\mu})}$.
\end{proof}

\label{crosssec}

\bibliographystyle{plainurl}
\bibliography{refs}




\end{document}